\newtheorem{theorem}{Theorem}[section]
\newtheorem{definition}[theorem]{Definition}
\newtheorem{rem}[theorem]{Remark}
\newtheorem{exa}[theorem]{Example}
\newtheorem{proposition}[theorem]{Proposition}
\newtheorem{lemma}[theorem]{Lemma}
\newenvironment{remark}{\begin{rem}\normalfont}{\end{rem}}
\newenvironment{example}{\begin{exa}\normalfont}{\end{exa}}
\def\Z {\mathbb{Z}}
\def\C {\mathbb{C}}
\def\im{\mathrm{i}}
\def\id{\mathrm{id}}
\renewcommand{\varepsilon}{\epsilon}
\newcommand{\alxy}[1]{\begin{aligned}\xymatrix{#1}\end{aligned}}
\newcommand{\alxydim}[2]{\begin{aligned}\xymatrix#1{#2}\end{aligned}}
\newcommand\erf[1]{(\ref{#1})}
\newcommand\prf[1]{page \pageref{#1}}
\renewcommand{\emph}[1]{\def\reserved@a{it}\ifx\f@shape\reserved@a\uline{#1}\else\textit{#1}\fi}
\newcommand{\trivlin}{\hbox{$1\hskip -1.2pt\vrule depth 0pt height 1.6ex width 0.7pt \vrule depth 0pt height 0.3pt width 0.12em$}}
\renewcommand{\to}{\!\xymatrix@C=0.5cm{\ar[r] &}}
\renewcommand{\mapsto}{\xymatrix@C=0.5cm{\ar@{|->}[r] &}}
\renewcommand{\Rightarrow}{\xymatrix@C=0.5cm{\ar@{=>}[r] &}}
\newcommand{\incl}{\xymatrix@C=0.5cm{\ar@{^(->}[r] &}}
\begin{document}

\title[Bundle Gerbes for Orientifold Sigma Models]
{Bundle Gerbes for\\Orientifold Sigma Models}

%\arxurl{}

\author[K. Gaw\k{e}dzki, R. R. Suszek, K. Waldorf]
{Krzysztof Gaw\k{e}dzki\,$^1$\\
Rafa\l ~R.~Suszek\,$^2$\\
Konrad Waldorf\,$^3$}

\address{$^1$ Laboratoire de Physique, ENS-Lyon\\46 All\'ee d'Italie, F-69364 Lyon, France \bigskip \\
$^2$ Department of Mathematics, King's College London\\
Strand, London WC2R 2LS, United Kingdom
\bigskip \\
$^3$ Department Mathematik, Universit\"at Hamburg\\Bundesstra\ss e 55, D-20146 Hamburg, Germany}

%\addressemail{}

\begin{abstract}
Bundle gerbes with connection and their modules play an important
role in the theory of two-dimensional sigma models with a background
Wess-Zumino flux: their holonomy determines the contribution of the
flux to the Feynman amplitudes of classical fields. We discuss
additional structures on bundle gerbes and gerbe modules needed in
similar constructions for orientifold sigma models describing closed
and open strings.
\end{abstract}

\maketitle

\newpage
\section{Introduction}

Bundle gerbes \cite{murray} are geometric structures related to
sheaves of line bundles, see \cite{schweigert2,murray3} for recent
historical essays. They appear naturally in the mathematical context
of lifting principal $G$-bundles to $\hat G$-bundles for central
extensions $\hat G$ of a Lie group $G$ by a circle. In the
physical context, they arise in  studies of quantum field theory
anomalies \cite{carey5} or, together with bundle gerbe modules, in a
construction of groups of string theory charges \cite{bouwknegt1}.
The present paper has been mainly motivated by the role that bundle
gerbes equipped with hermitian connections play in the theory of
two-dimensional sigma models with a Wess-Zumino (WZ) term
\cite{wess1,novikov1} in the action functional. Classically, the
fields of such a sigma model are maps $\phi$ from a two-dimensional
surface $\Sigma$, called the worldsheet, to the target manifold $M$
equipped with a metric and a closed 3-form $H$. The WZ term
describes the background $H$-flux. Locally, it is given by integrals
of the pullbacks $\phi^*B$ of local Kalb-Ramond 2-forms $B$ on $M$
such that $\mathrm{d}B=H$. The ambiguities in defining such a
functional $S_{\mathrm{WZ}}(\phi)$ globally in topologically
non-trivial situations were originally
studied with cohomology techniques in \cite{alvarez1} and
\cite{gawedzki3} for closed worldsheets, and in \cite{kapustin1} for
worldsheets with boundary. They may be sorted out systematically
using bundle gerbes with connection over the manifold $M$ and, in
the case with boundary, bundle gerbe modules, see
\cite{gawedzki1,carey2,gawedzki4}. In particular, a choice of a
bundle gerbe $\mathcal{G}$ with connection, whose curvature 3-form
is $H$, determines unambiguously the Feynman amplitudes
$\mathrm{e}^{\im S_{\mathrm{WZ}}(\phi)}$ on closed oriented worldsheets
$\Sigma$. In the beginning of Section \ref{sec1} of this article, we
recall the definition \cite{murray} of bundle gerbes with connection
and review their 1-morphisms and the 2-morphisms between
1-morphisms, all together forming a 2-category \cite{stevenson1,
waldorf1}.

The WZ term in the action functional plays an essential role in
Wess-Zumino-Witten (WZW) sigma models \cite{witten1}, assuring their
conformal symmetry on the quantum level and rendering them soluble.
The target space of a WZW sigma model is a compact Lie group $G$
equipped with a bundle gerbe whose curvature is a bi-invariant
closed 3-form $H$. Bundle gerbes and their modules are specially
useful in treating the case \cite{felder2} of WZW models with
non-simply connected target groups $G'$ that are quotients of their
covering groups $G$ by a finite subgroup $\Gamma_0$ of the center
$Z(G)$ of $G$ \cite{gawedzki2}. A gerbe $\mathcal{G}'$ over
$G'=G/\Gamma_0$ may be thought of as a gerbe $\mathcal{G}$ over $G$
equipped with a $\Gamma_0$-equivariant structure that picks up in a
consistent way isomorphisms between the pullback gerbes
$\gamma^*\mathcal{G}$ and $\mathcal{G}$ for each element $\gamma\in
\Gamma_0$. The notion of  equivariant structures on gerbes extends
to the case of gerbes over general manifolds $M$ on which a finite
group $\Gamma_0$ acts preserving the curvature 3-form $H$, possibly
with fixed points. A gerbe over $M$ with such an equivariant
structure may be thought of as a gerbe over the orbifold
$M/\Gamma_0$ and it may be used to define the WZ action functional
for sigma models with the orbifold target. Gerbes with equivariant
structures with respect to actions of continuous groups will be
discussed elsewhere, see also \cite{meinrenken1}. They find
application in gauged sigma models with WZ term.

Motivated by the theory of unoriented strings
\cite{schwarz1,pradisi5}, one would like to define the WZ action
functional for unoriented (in particular unorientable) worldsheets
$\Sigma$. More exactly, one considers  so-called orientifold sigma
models. Their classical fields $\hat\phi$ map the oriented double
$\hat\Sigma$, which is equipped with an orientation-changing
involution $\sigma$ such that $\Sigma=\hat\Sigma/\sigma$, to the
target $M$ equipped with an involution $k$ so that
$\hat\phi\circ\sigma=k\circ\hat\phi$. Assuming that $k^*H=-H$, one
may define the WZ action functional for such fields using a gerbe
$\mathcal{G}$ over $M$ with curvature $H$ additionally equipped with
a  Jandl structure \cite{schreiber1}.  Such a structure on
$\mathcal{G}$ may be considered as a twisted version of a
$\Z_2$-equivariant structure for the $\Z_2$ action on $M$ defined by
$k$. It picks up in a consistent way an isomorphism between
$k^*\mathcal{G}$ and the \emph{dual} gerbe $\mathcal{G}^*$.

One may consider more general orientifold sigma models with
the WZ term, corresponding to an action on $M$ of a finite
group $\Gamma$ with elements $\gamma$ such that
$\gamma^*H=\epsilon(\gamma)H$ for a homomorphism $\epsilon$ from
$\Gamma$ to $\{\pm1\}\equiv\Z_2$. The notions of
$\Gamma_0$-equivariant and Jandl structures on a gerbe $\mathcal{G}$
may be merged into the one of a $(\Gamma,\epsilon)$-equivariant
structure, which we shall also call a  twisted-equivariant
structure. Such a structure consistently picks up isomorphisms
between $\gamma^*\mathcal{G}$ and either $\mathcal{G}$ or
$\mathcal{G}^*$, according to the sign of $\epsilon(\gamma)$. The
twisted-equivariant structures on gerbes are introduced in
Section\,1 and are the main topic of the present article. A special
case of such structures occurs when the normal subgroup
$\Gamma_0={\rm ker}\,{\epsilon}$ of $\Gamma$ acts on $M$ without
fixed points. If $\epsilon\equiv1$ so that $\Gamma_0=\Gamma$, we are
back to the correspondence between gerbes over $M$ with
$\Gamma_0$-equivariant structure and gerbes over $M'=M/\Gamma_0$. If
$\epsilon$ is non-trivial, so that $\Gamma/\Gamma_0=\Z_2$, then the
action of $\Gamma$ on $M$ induces a $\Z_2$-action on $M'$, with the
non-trivial element of $\Z_2$ acting as an involution $k'$ inverting
the sign of the projected 3-form $H'$. In this situation, gerbes
over $M$ with curvature $H$ and $(\Gamma,\epsilon)$-equivariant
structure correspond to gerbes over $M'$ with curvature $H'$ and a
Jandl structure. This descent theory for bundle gerbes is discussed
in Section\,2.

The present article provides a geometric theory extending and
completing the discussion of our previous paper \cite{gawedzki6}
that was devoted to the study of gerbes with twisted-equivariant
structures over simple simply connected compact Lie groups. Such
gerbes are needed for applications to the orientifolds of  WZW
models. In \cite{gawedzki6}, we used a local description of
gerbes and cohomological tools. Section\,3 of the present paper
establishes the relation between the geometric and cohomological
languages.

For oriented worldsheets $\Sigma$ with boundary, the classical
fields $\phi:\Sigma\to M$ are often constrained to take values in
special submanifolds $D$ of $M$ on the boundary components of
$\Sigma$. Such submanifolds are called (D-)branes in string theory.
The extension of the definition of the WZ action to this case
requires a choice of a bundle gerbe $\mathcal{G}$ with curvature $H$
and of gerbe modules over the submanifolds $D$. Gerbe modules may be
viewed as vector bundles with connection twisted by the gerbe. In
the context of the 2-category of bundle gerbes, they can also be
viewed as particular 1-morphisms \cite{waldorf1}. In Section 4, we
adapt this notion to the case of gerbes with
$(\Gamma,\epsilon)$-equivariant structures needed for applications
to orientifold sigma models on worldsheets with boundary. We also
discuss a presentation of such gerbe modules in terms of local data
and develop their descent theory.

The Feynman amplitudes $\mathrm{e}^{\im S_{\mathrm{WZ}}(\phi)}$
of fields $\phi$ defined on closed worldsheets are given by
the holonomy of gerbes \cite{gawedzki1}. In the case of unoriented
worldsheets, the gerbes have to be equipped additionally with a
Jandl structure \cite{schreiber1}. For oriented worldsheets with
boundary, the holonomy giving the Feynman amplitudes receives also
contributions from the gerbe modules over the brane
worldvolumes that provide the boundary conditions of
the theory \cite{gawedzki1,carey2,gawedzki4}. In this article, we
introduce a generalization of both notions for unoriented
worldsheets with boundary using gerbe modules for  gerbes with Jandl
structure. This is discussed in both the geometric and the local
language in Section\,5.

In Conclusions, we summarize the contents of the present paper
and sketch the directions for further work that
includes extending the discussion \cite{gawedzki6} of the
orientifold WZW models on closed worldsheets to the ones on
worldsheets with boundary.

\textbf{Acknowledgements.} K.G. and R.R.S. acknowledge the support of Agence National
de Recherche under the contract
ANR-05-BLAN-0029-03. K.W. acknowledges the support of the
Collaborative Research Centre 676 ``Particles, Strings and the Early
Universe'' and thanks ENS-Lyon for kind hospitality.

\section{Twisted-Equivariant Bundle Gerbes}

\label{sec1}

We review bundle gerbes and their algebraic structure in Section
\ref{sec1_1} and define twisted-equivariant structures on them
in Section \ref{sec4}.
Twisted-equivariant structures include two extremal versions:
the untwisted one, which is just an ordinary equivariant structure,
and the twisted $\Z_2$-equivariant one, which coincides, as we discuss
in Section \ref{sec15}, with a Jandl structure.

\subsection{The  2-Category of Bundle
Gerbes}

\label{sec1_1}

In the whole article, we work with the following conventions:
\begin{itemize}
\item
\emph{Vector bundles} are hermitian vector bundles with unitary
connection, and isomorphisms of vector bundles respect the hermitian
structure and the connections. These conventions in particular apply
to line bundles.

\item
If $\pi:Y \to M$ is a surjective submersion between smooth manifolds,
we denote by
\begin{equation*}
Y^{[k]}\ :=\ Y \times_M
... \times_M Y
%\ =\ \{\.(y_1,\dots,y_k)\in Y\times\cdots Y\,:\,\pi(y_1)=
%\cdots=\pi(y_k)\,\}
\end{equation*}
the $k$-fold fibre product of $Y$ with itself (composed of the elements
in the Cartesian product whose components have the same projection to
$M$). The fibre
products are, again, smooth manifolds in such a way that the
canonical projections $\pi_{i_1...i_r}:Y^{[k]} \to Y^{[r]}$ are
smooth maps.
\end{itemize}

In the following, we collect the basic definitions.

\begin{definition}[\cite{murray}]
\label{def_gerbe}
A \emph{bundle gerbe} $\mathcal{G}$ over
a smooth manifold
$M$ is a surjective submersion $\pi: Y
\to M$, a line bundle $L$ over $Y^{[2]}$,
a 2-form $C \in \Omega^2(Y)$, and
an isomorphism
\begin{equation*}
\mu : \pi_{12}^{*}L \otimes \pi_{23}^{*}L
\to \pi_{13}^{*}L
\end{equation*}
of line bundles over $Y^{[3]}$, such that
two axioms are satisfied:
\begin{list}{}{\leftmargin=1cm\labelwidth=1cm\labelsep=0.25cm}

\item[\normalfont(G1)]
The curvature of $L$ is fixed by
\begin{equation*}
\mathrm{curv}(L) = \pi_2^{*}C -
\pi_1^{*}C\text{.}
\end{equation*}

\item[\normalfont(G2)]
$\mu$ is associative
in the sense that the diagram
\begin{equation*}
\alxydim{@C=3cm@R=1.2cm}{\pi_{12}^{*}L \otimes \pi_{23}^{*}L \otimes \pi_{34}^{*}L \ar[r]^{\pi_{123}^{*}\mu\otimes
\id} \ar[d]_{\id \otimes \pi_{234}^{*}\mu} & \pi_{13}^{*}L \otimes \pi_{34}^{*}L
\ar[d]^{\pi_{134}^{*}\mu} \\ \pi_{12}^{*}L \otimes \pi_{24}^{*}L \ar[r]_{\pi_{124}^{*}
\mu} & \pi_{14}^{*}L}
\end{equation*}
of isomorphisms of  line bundles
  over $Y^{[4]}$ is commutative.
\end{list}
\end{definition}

\begin{example}
\label{ex2} On any smooth manifold $M$, there is a family
$\mathcal{I}_{\omega}$ of \emph{trivial bundle gerbes} over $M$,
labelled by 2-forms $\omega\in\Omega^2(M)$. The surjective
submersion of $\mathcal{I}_{\omega}$ is $Y:=M$ and the identity
$\pi:=\id_M$, the line bundle over $Y^{[2]}\cong M$ is the trivial
line bundle (equipped with the trivial flat connection), and the
isomorphism $\mu$ is the identity between trivial line bundles. Its
2-form is the given 2-form $C:=\omega$. \end{example}

Associated to a bundle gerbe $\mathcal{G}$ over $M$ is a 3-form
$H\in\Omega^3(M)$ called the \emph{curvature} of $\mathcal{G}$. It
is the unique 3-form which satisfies $\pi^{*}H=\mathrm{d}C$. The
trivial bundle gerbe $\mathcal{I}_{\omega}$ has the curvature
$\mathrm{d}\omega$.

We would like to compare two bundle gerbes using a notion of
morphisms between bundle gerbes. The morphisms between such gerbes
which we consider here have been introduced in
\cite{waldorf1}. For simplicity, we work with the convention that we
do not label or write down pullbacks along canonical projection
maps, such as in \erf{2} and (1M1) below.

\begin{definition}
Let $\mathcal{G}_1$ and $\mathcal{G}_2$ be bundle gerbes over $M$. A \uline{1-morphism} \begin{equation*}
\mathcal{A}:\mathcal{G}_1
\to \mathcal{G}_2
\end{equation*}
consists of a surjective submersion $\zeta:Z \to Y_1 \times_M Y_2$,
a vector bundle $A$ over $Z$, and
an isomorphism
\begin{equation}
\label{2}
\alpha : L_1  \otimes
\zeta_{2}^* A \to  \zeta_{1}^* A \otimes L_2
\end{equation}
of vector bundles over $Z \times_M Z$, such that two axioms are satisfied:
\begin{list}{}{\leftmargin=1.2cm\labelwidth=1.2cm\labelsep=0.2cm}
\item[\normalfont(1M1)]
The curvature of $A$ obeys
\begin{equation*}
\frac{_1}{^n}\mathrm{tr}(\mathrm{curv}(A))=C_2 - C_1\text{,}
\end{equation*}
where $n$ is the rank of $A$.

\item[\normalfont(1M2)]
 The isomorphism $\alpha$
commutes with the isomorphisms $\mu_1$ and
$\mu_2$ of the gerbes $\mathcal{G}_1$ and $\mathcal{G}_2$ in the sense
that the diagram
\begin{equation*}
\alxydim{@C=2.5cm@R=1.2cm}{\zeta_{12}^{*}L_1 \otimes \zeta_{23}^{*}L_1 \otimes \zeta_3^{*}A \ar[r]^-{\mu_1
\otimes \id} \ar[d]_{\id \otimes \zeta_{23}^{*}\alpha} & \zeta_{13}^{*}L_1
\otimes \zeta_3^{*}A \ar[dd]^{\zeta_{13}^{*}\alpha} \\ \zeta_{12}^{*}L_1
\otimes \zeta_2^{*}A \otimes \zeta_{23}^{*}L_2 \ar[d]_{\zeta_{12}^{*}\alpha
\otimes \id} & \\ \zeta_1^{*}A \otimes \zeta_{12}^{*}L_2 \otimes \zeta_{23}^{*}L_2
\ar[r]_-{\id \otimes \mu_2} & \zeta_1^{*}A \otimes \zeta_{13}^{*}L_2}
\label{1}
\end{equation*}
of isomorphisms of vector bundles over $Z\times_M Z \times_M Z$ is
commutative.
\end{list}
\end{definition}

These 1-morphisms are generalizations of so-called \emph{stable
isomorphisms} \cite{murray2}. They are generalized in two aspects:
we admit vector bundles of rank possibly higher than 1 (this makes
it possible to describe gerbe modules by morphisms), and these
vector bundles live over a more general space $Z$ than
just the fibre product $Y_1 \times_M Y_2$ (this makes the
composition of morphisms easier).

A 1-morphism $\mathcal{A}:\mathcal{G} \to \mathcal{G}'$ requires
that the curvatures of the bundle gerbes $\mathcal{G}$ and
$\mathcal{G}'$ coincide. This follows from axiom (1M1) and the fact
that the trace of the curvature of a vector bundle is a closed form.

\begin{example}
\label{ex3}
Every bundle gerbe $\mathcal{G}$ has an associated 1-morphism
\begin{equation*}
\id_{\mathcal{G}}:\mathcal{G} \to \mathcal{G}
\end{equation*}
defined by the identity surjective
submersion $\id_Z$ of $Z:= Y^{[2]}$, the line bundle $A:=L$ of the bundle
gerbe $\mathcal{G}$ itself, and the isomorphism
\begin{equation*}
\alxydim{@C=1.5cm}{\pi_{13}^{*}L \otimes \pi_{34}^{*}L \ar[r]^-{\pi_{134}^{*}\mu} & \pi_{14}^{*}L
\ar[r]^-{\pi_{124}^{*}\mu^{-1}} & \pi_{12}^{*}L \otimes \pi_{24}^{*}L}
\end{equation*}
of line bundles over $Z \times_M Z = Y^{[4]}$, where we have identified $\zeta_1=\pi_{12}$
and $\zeta_2=\pi_{34}$. The axioms for this 1-morphism follow from the axioms
of the bundle gerbe $\mathcal{G}$.
\end{example}

The 2-categorial aspects of the theory of bundle gerbes enter when
one wants to compare two 1-morphisms.

\begin{definition}
\label{defmorph} Let $\mathcal{A}:\mathcal{G}_1 \to \mathcal{G}_2$
and $\mathcal{A}':\mathcal{G}_1 \to \mathcal{G}_2$ be 1-morphisms
between bundle gerbes over $M$. A \uline{2-morphism}
\begin{equation*}
\beta: \mathcal{A}
\Rightarrow \mathcal{A}'
\end{equation*}
is a  surjective submersion $\omega:W \to Z_1 \times_P Z_2$, where
$P:=Y_1 \times_M Y_2$, together with a morphism $\beta_W: A_1 \to
A_2$ of vector bundles over $W$, such that the diagram
\begin{equation}
\label{4}
\alxydim{@C=1.5cm@R=1.2cm}{
L_1 \otimes \omega_2^{*}A_1 \ar[r]^-{\alpha_1}
\ar[d]_{\id \otimes \omega_2^{*} \beta_W}
& \omega_1^{*}A_1  \otimes L_2
\ \ar[d]^{\omega_1^{*}\beta_W\otimes
\id}  \\ L_1
\otimes \omega_2^{*}A_2 \ar[r]_-{\alpha_2}
& \omega_1^{*}A_2  \otimes L_2}
\end{equation}
of morphisms of vector bundles over $W \times_M W$ is commutative.
We shall often omit the subscript in $\beta_W$ if it is clear
from the context that the notation refers to the bundle isomorphism.
%sometimes use the shorthand notation $\beta=(W,\omega,\beta_W)$
%to denote 2-morphisms in what follows.

\end{definition}

Due to technical reasons, one has to define a certain equivalence
relation on the space of 2-morphisms \cite{waldorf1}, whose precise
form is not important for this article. A 2-morphism
$\beta:\mathcal{A} \Rightarrow \mathcal{A}'$ is invertible if and
only if the morphism $\beta_W$ of vector bundles is invertible.
This, in turn, is the case if and only if
the ranks of the vector bundles of $\mathcal{A}$ and $\mathcal{A}'$
coincide.

Bundle gerbes over $M$, 1-morphisms and 2-morphisms as defined above
form a strictly associative 2-category $\mathfrak{BGrb}(M)$
\cite{waldorf1}. We describe below what that
means. Most importantly for us, we can
compose 1-morphisms:

\begin{definition}
\label{def2}
The composition of two 1-morphisms $\mathcal{A}:\mathcal{G}_1 \to \mathcal{G}_2$
and $\mathcal{A}':\mathcal{G}_2 \to \mathcal{G}_3$ is the 1-morphism
\begin{equation*}
\mathcal{A}'\circ \mathcal{A}:\mathcal{G}_1 \to \mathcal{G}_3
\end{equation*}
defined by the following data: its surjective submersion is $\zeta: \tilde Z \to Y_1
\times_M Y_3$ with $\tilde Z:= Z \times_{Y_2} Z'$ and the canonical
projections to $Y_1$ and $Y_3$, its vector bundle over $\tilde Z$ is $\tilde A:= A \otimes
A'$, and its isomorphism is given by
\begin{equation*}
\alxydim{@R=1.2cm@C=0.3cm}{L_1 \otimes \tilde \zeta_2^{*}\tilde A \ar@{=}[r] & L_1 \otimes \zeta_2^{*}A \otimes
\zeta_2^{\prime*}A' \ar[d]^{\alpha
\otimes \id} & \\ &
\zeta_1^{*}A \otimes L_2 \otimes
\zeta_2^{\prime*}A' \ar[d]^{\id
\otimes \alpha'} & \\ &
\zeta_1^{*}A \otimes \zeta_1^{\prime*}A'
\otimes L_3 \ar@{=}[r] & \tilde
\zeta_1^{*}\tilde A \otimes L_3\text{.}}
\end{equation*}
\end{definition}
The axioms for this 1-morphism are easy to
check. If we tacitly assume the category of vector spaces to be
strictly monoidal, it turns out that the composition of 1-morphisms
defined in this manner is, indeed, strictly associative,
\begin{equation*}
(\mathcal{A}''\circ\mathcal{A}')\circ\mathcal{A} = \mathcal{A}''\circ(\mathcal{A}'\circ\mathcal{A})\text{.}
\end{equation*}
The simplicity of Definition \ref{def2} (compared, e.g., to the one
given in \cite{stevenson1}) and the strict associativity of the
composition of 1-morphisms are consequences of our generalized
definition of 1-morphisms. One can now show

\begin{proposition}[\cite{waldorf1}]
A 1-morphism $\mathcal{A}:\mathcal{G} \to \mathcal{G}'$ is
invertible, also called \emph{1-isomorphism}, if and only if its
vector bundle is of rank one.
\end{proposition}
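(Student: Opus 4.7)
The approach is to prove both implications separately. The forward direction uses a rank-counting argument based on material already established in the excerpt: the composition of 1-morphisms given in Definition~\ref{def2} multiplies the ranks of the underlying vector bundles, since $\tilde A=A\otimes A'$; moreover, the paper has already noted that an invertible 2-morphism $\beta:\mathcal{A}\Rightarrow \mathcal{A}'$ can exist only when the ranks of the vector bundles of $\mathcal{A}$ and $\mathcal{A}'$ coincide. If $\mathcal{A}:\mathcal{G}\to\mathcal{G}'$ is invertible with inverse $\mathcal{B}:\mathcal{G}'\to\mathcal{G}$, then $\mathcal{B}\circ\mathcal{A}$ is 2-isomorphic to $\id_{\mathcal{G}}$, whose vector bundle (by Example~\ref{ex3}) is the line bundle $L$ of rank one. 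Hence $\mathrm{rk}(A)\cdot\mathrm{rk}(B)=1$, forcing $\mathrm{rk}(A)=1$.

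For the converse, I would construct an explicit inverse $\mathcal{A}^{-1}:\mathcal{G}'\to\mathcal{G}$ as follows. Take the same surjective submersion $Z$, but composed with the swap to land in $Y_2\times_M Y_1$ (so the roles of $\zeta_1$ and $\zeta_2$ are exchanged). Take the vector bundle over $Z$ to be the dual line bundle $A^*$, which is a legitimate choice precisely because $A$ has rank one. Dualising $\alpha:L_1\otimes \zeta_2^* A\to \zeta_1^* A\otimes L_2$ and inverting it gives a canonical isomorphism
\begin{equation*}
\tilde\alpha:L_2\otimes \zeta_1^* A^*\to \zeta_2^* A^*\otimes L_1
\end{equation*}
of line bundles over $Z\times_M Z$, which plays the role of the 1-morphism datum for $\mathcal{A}^{-1}$. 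Axiom (1M1) is immediate from $\mathrm{curv}(A^*)=-\mathrm{curv}(A)$, and axiom (1M2) follows from the axiom (1M2) for $\mathcal{A}$ by dualising and inverting the pentagonal diagram.

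The main technical obstacle is to exhibit invertible 2-morphisms $\mathcal{A}^{-1}\circ\mathcal{A}\Rightarrow \id_{\mathcal{G}}$ and $\mathcal{A}\circ\mathcal{A}^{-1}\Rightarrow \id_{\mathcal{G}'}$. Consider the first one. Following Definition~\ref{def2}, the composition $\mathcal{A}^{-1}\circ \mathcal{A}$ lives over $\tilde Z=Z\times_{Y_2}Z\to Y_1\times_M Y_1$, carries the rank-one vector bundle $A\otimes A^*$, and has isomorphism obtained by pasting $\alpha$ with $\tilde\alpha$. The identity $\id_{\mathcal{G}}$ lives over $Y_1^{[2]}$ with line bundle $L_1$. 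For the desired 2-morphism I would take $W=\tilde Z\times_{Y_1^{[2]}}\tilde Z$ (or a suitable refinement), and define $\beta_W$ as the composite of the canonical trivialisation $A\otimes A^*\cong \trivlin$ with the map to $L_1$ induced by the pasting of $\alpha$ and $\tilde\alpha=\alpha^{-1,*}$, which collapses to the identity on $L_1$ up to $\mu_1$. Checking commutativity of the square~\erf{4} is where the bulk of the bookkeeping sits: it amounts to showing that the two ways of moving an element of $L_1$ past a tensor factor of $A\otimes A^*$ agree, which is a direct consequence of $\alpha$ being an isomorphism and of the associativity axiom (G2) for $\mathcal{G}$. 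The second 2-morphism is entirely symmetric.
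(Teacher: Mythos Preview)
The paper does not give its own proof of this proposition; it is merely quoted from \cite{waldorf1}, so there is nothing in the paper to compare your argument against. That said, a few remarks on your attempt are in order.

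Your forward direction is correct and is exactly the kind of rank-counting argument one would expect: composition tensors the bundles (Definition~\ref{def2}), the identity 1-morphism has a line bundle (Example~\ref{ex3}), and invertible 2-morphisms preserve rank, so $\mathrm{rk}(A)\cdot\mathrm{rk}(B)=1$.

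Your converse has the right architecture but one step is phrased in a misleading way. Over $\tilde Z=Z\times_{Y_2}Z$ the line bundle of $\mathcal{A}^{-1}\circ\mathcal{A}$ is $\zeta_1^{*}A\otimes\zeta_2^{*}A^{*}$, the two pullbacks being along the \emph{different} projections $\zeta_1,\zeta_2:\tilde Z\to Z$; it is \emph{not} literally $A\otimes A^{*}$ at a single point, so there is no ``canonical trivialisation $A\otimes A^{*}\cong\trivlin$'' available. What actually produces the required isomorphism to $L_1$ is the restriction of $\alpha$ to $Z\times_{Y_2}Z\subset Z\times_M Z$: on that locus the pullback of $L_2$ lands on the diagonal of $Y_2^{[2]}$, where it is canonically trivialised via $\mu_2$, and $\alpha$ then reads
\[
L_1\otimes \zeta_2^{*}A\;\cong\;\zeta_1^{*}A,
\]
i.e.\ $\zeta_1^{*}A\otimes\zeta_2^{*}A^{*}\cong L_1$. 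This is the map you want for $\beta_W$, and the commutativity of diagram~\erf{4} then follows from axiom~(1M2) for $\alpha$ together with~(G2) for $\mu_1$. With that correction your outline goes through; the coherence axioms for $i_l$ and $i_r$ mentioned after \erf{18} still need to be checked, but they are routine once $\beta_W$ is written down correctly.
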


In a 2-category, invertibility means that
there exists a 1-isomorphism $\mathcal{A}^{-1}$ acting in the
opposite direction, together with 2-isomorphisms
\begin{equation}
\label{18}
i_l: \mathcal{A}^{-1} \circ \mathcal{A} \Rightarrow \id_{\mathcal{G}}
\quad\text{ and }\quad
i_r: \id_{\mathcal{G}'} \Rightarrow  \mathcal{A} \circ \mathcal{A}^{-1}
\end{equation}
which satisfy certain coherence axioms \cite{waldorf1}.
The  2-category $\mathfrak{BGrb}(M)$ of bundle gerbes over $M$ also provides the following
structure:
\begin{itemize}
\item[a)]
the \emph{vertical composition} of  two 2-morphisms
$\beta_1:\mathcal{A} \Rightarrow \mathcal{A}'$ and
$\beta_2:\mathcal{A}' \Rightarrow \mathcal{A}''$ to a new 2-morphism
\begin{equation*}
\beta_2
\bullet\beta_1:\mathcal{A} \Rightarrow \mathcal{A}''
\end{equation*}
which is associative and has  units $\id_{\mathcal{A}}$ for any 1-morphism
$\mathcal{A}$.

\item[b)]
the \emph{horizontal composition} of two 2-morphisms
$\beta_{12}:\mathcal{A}_{12} \Rightarrow \mathcal{A}_{12}'$ and
$\beta_{23}:\mathcal{A}_{23}\Rightarrow \mathcal{A}_{23}'$ to a new
2-morphism
\begin{equation*}
\beta_{23}
\circ \beta_{12}:\mathcal{A}_{23}
\circ \mathcal{A}_{12} \Rightarrow
\mathcal{A}'_{23}
\circ \mathcal{A}'_{12}\text{,}
\end{equation*}
which is compatible with the vertical composition.

\item[c)]
natural 2-isomorphisms
\begin{equation}
\label{19}
\rho_{\mathcal{A}}: \id_{\mathcal{G}_2} \circ \mathcal{A}
\Rightarrow \mathcal{A}
\quad\text{ and }\quad
\lambda_{\mathcal{A}}:\mathcal{A}
\circ \id_{\mathcal{G}_1}
\Rightarrow \mathcal{A}
\end{equation}
associated to any 1-morphism $\mathcal{A}:\mathcal{G}_1 \to \mathcal{G}_2$,
which satisfy the equality
\begin{equation}
\label{16}
\id_{\mathcal{A}'} \circ \rho_{\mathcal{A}} = \lambda_{\mathcal{A}'} \circ
\id_{\mathcal{A}}\text{.}
\end{equation}
\end{itemize}

The 2-category of bundle gerbes has pullbacks: for every smooth map
$f: M \to N$, there is a strict 2-functor
\begin{equation}
\label{20}
f^{*}:\mathfrak{BGrb}(N) \to \mathfrak{BGrb}(M)\text{.}
\end{equation}
Thus, for any bundle gerbe $\mathcal{G}$, we have a pullback bundle
gerbe $f^{*}\mathcal{G}$; for any 1-morphism
$\mathcal{A}:\mathcal{G}_1 \to \mathcal{G}_2$, a pullback 1-morphism
$f^{*}\mathcal{A}:f^{*}\mathcal{G}_1 \to f^{*}\mathcal{G}_2$; and
for every 2-morphism $\beta:\mathcal{A} \Rightarrow \mathcal{A}'$, a
pullback 2-morphism $f^{*}\beta:f^{*}\mathcal{A} \Rightarrow
f^{*}\mathcal{A}'$. These pullbacks are essentially defined as
pullbacks of the surjective submersions and the structure thereon,
details can be found in \cite{waldorf1}. If a bundle gerbe
$\mathcal{G}$ has curvature $H$, its pullback
$f^{*}\mathcal{G}$ has curvature $f^{*}H$. The
strictness of the 2-functor (\ref{20}) means that
$f^{*}\id_{\mathcal{G}}=\id_{f^{*}\mathcal{G}}$ and
$f^{*}(\mathcal{A}'\circ \mathcal{A}) = f^{*}\mathcal{A}' \circ
f^{*}\mathcal{A}$ whenever $\mathcal{A}$ and $\mathcal{A}'$ are
composable 1-morphisms. If $g:X \to M$ is another map, we find $(f
\circ g)^{*}= g^{*}\circ f^{*}$.

In order to concentrate  on what we need in this article, we  define
the dual $\mathcal{G}^{*}$ of a bundle gerbe $\mathcal{G}$ without
emphasizing its role in the 2-categorial
context. $\mathcal{G}^{*}$ consists of the
same surjective submersion $\pi:Y \to M$ as $\mathcal{G}$, the
2-form $-C \in \Omega^2(Y)$, the line bundle $L^{*}$ over $Y^{[2]}$
and the inverse of the dual of the isomorphism  $\mu$, which is an
isomorphism
\begin{equation*}
\mu^{*-1}:\pi_{12}^{*}L^{*}\otimes \pi_{23}^{*}L^{*}\to
\pi_{13}^{*}L^{*}
\end{equation*}
of line bundles over $Y^{[3]}$. If $H$ is the curvature of
$\mathcal{G}$,  the curvature of $\mathcal{G}^{*}$ is $-H$. If
$\mathcal{A}:\mathcal{G}_1 \to \mathcal{G}_2$ is a 1-morphism,  we
define an \emph{adjoint} 1-morphism
\begin{equation*}
\mathcal{A}^{\dagger}: \mathcal{G}_1^{*} \to \mathcal{G}_2^{*}
\end{equation*}
in the following way: it consists of the same surjective submersion
$\zeta: Z \to Y_1 \times_M Y_2$ as $\mathcal{A}$, it has the vector
bundle $A^{*}$ over $Z$, and the isomorphism \begin{equation*}
\alpha^{*-1}: L_1^{*} \otimes \zeta_1^{*}A^{*} \to \zeta_2^{*}A^{*}
\otimes L_2^{*}
\end{equation*}
of vector bundles over $Z \times_M Z$. The axioms for this
1-morphism follow immediately from those for
$\mathcal{A}$. Finally, for a 2-\textit{iso}morphism
$\beta:\mathcal{A}_1 \Rightarrow \mathcal{A}_2$, we define an
\emph{adjoint} 2-isomorphism
\begin{equation*}
\beta^{\dagger}: \mathcal{A}_1^{\dagger} \Rightarrow \mathcal{A}_2^{\dagger}\text{.}
\end{equation*}
It has the same surjective submersion $\omega:W \to Z_1 \times_P
Z_2$ as $\beta$, and the isomorphism $\beta_W^{*-1}: A_1^{*} \to
A_2^{*}$ of vector bundles over $W$. Notice that all these
operations are strictly involutive:
\begin{equation}
\label{102}
\mathcal{G}^{**} = \mathcal{G}
\quad\text{, }\quad
\mathcal{A}^{\dagger\dagger}=\mathcal{A}
\quad\text{ and }\quad
\beta^{\dagger\dagger}=\beta\text{.}
\end{equation}

\begin{remark}
In the context of some more structures in the 2-category
$\mathfrak{BGrb}(M)$, as described in
\cite{waldorf1}, namely a duality 2-functor $()^{*}$ and a functor
assigning inverses $\mathcal{A}^{-1}$ to 1-isomorphisms
$\mathcal{A}$, and certain 2-morphisms
$\beta^{\#}:\mathcal{A}_2^{-1} \Rightarrow \mathcal{A}_1^{-1}$ to
2-isomorphisms $\beta: \mathcal{A}_1 \Rightarrow \mathcal{A}_2$, we
find $\mathcal{A}^{\dagger} = \mathcal{A}^{*-1}$ and
$\beta^{\dagger} = \beta^{\#-1}$.
\end{remark}

\subsection{Twisted-Equivariant Structures}

\label{sec4}

\newcommand{\act}{}

An \textit{orientifold group} $(\Gamma,\epsilon)$ for a smooth
manifold $M$ is a finite group $\Gamma$ acting smoothly on the left
on $M$, together with a group homomorphism $\epsilon: \Gamma \to
\Z_2 = \lbrace -1,1\rbrace$. We label the diffeomorphisms
implementing the action by the group elements
themselves, for instance $\gamma: M \to M$. Notice that $\gamma_2
\circ \gamma_1 = \gamma_2\gamma_1$.

Next, we define an action of the orientifold group
$(\Gamma,\epsilon)$ on bundle gerbes over $M$ and their 1- and
2-morphisms. The value $\epsilon(\gamma)$ indicates whether a group
element $\gamma\in\Gamma$ acts just by pullback along $\gamma^{-1}$
or also by  additionally taking adjoints. Explicitly, for a bundle
gerbe $\mathcal{G}$, we set
\begin{equation*}
\gamma\act\mathcal{G} :=
\begin{cases}(\gamma^{-1})^{*}\mathcal{G} & \text{if }\epsilon(\gamma)=1\ \\
(\gamma^{-1})^{*}\mathcal{G}^{*}
 & \text{if }\epsilon(\gamma)=-1\text{.}\ \\
\end{cases}
\end{equation*}
Similarly, for a 1-morphism $\mathcal{A}:\mathcal{G} \to
\mathcal{H}$, we have a 1-morphism
\begin{equation*}
\gamma\act\mathcal{A}: \gamma\act\mathcal{G} \to \gamma\act\mathcal{H}
\end{equation*}
defined by
\begin{equation*}
\gamma\act\mathcal{A}:=
\begin{cases}(\gamma^{-1})^{*}\mathcal{A} & \text{if }\epsilon(\gamma)=1\ \\
(\gamma^{-1})^{*}\mathcal{A}^{\dagger} & \text{if } \epsilon(\gamma)=-1\text{.}\ \\
\end{cases}
\end{equation*}
Finally, for a 2-\textit{iso}morphism $\beta:\mathcal{A} \Rightarrow
\mathcal{A}'$, we have a 2-isomorphism
\begin{equation*}
\gamma\act\beta:\gamma\act\mathcal{A}\Rightarrow
\gamma\act\mathcal{A}'
\end{equation*}
defined by
\begin{equation*}
\gamma\act\beta :=
\begin{cases}(\gamma^{-1})^{*}\beta & \text{if } \epsilon(\gamma)=1\\
(\gamma^{-1})^{*}\beta^{\dagger} & \text{if }\epsilon(\gamma)=-1\text{.}\ \\
\end{cases}
\end{equation*}
Note that our conventions and (\ref{102}) imply
\begin{equation*}
(\gamma_1\gamma_2)\act=\gamma_1\act\gamma_2\act\text{,}
\end{equation*}
so that $\gamma\act$ is a left action on gerbes and their 1- and 2-morphisms. We use the same notation for differential forms, i.e. $\gamma\act \omega:= \varepsilon(\gamma)(\gamma^{-1})^{*}\omega$ for any differential form $\omega$ on $M$. If $H$ is the curvature of a bundle gerbe $\mathcal{G}$,  the curvature of $\gamma\act\mathcal{G}$ is $\gamma\act H$.

\begin{definition}
\label{def1}
Let $(\Gamma,\epsilon)$ be an orientifold group for $M$ and let $\mathcal{G}$ be a bundle gerbe over $M$. A \emph{$(\Gamma,\epsilon)$-equivariant  structure} on  $\mathcal{G}$
consists of
1-isomorphisms
\begin{equation*}
\mathcal{A}_{\gamma}: \mathcal{G}
\to \gamma\act\mathcal{G}
\end{equation*}
for each $\gamma \in \Gamma$, and of 2-isomorphisms
\begin{equation*}
\varphi_{\gamma_1,\gamma_2}: \gamma_1\act\mathcal{A}_{\gamma_2}
\circ \mathcal{A}_{\gamma_1}
\Rightarrow \mathcal{A}_{\gamma_1\gamma_2}
\end{equation*}
for each pair $\gamma_1,\gamma_2\in
\Gamma$, such that the diagram
\begin{equation}
\label{5}
\alxydim{@C=2cm@R=1.2cm}{\gamma_1\act\gamma_2\act\mathcal{A}_{\gamma_3} \circ
\gamma_1\act\mathcal{A}_{\gamma_2}
\circ \mathcal{A}_{\gamma_1} \ar@{=>}[d]_{\gamma_1\act\varphi_{\gamma_2,\gamma_3}
 \circ \id} \ar@{=>}[r]^-{\id \circ \varphi_{\gamma_1,\gamma_2}}
& \gamma_1\act\gamma_2\act\mathcal{A}_{\gamma_3}
\circ \mathcal{A}_{\gamma_1\gamma_2} \ar@{=>}[d]^{\varphi_{\gamma_1\gamma_2,\gamma_3}}
\\ \gamma_1\act\mathcal{A}_{\gamma_2\gamma_3} \circ
\mathcal{A}_{\gamma_1}
\ar@{=>}[r]_-{\varphi_{\gamma_1,\gamma_2\gamma_3}}
& \mathcal{A}_{\gamma_1\gamma_2\gamma_3}}
\end{equation}
of 2-isomorphisms is commutative. \end{definition}

We call a bundle gerbe $\mathcal{G}$ with $(\Gamma,\epsilon)$-equivariant structure a\emph{ $(\Gamma,\epsilon)$-equivariant bundle gerbe} or twisted-equivariant bundle gerbe.
 If $\epsilon$ is constant,  a $(\Gamma,\epsilon)$-equivariant bundle gerbe $\mathcal{G}$
is just called a \emph{$\Gamma$-equivariant bundle gerbe}. The
curvature $H$ of a twisted-equivariant bundle gerbe satisfies
$\gamma\act H=H$.  A twisted-equivariant structure on a bundle gerbe
$\mathcal{G}$ is called \emph{normalized} if the following choices
concerning the neutral group element $1\in \Gamma$ have been
made:
\begin{itemize}
\item[(a)]
the 1-isomorphism $\mathcal{A}_1: \mathcal{G} \to \mathcal{G}$ is
the identity 1-isomorphism $\id_{\mathcal{G}}$;

\item[(b)]
the 2-isomorphism $\varphi_{1,\gamma}: \mathcal{A}_\gamma \circ
\id_{\mathcal{G}} \Rightarrow \mathcal{A}_\gamma$ is the natural
2-isomorphism $\lambda_{\mathcal{A}_\gamma}$ from the 2-category of
bundle gerbes;

\item[(c)]
accordingly, the 2-isomorphism $\varphi_{\gamma,1}:
\id_{\gamma\act\mathcal{G}} \circ \mathcal{A}_{\gamma} \Rightarrow
\mathcal{A}_\gamma$ is the natural 2-isomorphism
$\rho_{\mathcal{A}_\gamma}$ from the 2-category of bundle gerbes.
\end{itemize}
Bundle gerbes with normalized twisted-equivariant structures will
give rise to elements in normalized group cohomology, as we shall
see in Section \ref{sec3}. A twisted-equivariant structure on a
bundle gerbe $\mathcal{G}$ is called \emph{descended} if all
surjective submersions, i.e. the surjective submersions
$\zeta^{\gamma}$ of the 1-isomorphisms $\mathcal{A}_{\gamma}$ and
the surjective submersions $\omega^{\gamma_1,\gamma_2}$ of the
2-isomorphisms $\varphi_{\gamma_1,\gamma_2}$, are identities.
This will be important in Section \ref{sec2}.

\begin{example}
\label{ex1} As an example, let us equip the trivial bundle gerbe
$\mathcal{I}_{\omega}$ from Example \ref{ex2} with a twisted-equivariant structure, for any orientifold group $(\Gamma,\epsilon)$
of $M$. This is possible for 2-forms $\omega\in\Omega^2(M)$ with
$\gamma\act\omega=\omega$ for all $\gamma\in\Gamma$. Since then
$\gamma\act\mathcal{I}_{\omega}=\mathcal{I}_{\gamma\act\omega}=\mathcal{I}_{\omega}$,
we may choose $\mathcal{A}_{\gamma}:=\id_{\mathcal{I}_{\omega}}$ for
all $\gamma\in\Gamma$. Accordingly, we can also choose
\begin{equation*}
\varphi_{\gamma_1,\gamma_2}:=
\rho_{\id_{\mathcal{I}_{\omega}}}=\lambda_{\id_{\mathcal{I}_{\omega}}}:
\id_{\mathcal{I}_{\omega}} \circ \id_{\mathcal{I}_{\omega}}
\Rightarrow \id_{\mathcal{I}_{\omega}}\text{.}
\end{equation*}
Diagram (\ref{5}) commutes due to condition (\ref{16}). We denote
this canonical $(\Gamma,\epsilon)$-equivariant structure by
$\mathcal{J}_{\omega}$. It is normalized and descended.
\end{example}

We recall that there exist canonical bundle gerbes over all compact
simple Lie groups \cite{meinrenken1,gawedzki2}. All normalized
twisted-equivariant structures on these canonical bundle gerbes were
classified (up to equivalence defined below) in \cite{gawedzki6}
using cohomological considerations (see also Section \ref{sec3}):
they arise in numbers ranging from two to sixteen. The corresponding
geometrical constructions will appear in \cite{gawedzki7}.

Let us formulate the definition of a $(\Gamma,\epsilon)$-equivariant
structure in terms of line bundles and their isomorphisms. For
convenience, we assume the $(\Gamma,\epsilon)$-equivariant structure
to be descended (see also Lemma \ref{lem1} below). The pullback
bundle gerbe $(\gamma^{-1})^{*}\mathcal{G}$ has the surjective
submersion $\pi_{\gamma}:Y_\gamma \to M$ in the commutative pullback
diagram
\begin{equation*}
\alxydim{@=1.2cm}{Y_\gamma\ \ar[r] \ar[d]_{\pi_{\gamma}}
& Y \ar[d]^{\pi} \\ M \ar[r]_{\gamma^{-1}}
& M}\text{,}
\end{equation*}
with $Y_{\gamma}:=Y$ and  $\pi_{\gamma}:=\gamma
\circ \pi$, and the rest of the
data is the same as for $\mathcal{G}$.
The 1-isomorphism $\mathcal{A}_{\gamma}:\mathcal{G}\to
\gamma\act\mathcal{G}$ is  now a line
bundle $A_{\gamma}$ over $Z^{\gamma}:=
Y \times_M Y_\gamma$
and  an isomorphism
\begin{equation}
\label{3}
\alpha_{\gamma}: \pi_{13}^{*}L
\otimes \pi_{34}^{*}A_{\gamma} \to \pi_{12}^{*}A_{\gamma}
\otimes \pi_{24}^{*}L^{\epsilon(\gamma)}
\end{equation}
of line bundles over $Z^{\gamma} \times_M Z^{\gamma}$,
satisfying the compatibility axiom
(1M2), namely
\begin{equation}
\label{6}
\alxydim{@C=3cm@R=1.2cm}{\pi_{13}^{*}L \otimes \pi_{35}^{*}L \otimes \pi_{56}^{*}A_{\gamma}
\ar[r]^-{\pi_{135}^{*}\mu \otimes \id} \ar[d]_{\id \otimes \pi_{3456}^{*}\alpha_{\gamma}}
& \pi_{15}^{*}L \otimes \pi_{56}^{*}A_{\gamma} \ar[dd]^{\pi_{1256}^{*}\alpha_{\gamma}}\\  \pi_{13}^{*}L \otimes \pi_{34}^{*}A_{\gamma} \otimes \pi_{46}^{*}L^{\epsilon(\gamma)}
\ar[d]_{\pi_{1234}^{*}\alpha_{\gamma} \otimes \id} & \\ \pi_{12}^{*}A_{\gamma}
\otimes \pi_{24}^{*}L^{\epsilon(\gamma)} \otimes \pi_{46}^{*}L^{\epsilon(\gamma)} \ar[r]_-{\id \otimes \pi_{246}^{*}\mu^{\epsilon(\gamma)}} & \pi_{12}^{*}A_{\gamma}
\otimes \pi_{26}^{*}L^{\epsilon(\gamma)}\text{.}}
\end{equation}
Here, and in the following, we have regarded the fibre products of
$Z^{\gamma}$ with itself as a subset of $Y^{4}$ and $Y^{6}$
respectively, and used the projections $\pi_{ij}:Y^{k} \to Y^{2}$
carefully: in (\ref{3}) we have well-defined projections
$\pi_{12},\pi_{34}:Z^{\gamma}  \times _M Z^{\gamma}\to Z^{\gamma}$
and $\pi_{13},\pi_{24}:Z^{\gamma} \times_M Z^{\gamma} \to Y^{[2]}$,
and similarly in (\ref{6}). Furthermore,
$L^{\epsilon(\gamma)}$ stands for the dual line bundle $L^*$ and
$\mu^{\epsilon(\gamma)}$ for the isomorphism $\mu^{*-1}$  if
$\epsilon(\gamma)=-1$.

The  surjective submersion of the  1-isomorphism
$\gamma_1\act\mathcal{A}_{\gamma_2}$ is the identity on $Z^{\gamma_2}_{\gamma_1}:= Y_{\gamma_1}
\times_M
Y_{\gamma_1\gamma_2}$,  whose projection to the base space $M$ makes the diagram
\begin{equation*}
\alxydim{@=1.2cm}{Z^{\gamma_2}_{\gamma_1} \ar[d] \ar@{=}[r] & Z^{\gamma_2} \ar[d] \\
M \ar[r]_{\gamma_1^{-1}} & M}
\end{equation*}
 commutative. Further, $\gamma_1\act\mathcal{A}_{\gamma_2}$
consists of the line bundle $A_{\gamma_2}^{\epsilon(\gamma_1)}$ over
$Z_{\gamma_1}^{\gamma_2}$, and of the isomorphism
$\alpha_{\gamma_2}^{\epsilon(\gamma_1)}$, which stands for
$\alpha_{\gamma_2}^{*-1}$ if $\epsilon(\gamma_1)=-1$. Next, applying
Definition \ref{def2} for the composition of
two 1-morphisms to $\gamma_1\act\mathcal{A}_{\gamma_2} \circ
\mathcal{A}_{\gamma_1}$, we have to form the fibre product
\begin{equation*}
Z^{\gamma_1,\gamma_2}:=
  Z^{\gamma_1}\times_{Y_{\gamma_1}}Z^{\gamma_2}_{\gamma_1} \cong
Y \times_M
Y_{\gamma_1} \times_M
Y_{\gamma_1\gamma_2}
\end{equation*}
with the surjective submersion $\pi_{13}:Z^{\gamma_1,\gamma_2}
\to Y \times_M Y_{\gamma_1\gamma_2}$. The line bundle of the composition $\gamma_1\act\mathcal{A}_{\gamma_2}
\circ \mathcal{A}_{\gamma_1}$  is the line bundle  $\pi_{12}^{*}A_{\gamma_1}
\otimes \pi_{23}^{*}A^{\epsilon(\gamma_1)}_{\gamma_2}$ over $Z^{\gamma_1,\gamma_2}$, and its isomorphism
is
\begin{multline*}
(\id \otimes \pi_{2356}^{*}\alpha^{\epsilon(\gamma_1)}_{\gamma_2}) \circ (\pi_{1245}^{*}\alpha_{\gamma_1}\otimes \id):\pi_{14}^{*}L \otimes \pi_{45}^{*}A_{\gamma_1} \otimes \pi_{56}^{*}A_{\gamma_2}^{\epsilon(\gamma_1)}
\\\to \pi_{12}^{*}A_{\gamma_1} \otimes \pi_{23}^{*}A_{\gamma_2}^{\epsilon(\gamma_1)} \otimes \pi_{36}^{*}L^{\epsilon(\gamma_1\gamma_2)}
\end{multline*}

Finally, we come to the 2-isomorphisms
$\varphi_{\gamma_1,\gamma_2}$, whose surjective submersion $\omega$
is by assumption  the identity on $Z^{\gamma_1,\gamma_2}$, so that
they induce the isomorphisms
\begin{equation*}
\varphi_{\gamma_1,\gamma_2}:\pi_{12}^{*}A_{\gamma_1}
\otimes \pi_{23}^{*}A_{\gamma_2}^{\epsilon(\gamma_1)}
\to \pi_{13}^{*}A_{\gamma_1\gamma_2}
\end{equation*}
of line bundles over $Z^{\gamma_1,\gamma_2}$ satisfying the
compatibility condition (\ref{4}) for 2-morphisms, which, here,
amounts to the commutativity of the diagram (turned by 90 degrees
compared to (\ref{4}) for presentational reasons)
\begin{equation}
\label{8}
\alxydim{@C=2cm@R=1.2cm}{\pi_{14}^{*}L \otimes \pi_{45}^{*}A_{\gamma_1} \otimes \pi_{56}^{*}A_{\gamma_2}^{\epsilon(\gamma_1)}
\ar[r]^-{\id \otimes \pi_{456}^{*}\varphi_{\gamma_1,\gamma_2}}
\ar[d]_-{\pi_{1245}^{*}\alpha_{\gamma_1}\otimes \id} & \pi_{14}^{*}L\otimes
\pi_{46}^{*}A_{\gamma_1\gamma_2}
\ar[dd]^{\pi_{1346}^{*}\alpha_{\gamma_1\gamma_2}}
\\ \pi_{12}^{*}A_{\gamma_1} \otimes \pi_{25}^{*}L^{\epsilon(\gamma_1)}
\otimes \pi_{56}^{*}A_{\gamma_2}^{\epsilon(\gamma_1)}
\ar[d]_{\id \otimes \pi_{2356}^{*}\alpha^{\epsilon(\gamma_1)}_{\gamma_2}}
& \\ \pi_{12}^{*}A_{\gamma_1} \otimes \pi_{23}^{*}A_{\gamma_2}^{\epsilon(\gamma_1)} \otimes \pi_{36}^{*}L^{\epsilon(\gamma_1\gamma_2)} \ar[r]_-{\pi_{123}^{*}\varphi_{\gamma_1,\gamma_2}
\otimes \id}
&  \pi_{13}^{*}A_{\gamma_1\gamma_2}
\otimes \pi_{36}^{*}L^{\epsilon(\gamma_1\gamma_2)}}
\end{equation}
of isomorphisms of line bundles over $Z^{\gamma_1,\gamma_2} \times_M
Z^{\gamma_1,\gamma_2}$. The commutativity of diagram (\ref{5})
from Definition \ref{def1} is
equivalent to that of the diagram
\begin{equation}
\label{7}
\alxydim{@C=2.2cm@R=1.2cm}{\pi_{12}^{*}A_{\gamma_1}
\otimes \pi_{23}^{*}A_{\gamma_2}^{\epsilon(\gamma_1)}
\otimes \pi_{34}^{*}A^{\epsilon(\gamma_1\gamma_2)}_{\gamma_3}
 \ar[d]_{\pi_{123}^{*}\varphi_{\gamma_1,\gamma_2}
\otimes \id} \ar[r]^-{\id \otimes \pi_{234}^{*}\varphi^{\epsilon(\gamma_1)}_{\gamma_2,\gamma_3}}
& \pi_{12}^{*}A_{\gamma_1}
\otimes \pi_{24}^{*}A^{\epsilon(\gamma_1)}_{\gamma_2\gamma_3}
 \ar[d]^{\pi_{124}^{*}\varphi_{\gamma_1,\gamma_2\gamma_3}}
\\ \pi_{13}^{*}A_{\gamma_1\gamma_2}
\otimes \pi_{34}^{*}A_{\gamma_3}^{\epsilon(\gamma_1\gamma_2)}
\ar[r]_-{\pi_{134}^{*}\varphi_{\gamma_1\gamma_2,\gamma_3}}
& \pi_{14}^{*}A_{\gamma_1\gamma_2\gamma_3}}
\end{equation}
of isomorphisms of line bundles over
$Z^{\gamma_1,\gamma_2,\gamma_3}\cong Y\times_M Y_{\gamma_1} \times_M
Y_{\gamma_1\gamma_2} \times_M Y_{\gamma_1\gamma_2\gamma_3}$.

Summarizing, a descended $(\Gamma,\epsilon)$-equivariant  structure
on the bundle gerbe $\mathcal{G}$ is
\begin{enumerate}
\item
a line bundle $A_{\gamma}$ over $Z^{\gamma}$ of curvature
$\mathrm{curv}(A_{\gamma}) = \epsilon(\gamma)\pi_{2}^{*}C -
\pi_1^{*}C$ for each $\gamma\in\Gamma$;
\item
for each $\gamma\in\Gamma$, an isomorphism
\begin{equation*}
\alpha_{\gamma}: \pi_{13}^{*}L
\otimes \pi_{34}^{*}A_{\gamma} \to \pi_{12}^{*}A_{\gamma}
\otimes \pi_{24}^{*}L^{\epsilon(\gamma)}
\end{equation*}
of line bundles over $Z^{\gamma} \times_M Z^{\gamma}$ such that
the diagram (\ref{6}) is commutative;

\item
for each pair $(\gamma_1,\gamma_2)\in\Gamma\times\Gamma$, an
isomorphism
\begin{equation*}
\varphi_{\gamma_1,\gamma_2}:\pi_{12}^{*}A_{\gamma_1}
\otimes \pi_{23}^{*}A_{\gamma_2}^{\epsilon(\gamma_1)}
\to \pi_{13}^{*}A_{\gamma_1\gamma_2}
\end{equation*}
of line bundles over $Z^{\gamma_1,\gamma_2}$ such that
the diagrams (\ref{8}) and (\ref{7}) are commutative.

\end{enumerate}
If the $(\Gamma,\epsilon)$-equivariant  structure
is normalized, we have
 $A_1:=L$ and
 the isomorphism $\alpha_{1}:=\pi_{124}^{*}\mu^{-1} \circ
\pi_{134}^{*}\mu$. The normalization constraints $\varphi_{\gamma,1}=\rho_{\mathcal{A}_{\gamma}}$
and $\varphi_{1,\gamma}=\lambda_{\mathcal{A}_{\gamma}}$ imply $\varphi_{1,1}=\mu$.

Next, we would like to compare two $(\Gamma,\epsilon)$-equivariant
bundle gerbes.

\begin{definition}
\label{def4}
Let $(\Gamma,\epsilon)$ be an orientifold group for $M$ and let $\mathcal{G}^a$ and $\mathcal{G}^b$ be bundle gerbes over $M$ equipped with $(\Gamma,\epsilon)$-equivariant  structures $\mathcal{J}^a=(\mathcal{A}_{\gamma}^a,\varphi_{\gamma_1,\gamma_2}^a)$ and $\mathcal{J}^b=(\mathcal{A}^b_{\gamma},\varphi^b_{\gamma_1,\gamma_2})$, respectively.
An \emph{equivariant 1-morphism}
\begin{equation*}
(\mathcal{B},\eta_{\gamma}): (\mathcal{G}^a,\mathcal{J}^a) \to (\mathcal{G}^b,\mathcal{J}^b)
\end{equation*}
is a 1-morphism $\mathcal{B}:\mathcal{G}^a \to \mathcal{G}^b$ of the underlying bundle gerbes together with a family of 2-isomorphisms
\begin{equation*}
\eta_{\gamma}: \gamma\act\mathcal{B} \circ \mathcal{A}^a_{\gamma} \Rightarrow \mathcal{A}^b_{\gamma} \circ \mathcal{B}\text{,}
\end{equation*}
one for each $\gamma\in\Gamma$,
such that the diagram
\begin{equation}
\label{17}
\alxydim{@C=3cm@R=1.2cm}{\gamma_1\act\gamma_2\act\mathcal{B} \circ \gamma_1\act\mathcal{A}^a_{\gamma_2} \circ \mathcal{A}^a_{\gamma_1} \ar@{=>}[d]_-{\gamma\act_1\eta_{\gamma_2} \circ \id_{\mathcal{A}^a_{\gamma_1}}} \ar@{=>}[r]^-{\id_{\gamma_1\act\gamma_2\act\mathcal{B}} \circ \varphi^a_{\gamma_1,\gamma_2}} & \gamma_1\gamma_2\act\mathcal{B} \circ \mathcal{A}^a_{\gamma_1\gamma_2} \ar@{=>}[dd]^-{\eta_{\gamma_1\gamma_2}} \\ \gamma\act_1\mathcal{A}_{\gamma_2}^{b}\circ \gamma_1\act \mathcal{B}\circ \mathcal{A}^a_{\gamma_1} \ar@{=>}[d]_-{\id_{\gamma\act_1\mathcal{A}_{\gamma_2}^b}\circ \eta_{\gamma_1}} & \\ \gamma_1\act\mathcal{A}^{b}_{\gamma_2}\circ \mathcal{A}^{b}_{\gamma_1} \circ \mathcal{B} \ar@{=>}[r]_-{\varphi^b_{\gamma_1,\gamma_2} \circ \id_{\mathcal{B}}} & \mathcal{A}^b_{\gamma_1\gamma_2} \circ \mathcal{B}}
\end{equation}
of 2-isomorphisms is commutative.
\end{definition}

Equivariant 1-morphisms  can be composed in a natural way: if
\begin{equation*}
\alxydim{@C=1.5cm}{(\mathcal{G}^a,\mathcal{J}^a) \ar[r]^{(\mathcal{B},\eta_{\gamma})} & (\mathcal{G}^b,\mathcal{J}^b) \ar[r]^{(\mathcal{B}',\eta_{\gamma}')} & (\mathcal{G}^c,\mathcal{J}^c)}
\end{equation*}
are two composable equivariant 1-morphisms, their composition consists of   the 1-morphism $\mathcal{B}' \circ \mathcal{B}: \mathcal{G}^a \to \mathcal{G}^c$ and  the 2-morphisms
\begin{equation*}
\alxydim{@C=1.1cm}{\gamma\act(\mathcal{B}' \circ \mathcal{B}) \circ \mathcal{A}^a_{\gamma}=\gamma\act\mathcal{B}' \circ \gamma\act\mathcal{B} \circ \mathcal{A}^a_{\gamma}\ar@{=>}[r]^-{\id \circ \eta_{\gamma}} & \gamma\act\mathcal{B}' \circ \mathcal{A}^b_{\gamma}\circ \mathcal{B} \ar@{=>}[r]^-{\eta'_{\gamma} \circ \id} & \mathcal{A}^c_{\gamma} \circ (\mathcal{B}' \circ \mathcal{B})}\text{.}
\end{equation*}
This composition is associative. We also have an identity
equivariant 1-morphism associated to a
$(\Gamma,\epsilon)$-equivariant bundle gerbe
$(\mathcal{G},\mathcal{J})$ given by $(\id_{\mathcal{G}},
\lambda_{\mathcal{A}_{\gamma}}^{-1} \bullet
\rho_{\mathcal{A}_{\gamma}})$. An equivariant 1-morphism
$(\mathcal{B},\eta_{\gamma})$ is called \emph{invertible} or
equivariant 1-\emph{iso}morphism if the 1-morphism $\mathcal{B}$ is
invertible. In this case, an inverse is given by
$(\mathcal{B}^{-1},\eta_{\gamma}^{-1})$. Hence, equivariant
1-isomorphisms furnish an equivalence relation on the set of
$(\Gamma,\epsilon)$-equivariant bundle gerbes over $M$.

\begin{definition}
Two twisted-equivariant bundle gerbes over $M$ are called
\emph{equivalent} if there exists an equivariant 1-isomorphism
between them.
\end{definition}

 The set of equivalence classes of twisted-equivariant bundle gerbes over $M$ will be further investigated in Sections \ref{sec2} and
 \ref{sec3}. Let us anticipate here the following fact.

\begin{lemma}
\label{lem1}
Every twisted-equivariant bundle gerbe is equivalent to one with descended twisted-equivariant structure.
\end{lemma}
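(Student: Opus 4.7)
The plan is to construct a refined bundle gerbe $\mathcal{G}'$ carrying a descended equivariant structure $\mathcal{J}'$, together with an equivariant 1-isomorphism $(\mathcal{B}, \eta_\gamma): (\mathcal{G}, \mathcal{J}) \to (\mathcal{G}', \mathcal{J}')$ witnessing the desired equivalence.

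First I would choose a single surjective submersion $p: Y' \to Y$ that dominates all the auxiliary surjective submersions appearing in $\mathcal{J}$. Since $\Gamma$ is finite, one may take $Y'$ to be a suitable iterated fibre product over $M$ involving $Y$ together with each $Z^\gamma$ and each $W^{\gamma_1, \gamma_2}$, arranged so that for every $\gamma$ the canonical map $Y' \times_M Y'_\gamma \to Y \times_M Y_\gamma$ admits a section of $\zeta^\gamma$, and analogously for each $\omega^{\gamma_1, \gamma_2}$. Setting $\mathcal{G}' := p^{*}\mathcal{G}$ via the pullback 2-functor \erf{20}, the tautological refinement furnishes an invertible 1-morphism $\mathcal{B}: \mathcal{G} \to \mathcal{G}'$ built from the trivial line bundle over $Y' \times_Y Y'$.

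Second, I would transfer the equivariant structure by pullback along these sections. Define $A'_\gamma$ as the pullback of $A_\gamma$ along $Y' \times_M Y'_\gamma \to Z^\gamma$, and similarly obtain $\alpha'_\gamma$ from $\alpha_\gamma$ and $\varphi'_{\gamma_1,\gamma_2}$ from $\varphi_{\gamma_1,\gamma_2}$. By construction, the resulting 1- and 2-morphisms have identity surjective submersions, so $\mathcal{J}' = (\mathcal{A}'_\gamma, \varphi'_{\gamma_1, \gamma_2})$ is descended. Commutativity of \erf{6}, \erf{8}, and \erf{7} for $\mathcal{J}'$ follows by naturality from that of the corresponding diagrams for $\mathcal{J}$.

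Third, I would exhibit the 2-isomorphisms $\eta_\gamma: \gamma\act\mathcal{B} \circ \mathcal{A}_\gamma \Rightarrow \mathcal{A}'_\gamma \circ \mathcal{B}$ needed for the equivariant 1-isomorphism. Since $\mathcal{B}$ is essentially a refinement and both composites are assembled from different pullbacks of the same underlying data $A_\gamma$ and $L$ along compatible submersions, the $\eta_\gamma$ arise as the tautological identifications guaranteed by the universal property of fibre products; the hexagon \erf{17} then reduces, after unwinding these identifications, to diagram \erf{7} for $\mathcal{J}$, which holds by hypothesis. The main obstacle is the first step: one must organize $Y'$ so that a single refinement simultaneously dominates all the $\zeta^\gamma$ and $\omega^{\gamma_1, \gamma_2}$ coherently with the $\Gamma$-action on the base, so that the pulled-back data live unambiguously over fibre products of the form $Y' \times_M Y'_{\gamma_1} \times_M Y'_{\gamma_1\gamma_2}$ without requiring a further refinement once $\Gamma$ has acted. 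The finiteness of $\Gamma$ is essential: it guarantees that a single finite iterated fibre product suffices, rather than forcing an inductive construction that might fail to terminate.
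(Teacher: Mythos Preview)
Your strategy differs from the paper's, and the gap you yourself flag in the first step is real and not resolved by your construction.

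The paper keeps the bundle gerbe $\mathcal{G}$ unchanged and instead invokes Theorem~1 of \cite{waldorf1}: every 1-morphism $\mathcal{A}$ between bundle gerbes is 2-isomorphic, via a canonical $\sigma_{\mathcal{A}}:\mathcal{A}\Rightarrow\mathrm{Des}(\mathcal{A})$, to one whose surjective submersion is the identity, and this assignment is functorial on 2-morphisms. One then sets $\mathcal{A}'_\gamma:=\mathrm{Des}(\mathcal{A}_\gamma)$, $\varphi'_{\gamma_1,\gamma_2}:=\mathrm{Des}(\varphi_{\gamma_1,\gamma_2})$, and the equivariant 1-isomorphism is simply $(\id_{\mathcal{G}},\sigma_{\mathcal{A}_\gamma})$. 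The point is that descent happens at the level of the 1-morphisms, not by refining the cover of the gerbe.

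Your approach---pulling back $\mathcal{G}$ along a refinement $p:Y'\to Y$ so that the equivariant structure becomes descended relative to $Y'$---is conceptually dual, but the specific construction you propose does not deliver what you need. Forming an iterated fibre product over $M$ of $Y$ with the spaces $Z^\gamma$ and $W^{\gamma_1,\gamma_2}$ gives a space with maps \emph{into} each $Z^\gamma$, but not a map $Y'\times_M Y'_\gamma\to Z^\gamma$ lying over $Y\times_M Y_\gamma$: the second factor $Y'_\gamma$ is an independent copy of $Y'$, and the fibre-product data attached to it bear no relation to the required second projection of $\zeta^\gamma$. Concretely, if $Y'$ carries a point $z_\gamma\in Z^\gamma$ with $\mathrm{pr}_1(z_\gamma)=p(y')$, there is no reason for $\mathrm{pr}_2(z_\gamma)$ to equal $p$ of the second coordinate in $Y'\times_M Y'_\gamma$. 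The finiteness of $\Gamma$ does not help here; the obstruction is structural, not a matter of termination.

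A repair in your spirit would replace the fibre-product construction by taking $Y'$ to be (the disjoint union of) a sufficiently fine open cover of $Y$, so that each $\zeta^\gamma$ and $\omega^{\gamma_1,\gamma_2}$ admits local sections over the induced cover of its base; but this is not what you wrote, and carrying it through still requires checking that the section choices are compatible enough for the pulled-back $\varphi'_{\gamma_1,\gamma_2}$ to land in the correct bundles. The paper's use of the $\mathrm{Des}$ functor sidesteps all of this bookkeeping.
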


\begin{proof}
We recall Theorem 1 of \cite{waldorf1}: for every 1-morphism
$\mathcal{A}:\mathcal{G} \to \mathcal{H}$, there exists a
``descended'' 1-morphism $\mathrm{Des}(\mathcal{A}):\mathcal{G} \to
\mathcal{H}$ whose surjective submersion $\zeta:Z \to Y \times_M Y'$
is the identity, together with a 2-isomorphism
$\sigma_{\mathcal{A}}: \mathcal{A} \Rightarrow
\mathrm{Des}(\mathcal{A})$. For every 2-morphism
$\varphi:\mathcal{A}^a \Rightarrow \mathcal{A}^b$, there exists a
2-morphism
\begin{equation*}
\mathrm{Des}(\varphi):\mathrm{Des}(\mathcal{A}^a) \Rightarrow \mathrm{Des}(\mathcal{A}^b)
\end{equation*}
such that the diagram
\begin{equation}
\label{45}
\alxydim{@=1.2cm}{\mathcal{A}^a \ar@{=>}[r]^-{\sigma_{\mathcal{A}^a}} \ar@{=>}[d]_{\varphi} & \mathrm{Des}(\mathcal{A}^a) \ar@{=>}[d]^{\mathrm{Des}(\varphi)} \\ \mathcal{A}^b \ar@{=>}[r]_-{\sigma_{\mathcal{A}^b}} & \mathrm{Des}(\mathcal{A}^b)}
\end{equation}
is commutative. For a given
$(\Gamma,\varepsilon)$-equivariant structure
$\mathcal{J}=(\mathcal{A}_{\gamma},\varphi_{\gamma_1,\gamma_2})$ on
a bundle gerbe $\mathcal{G}$, we define $\mathcal{A}'_{\gamma} :=
\mathrm{Des}(\mathcal{A}_{\gamma})$ and
$\varphi_{\gamma_1,\gamma_2}' :=
\mathrm{Des}(\varphi_{\gamma_1,\gamma_2})$. Due to the commutativity
of (\ref{45}), the new $\varphi'_{\gamma_1,\gamma_2}$ still
satisfy condition (\ref{5}) for
$(\Gamma,\epsilon)$-equivariant structures. Then, the choices
$\mathcal{B}=\id_{\mathcal{G}}$ and
$\eta_{\gamma}=\sigma_{\mathcal{A}_{\gamma}}$  define an
equivariant 1-isomorphism which establishes the claimed equivalence.
\end{proof}

We call an equivariant 1-morphism between bundle gerbes with
normalized $(\Gamma,\epsilon)$-equivariant  structures
\emph{normalized} if $\eta_1: \mathcal{B} \circ \id_{\mathcal{G}^a}
\Rightarrow \id_{\mathcal{G}^b} \circ \mathcal{B}$ is given by the
natural 2-morphisms of the 2-category, $\eta_1=
\rho^{-1}_{\mathcal{B}} \bullet \lambda_{\mathcal{B}}$. We call
an equivariant 1-morphism
\emph{descended} if the surjective submersion of $\mathcal{B}$ is
the identity.

Let us, again, describe
what an equivariant 1-morphism $(\mathcal{B},\eta_{\gamma})$ is in
terms of line bundles and isomorphisms thereof. We assume it to be
descended for simplicity. The 1-isomorphism $\mathcal{B}:
\mathcal{G}^a \to \mathcal{G}^b$ consists of a vector bundle $B$
over $Z:= Y^a \times_M Y^b$ and of an isomorphism $\beta:
\pi_{13}^{*}L^a \otimes \pi_{34}^{*}B \to \pi_{12}^{*}B \otimes
\pi_{24}^{*}L^b$ over $Z^{[2]}$ satisfying axiom
(1M1). The composition $\gamma\act\mathcal{B}
\circ \mathcal{A}^a_{\gamma}$ we have to consider is the 1-morphism
with the vector bundle $\pi_{12}^{*}A^a_{\gamma} \otimes
\pi_{23}^{*}B^{\epsilon(\gamma)}$ over
\begin{equation*}
Z_1^{\gamma} := (Z^a)^{\gamma} \times_{Y^a_{\gamma}} Z_\gamma \cong Y^a
\times_M Y^a_{\gamma} \times_M Y^b_{\gamma}\text{,}
\end{equation*}
and with the isomorphism
\begin{multline}
\label{30} (\id \otimes \pi_{2356}^{*}\beta^{\epsilon(\gamma)})
\circ (\pi_{1245}^{*}\alpha^a_{\gamma} \otimes \id): \pi_{14}^{*}L^a
\otimes \pi_{45}^{*}A^a_{\gamma} \otimes
\pi_{56}^{*}B^{\epsilon(\gamma)}\\ \to \pi_{12}^{*}A^a_{\gamma}
\otimes \pi_{23}^{*}B^{\epsilon(\gamma)} \otimes
\pi_{36}^{*}(L^b)^{\epsilon(\gamma)}
\end{multline}
of vector bundles over
$(Z_1^\gamma)^{[2]}$. The other composition, $\mathcal{A}^b_{\gamma}
\circ \mathcal{B}$, is the 1-morphism with the vector bundle
$\pi_{12}^{*}B\otimes \pi_{23}^{*}A_{\gamma}^b$ over
\begin{equation*}
Z_2^{\gamma} := Z \times_{Y^b} (Z^b)^{\gamma} \cong Y^a \times_M Y^b
\times_M Y^b_{\gamma}\text{,}
\end{equation*}
and with the isomorphism
\begin{multline}
\label{31} (\id \otimes \pi_{2356}^{*}\alpha_{\gamma}^{b}) \circ
(\pi_{1245}^{*}\beta \otimes \id): \pi_{14}^{*}L^{a} \otimes
\pi_{45}^{*}B \otimes \pi_{56}^{*}A^b_{\gamma} \\\to \pi_{12}^{*}B
\otimes \pi_{23}^{*}A^{b}_{\gamma} \otimes
\pi_{36}^{*}(L^b)^{\epsilon(\gamma)}
\end{multline}
of vector bundles over
$(Z^{\gamma}_2)^{[2]}$.
 The 2-isomorphisms $\eta_{\gamma}$ correspond now to isomorphisms
\begin{equation}
\label{34}
\eta_{{\gamma}}: \pi_{12}^{*}A^a_{\gamma} \otimes
\pi_{24}^{*}B^{\epsilon(\gamma)}\to \pi_{13}^{*}B\otimes
\pi_{34}^{*}A^b_{\gamma}
\end{equation}
of vector bundles over $Z_1^{\gamma} \times_{P} Z_2^{\gamma}\cong
Y^a\times_MY^a_\gamma\times_MY^b\times_MY^b_\gamma$, \,where
$P:=Y^{a} \times_M Y^b_{\gamma}$, and these isomorphisms satisfy the
compatibility condition
\begin{equation}
\label{32} \alxydim{@=1.2cm}{\pi_{15}^{*}L^a \otimes
\pi_{56}^{*}A^a_{\gamma} \otimes \pi_{68}^{*}B^{\epsilon(\gamma)}
\ar[d]_{\id \otimes \pi_{5678}^{*}\eta_{{\gamma}}} \ar[r] &
\pi_{12}^{*}A^a_{\gamma} \otimes \pi_{24}^{*}B^{\epsilon(\gamma)}
\otimes \pi_{48}^{*}(L^b)^{\epsilon(\gamma)}
\ar[d]^{\pi_{1234}^{*}\eta_{{\gamma}} \otimes \id}
\\ \pi_{15}^{*}L^a \otimes \pi_{57}^{*}B \otimes \pi_{78}^{*}A^b_{\gamma}
\ar[r]_{} & \pi_{13}^{*}B \otimes \pi_{34}^{*}A^b_{\gamma} \otimes
\pi_{48}^{*}(L^b)^{\epsilon(\gamma)}\text{,}}
\end{equation}
where the horizontal arrows are given by (\ref{30}) and (\ref{31}),
respectively. Finally, the commutativity of diagram (\ref{17})
implies the commutativity of the diagram
\begin{equation}
\hspace{-0.3cm}
\label{33} \alxydim{@C=1.9cm@R=1.2cm}{\pi_{12}^{*}A^a_{\gamma_1}
\otimes \pi_{23}^{*}(A_{\gamma_2}^a)^{\epsilon(\gamma_1)} \otimes
\pi_{36}^{*}B^{\epsilon(\gamma_1\gamma_2)}
\ar[r]^-{\pi_{123}^{*}\varphi^a_{\gamma_1,\gamma_2} \otimes \id}
\ar[d]_{\id \otimes
\pi_{2356}^{*}\eta^{\epsilon(\gamma_1)}_{{\gamma_2}}} &
\pi_{13}^{*}A^a_{\gamma_1\gamma_2} \otimes
\pi_{36}^{*}B^{\epsilon(\gamma_1\gamma_2)}
\ar[dd]^{\pi_{1346}^{*}\eta_{{\gamma_1\gamma_2}}} \\
\pi_{12}^{*}A^a_{\gamma_1} \otimes
\pi_{25}^{*}B^{\epsilon(\gamma_1)} \otimes
\pi_{56}^{*}(A_{\gamma_2}^b)^{\epsilon(\gamma_1)}
\ar[d]_{\pi_{1245}^{*}\eta_{\gamma_1} \otimes \id} & \\
\pi_{14}^{*}B\otimes \pi_{45}^{*}A^b_{\gamma_1}\otimes
\pi_{56}^{*}(A_{\gamma_2}^b)^{\epsilon(\gamma_1)} \ar[r]_-{\id
\otimes \pi_{456}^{*}\varphi^b_{{\gamma_1,\gamma_2}}} &
\pi_{14}^{*}B\otimes \pi_{46}^{*}A^b_{\gamma_1\gamma_2}\text{.}}
\hspace{-0.5cm}
\end{equation}

For completeness, and as a preparation for Section \ref{sec5}, we
would also like to introduce equivariant
2-morphisms. Suppose that we have $(\Gamma,\epsilon)$-equivariant
bundle gerbes $(\mathcal{G}^a,\mathcal{J}^a)$ and
$(\mathcal{G}^b,\mathcal{J}^b)$, and that we have two equivariant
1-morphisms $(\mathcal{B},\eta_{\gamma})$ and
$(\mathcal{B}',\eta_{\gamma}')$ between these. An \emph{equivariant
2-morphism} \begin{equation*} \phi: (\mathcal{B},\eta_{\gamma})
\Rightarrow (\mathcal{B}',\eta_{\gamma}')
\end{equation*}
is a 2-morphism $\phi:\mathcal{B} \Rightarrow \mathcal{B}'$ which is
compatible with the 2-morphisms $\eta_{\gamma}$ and
$\eta_{\gamma}'$ in the sense that the diagram
\begin{equation}
\label{35} \alxydim{@=1.2cm}{\gamma\act\mathcal{B} \circ
\mathcal{A}^a_{\gamma} \ar@{=>}[d]_{\gamma\act\phi \circ
\id_{\mathcal{A}^a_{\gamma}}} \ar@{=>}[r]^-{\eta_{\gamma}} &
\mathcal{A}^b_{\gamma} \circ \mathcal{B}
\ar@{=>}[d]^{\id_{\mathcal{A}^b_{\gamma}} \circ \phi} \\
\gamma\act\mathcal{B}' \circ \mathcal{A}^a_{\gamma}
\ar@{=>}[r]_{\eta_{\gamma}'} & \mathcal{A}^b_{\gamma} \circ
\mathcal{B}'}
\end{equation}
of 2-morphisms is commutative.

In terms of morphisms between vector bundles, $\phi$ is just a
morphism $\phi: B \to B'$ of vector bundles over $Z = Y^a \times_M
Y^b$ which is compatible with the isomorphisms $\beta$ and $\beta'$
in the sense that the diagram
\begin{equation}
\label{37}
\alxydim{@=1.2cm}{
\pi_{13}^{*}L^a \otimes
\pi_{34}^{*}B \ar[r]^-{\beta}
\ar[d]_{\id \otimes \pi_{34}^{*}\phi}
& \pi_{12}^{*}B \otimes \pi_{24}^{*}L^b
\ \ar[d]^{\pi_{12}^{*}\phi\otimes
\id}  \\ \pi_{13}^{*}L^a \otimes
\pi_{34}^{*}B' \ar[r]_-{\beta'}
& \pi_{12}^{*}B' \otimes \pi_{24}^{*}L^b}
\end{equation}
is commutative, and diagram (\ref{35}) imposes the commutativity of
\begin{equation}
\label{62}
\alxydim{@=1.2cm}{
\pi_{12}^{*}A^a_{\gamma} \otimes \pi_{24}^{*}B^{\epsilon(\gamma)} \ar[r]^-{\eta_{\gamma}}
\ar[d]_{\id \otimes \pi_{24}^{*}\phi^{\epsilon(\gamma)}}
& \pi_{13}^{*}B\otimes \pi_{34}^{*}A^b_{\gamma}
\ \ar[d]^{\pi_{13}^{*}\phi\otimes
\id}  \\ \pi_{12}^{*}A^{a}_{\gamma} \otimes \pi_{24}^{*}B'^{\epsilon(\gamma)} \ar[r]_-{\eta'_{\gamma}}
& \pi_{13}^{*}B'\otimes \pi_{34}^{*}A^b_{\gamma}\text{.}}
\end{equation}

Naturally, twisted-equivariant bundle gerbes, equivariant
1-morphisms and equivariant 2-morphisms form, again, a 2-category,
but we will not stress this point.

\subsection{Jandl Gerbes}

\label{sec15}

In this section, we consider the particular  orientifold group
$(\Z_2,\id)$.  The non-trivial group element of $\Z_2$ is denoted by
$k$, and its action $k:M \to M$ is an involution. According to
Definition \ref{def1}, a (normalized)  $\Z_2^{\id}$-equivariant
structure is a single 1-isomorphism
\begin{equation*}
\mathcal{A}_k:\mathcal{G}
\to k^{*}\mathcal{G}^{*}
\end{equation*}
and single 2-isomorphism
\begin{equation*}
\varphi_{k,k}:  k^{*}\mathcal{A}_k^{\dagger} \circ \mathcal{A}_k \Rightarrow
\id_{\mathcal{G}}
\end{equation*}
such that
\begin{equation}
\label{40}
\lambda_{\mathcal{A}_k} \bullet (\id \circ \varphi_{k,k})= \rho_{\mathcal{A}_k}
\bullet (k^{*}\varphi_{k,k}^{\dagger} \circ \id)\text{.}
\end{equation}
It is easy to see that this is exactly the same as a Jandl structure
\cite{schreiber1}: the 1-isomorphism $\mathcal{A}:=
k^{*}\mathcal{A}_k:k^{*}\mathcal{G} \to \mathcal{G}^{*}$ and the
2-isomorphism  $\varphi$ defined by
\begin{equation*}
\alxydim{@C=1.3cm}{k^{*}\mathcal{A} \ar@{=>}[r]^-{\rho_{k*\mathcal{A}}^{-1}} & \id_{\mathcal{G}}
\circ k^{*}\mathcal{A} \ar@{=>}[r]^-{i_r \circ \id_{k^{*}\mathcal{A}}} &  \mathcal{A}^{*} \circ \mathcal{A}^{\dagger} \circ k^{*}\mathcal{A} \ar@{=>}[r]^-{\id_{\mathcal{A}^{*}}
 \circ \varphi_{k,k}}
 & \mathcal{A}^{*} \circ \id_{\mathcal{G}} \ar@{=>}[r]^-{\lambda_{\mathcal{A}^{*}}} & \mathcal{A}^{*}}
\end{equation*}
yield a Jandl structure as described in \cite{waldorf1}. Hence,
we will call a $(\Z_2,\id)$-equivariant structure just
\emph{Jandl structure} and a $(\Z_{2},\id)$-equivariant bundle gerbe
\emph{Jandl gerbe}.

Now, we elaborate the details of a Jandl structure, which we may
assume to be descended according to Lemma \ref{lem1}.  The
1-isomorphism $\mathcal{A}_k$ consists of a line bundle $A_k$ over
$Z^{k}:= Y \times_M Y_{k}$ of curvature
\begin{equation}
\label{80}
\mathrm{curv}(A_k)=-(\pi_2^{*}C + \pi_1^{*}C)
\end{equation}
in the notation
of Section \ref{sec4}, together with an isomorphism
\begin{equation}
\label{81}
\alpha_k: \pi_{13}^{*}L \otimes \pi_{34}^{*}A_k \to \pi_{12}^{*}A_k
\otimes \pi_{24}^{*}L^{*}
\end{equation}
of line bundles over $Z^{k} \times_M Z^{k}$ satisfying axiom (1M2).
The composition $k^{*}\mathcal{A}_k^{\dagger}
\circ \mathcal{A}_k$ is the 1-isomorphism with the surjective submersion $\id$
on
\begin{equation*}
Z^{k,k}=Z^{k} \times_{Y_k} Z^{k}_k \cong Y \times_M Y_k \times_M Y
%\cong Y^{[3]}
\text{,}
\end{equation*}
the
line bundle $\pi_{12}^{*}A_k \otimes \pi_{23}^{*}A_k^{*}$ over $Z^{k,k}$ and
the isomorphism
\begin{multline*}
(\id\otimes \pi_{2356}^{*}\alpha_k^{*-1}) \circ (\pi_{1245}^{*}\alpha_k
\otimes\id): \pi_{14}^{*}L \otimes \pi_{45}^{*}A_k \otimes
\pi_{56}^{*}A_k^{*}\\ \to \pi_{12}^{*}A_k \otimes \pi_{23}^{*}A_k^{*}
\otimes \pi_{36}^{*}L
\end{multline*}
of line bundles over $Z^{k,k} \times_M Z^{k,k}$.
The 2-isomorphism
$\varphi_{k,k}$ corresponds to a bundle isomorphism
\begin{equation}
\label{82} \varphi_{k,k}: \pi_{12}^{*}A_k \otimes \pi_{23}^{*}A_k^{*} \to
\pi_{13}^{*}L\text{,}
\end{equation}
compatible with $\alpha_k$ by virtue of the
commutativity of the diagram
\begin{equation*}
\alxydim{@=1.2cm}{\pi_{14}^{*}L \otimes \pi_{45}^{*}A_k \otimes
\pi_{56}^{*}A_k^{*} \ar[d]_{\id \otimes \pi_{456}^{*}\varphi_{k,k}} \ar[r]
&  \pi_{12}^{*}A_k \otimes \pi_{23}^{*}A_k^{*} \otimes \pi_{36}^{*}L
\ar[d]^{\pi_{123}^{*}\varphi_{k,k} \otimes \id} \\ \pi_{14}^{*}L \otimes
\pi_{46}^{*}L \ar[r] &\pi_{13}^{*}L \otimes \pi_{36}^{*}L\text{.}}
\end{equation*}
Finally, equation (\ref{40}) gives the commutativity of
\begin{equation}
\label{22}
\alxydim{@C=2.4cm@R=1.2cm}{\pi_{12}^{*}A_k
\otimes \pi_{23}^{*}A_k^{*}
\otimes \pi_{34}^{*}A_k
 \ar[d]_{\pi_{123}^{*}\varphi_{k,k}
\otimes \id} \ar[r]^-{\id \otimes \pi_{234}^{*}\varphi^{*-1}_{k,k}}
& \pi_{12}^{*}A_k
\otimes \pi_{24}^{*}L^{*}
 \ar[d]^{\pi_{124}^{*}\rho}
\\ \pi_{13}^{*}L
\otimes \pi_{34}^{*}A_k
\ar[r]_-{\pi_{134}^{*}\lambda}
& \pi_{14}^{*}A_k\text{.}}
\end{equation}

\def\bun{\mathcal{B}\!un}

\label{71}

It is worthwhile to discuss a \emph{trivialized Jandl gerbe}.
This is a Jandl gerbe $(\mathcal{G},\mathcal{J})$
equipped with a trivialization, i.e. a 1-isomorphism
$\mathcal{T}:\mathcal{G} \to \mathcal{I}_{\rho}$. Here, the
2-categorial formalism can be used fruitfully. In \cite{waldorf1},
a functor
\begin{equation}
\label{55}
\bun: \mathfrak{Hom}(\mathcal{I}_{\rho_1},\mathcal{I}_{\rho_2}) \to
\mathfrak{Bun}(M)
\end{equation}
is defined: for every 1-morphism $\mathcal{A}:\mathcal{I}_{\rho_1}
\to \mathcal{I}_{\rho_2}$ between trivial bundle gerbes over $M$, it
provides a vector bundle $\bun(\mathcal{A})$ over $M$; and for every
2-morphism $\beta:\mathcal{A} \Rightarrow \mathcal{A}'$, it provides
a morphism
\begin{equation*}
\bun(\beta): \bun(\mathcal{A})
\to \bun(\mathcal{A}')
\end{equation*}
of vector bundles over $M$. If the 1-morphism
$\mathcal{A}:\mathcal{I}_{\rho_1} \to \mathcal{I}_{\rho_2}$ has a
vector bundle $A$ over $\zeta: Z \to M \times_M M \cong M$, the
vector bundle $\bun(\mathcal{A})$ is uniquely characterized by the
property that $\zeta^{*}\bun(\mathcal{A})\cong A$. Accordingly, the
rank $n$ of $\bun(\mathcal{A})$ is equal to the rank of $A$, and its
curvature satisfies, by axiom (1M1),
\begin{equation*}
\frac{_1}{^n} \mathrm{tr}(\mathrm{curv}(\bun(\mathcal{A}))) = \rho_2
- \rho_1\text{.}
\end{equation*}
The functor $\bun$ has the following
compatibility properties:
\begin{itemize}
\item
$\bun(\mathcal{A}_2 \circ \mathcal{A}_1) = \bun(\mathcal{A}_1)
\otimes \bun(\mathcal{A}_2)$

\item
$\bun(\id_{\mathcal{I}_{\rho}})=\trivlin$
\item
$\bun(f^{*}\mathcal{A}) = f^{*}\bun(\mathcal{A})$
\item
$\bun(\mathcal{A}^{\dagger}) = \bun(\mathcal{A})^{*}$,
\end{itemize}
in which $\trivlin$ denotes the trivial line bundle with the trivial
flat connection.

Let us return to the Jandl gerbe $(\mathcal{G},\mathcal{J})$ and the
trivialization $\mathcal{T}: \mathcal{G} \to \mathcal{I}_{\rho}$.
First, we form a 1-isomorphism $\mathcal{R}:\mathcal{I}_{\rho} \to
\mathcal{I}_{-k^{*}\rho}$ by composing
\begin{equation}
\label{69} \alxydim{@C=1.2cm}{\mathcal{I}_{\rho}
\ar[r]^-{\mathcal{T}^{-1}} & \mathcal{G} \ar[r]^-{\mathcal{A}_k} &
k^{*}\mathcal{G}^{*} \ar[r]^-{k^{*}\mathcal{T}^{\dagger}} &
\mathcal{I}_{-k^{*}\rho}\text{,}}
\end{equation}
and a 2-isomorphism $\psi: k^{*}\mathcal{R}^{\dagger} \circ \mathcal{R} \Rightarrow \id_{\mathcal{I}_{\rho}}$ by composing
\begin{equation}
\label{70} \alxydim{@=1.2cm@C=0.3cm}{k^{*}\mathcal{R}^{\dagger}
\circ \mathcal{R} \ar@{=}[r] & \mathcal{T} \circ
k^{*}\mathcal{A}^{\dagger}_k \circ k^{*}\mathcal{T}^{\dagger-1}
\circ k^{*}\mathcal{T}^{\dagger} \circ \mathcal{A}_k \circ
\mathcal{T}^{-1}  \ar@{=>}[d]^{\id \circ i_l \circ \id} \\ &
\mathcal{T} \circ k^{*}\mathcal{A}_k^{\dagger}
\circ\id_{k^*\mathcal{G}^{*}}\circ \mathcal{A}_k \circ
\mathcal{T}^{-1} \ar@{=>}[d]^{\id \circ \rho_{\mathcal{A}_k} \circ
\id} \\ & \mathcal{T} \circ k^{*}\mathcal{A}^{\dagger}_k
\circ\mathcal{A}_k \circ \mathcal{T}^{-1} \ar@{=>}[d]^{\id \circ
\varphi_{k,k} \circ \id} \\ & \mathcal{T} \circ \id_{\mathcal{G}}
\circ \mathcal{T}^{-1} \ar@{=>}[d]_-{\lambda_{\mathcal{T}} \circ
\id}
\\& \mathcal{T} \circ \mathcal{T}^{-1} \ar@{=>}[r]_-{i_r^{-1}} &
\id_{\mathcal{I}_{\rho}}\text{.}}
\end{equation}
In this definition, we have used the
2-isomorphisms $i_l$ and $i_r$ from \erf{18} associated to the
inverse 1-isomorphism $k^{*}\mathcal{T}^{\dagger-1}$, and the
2-isomorphisms $\rho$ and $\lambda$ from \erf{19}. Equation \erf{16}
assures that it is not important whether one uses
$\lambda$ or $\rho$.

Now we apply the functor $\bun$ to the 1-isomorphism
$\mathcal{R}$ and the 2-isomorphism
$\psi$. The first yields a line bundle $R :=
\bun(\mathcal{R})$ over $M$ of curvature
$-(k^{*}\rho+\rho)$, and the
second (using the above rules) an isomorphism
\begin{equation*}
\phi := \bun(\psi): R \otimes k^{*}R^{*} \to \trivlin
\end{equation*}
of line bundles over $M$. Finally,
condition \erf{40} implies that
\begin{equation*}
\phi \otimes \id_{R} = \id_R \otimes k^{*}\phi^{\dagger}
\end{equation*}
as isomorphisms from $R \otimes k^{*}R^{*} \otimes R$ to $R$. In
other words, the pair $(R,\phi)$ is a $k$-equivariant line bundle
over $M$. Summarizing, every trivialized Jandl gerbe gives rise to
an equivariant line bundle.

\begin{comment}
Suppose now that $\mathcal{T}': \mathcal{G} \to \mathcal{I}_{\rho'}$
is a second trivialization of $\mathcal{G}$. Using this
trivialization we may have obtained a different 1-isomorphism
$\mathcal{B}'$ and a different 2-isomorphism $\psi'$, yielding
another $k$-equivariant line bundle $(R',\phi')$. In order to relate
these two equivariant line bundles we shall consider the
1-isomorphism \begin{equation*} \mathcal{T}' \circ
\mathcal{T}^{-1}:\mathcal{I}_{\rho} \to \mathcal{I}_{\rho'}
\end{equation*}
and the associated line bundle $T:=\bun(\mathcal{T}' \circ \mathcal{T}^{-1})$. The definitions of $\mathcal{B}$ and $\mathcal{B}'$ induce a 2-isomorphism
\begin{equation*}
\mathcal{B}' \cong k^{*}(\mathcal{T}' \circ \mathcal{T}^{-1})^{\dagger} \circ \mathcal{B} \circ (\mathcal{T}' \circ \mathcal{T}^{-1})^{-1}
\end{equation*}
that in turn implies an isomorphism
\begin{equation}
\label{47}
R' \cong T^{*} \otimes R \otimes k^{*}T^{*}
\end{equation}
via the functor $\bun$.
The right hand side can be identified as the tensor product of the $k$-equivariant line bundle $R$ with the canonically $k$-equivariant line bundle $k^{*}T^{*} \otimes T^{*}$, and \erf{47} is actually an isomorphism of $k$-equivariant line bundles.
\end{comment}

\begin{remark}
One could also use the functor $\bun$ to express a trivialized
twisted-equivariant structure in terms of bundles over $M$ in the
case of a general orientifold group $(\Gamma,\varepsilon)$. The
result is not (as probably expected) a $\Gamma$-equivariant line
bundle over $M$ but a \emph{family} $R_{\gamma}$ of line bundles
over $M$ of curvature $\gamma\act\rho-\rho$, together with
isomorphisms
\begin{equation*}
\phi_{\gamma_1,\gamma_2}: R_{\gamma_1} \otimes \gamma_1\act R_{\gamma_2} \to R_{\gamma_1\gamma_2}
\end{equation*}
of line bundles which satisfy a coherence condition on triples $\gamma_1,\gamma_2,\gamma_3\in\Gamma$.
\end{remark}

It will be useful to investigate the relation between the
equivariant line bundles associated to two equivariantly isomorphic
Jandl gerbes. Suppose that $(\mathcal{G}^a,\mathcal{J}^a)$ and
$(\mathcal{G}^b,\mathcal{J}^b)$ are Jandl gerbes over $M$ with
respect to the same involution $k:M \to M$, and
suppose further that $(\mathcal{B},\eta_k):
(\mathcal{G}^a,\mathcal{J}^a) \to (\mathcal{G}^b,\mathcal{J}^b)$ is
an equivariant 1-isomorphism. Let $\mathcal{T}^b: \mathcal{G}^b \to
\mathcal{I}_{\rho}$ be a trivialization of $\mathcal{G}^a$ and let
$\mathcal{T}^a:= \mathcal{T}^b \circ \mathcal{B}$ be the
induced trivialization of $\mathcal{G}^a$. We obtain the
$k$-equivariant line bundles $(R^a,\phi^a)$ and $(R^b,\phi ^b)$ over
$M$ in the manner described above.

\begin{lemma}
\label{lem3}
The 2-isomorphism $\eta_k$ induces an isomorphism $\kappa: R^a \to R^b$ of line bundles over $M$ that respects the equivariant structures in the sense that the diagram
\begin{equation*}
\alxydim{@=1.2cm}{R^a \otimes k^{*}(R^a)^{*} \ar[r]^-{\phi^a}
\ar[d]_{\kappa \otimes k^{*}\kappa^{*-1}} & \trivlin \ar@{=}[d] \\
R^b \otimes k^{*}(R^b)^{*} \ar[r]_-{\phi^b} & \trivlin}
\end{equation*}
of isomorphisms of line bundles over $M$ is commutative.
\end{lemma}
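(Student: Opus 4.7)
The plan is to construct a 2-isomorphism $\Theta: \mathcal{R}^a \Rightarrow \mathcal{R}^b$ in $\mathfrak{BGrb}(M)$ and then set $\kappa := \bun(\Theta)$. Since $\mathcal{T}^a = \mathcal{T}^b \circ \mathcal{B}$, the associativity, the adjoint rule $(\mathcal{T}^b \circ \mathcal{B})^{\dagger}=(\mathcal{T}^b)^{\dagger}\circ \mathcal{B}^{\dagger}$, and the invertibility 2-isomorphisms $i_l,i_r$ from \erf{18} furnish a canonical 2-isomorphism $\mathcal{R}^a \Rightarrow k^{*}(\mathcal{T}^b)^{\dagger} \circ k^{*}\mathcal{B}^{\dagger} \circ \mathcal{A}^a_k \circ \mathcal{B}^{-1} \circ (\mathcal{T}^b)^{-1}$. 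Horizontally composing $\eta_k: k^{*}\mathcal{B}^{\dagger} \circ \mathcal{A}^a_k \Rightarrow \mathcal{A}^b_k \circ \mathcal{B}$ (recall that $\epsilon(k)=-1$) with identities, then contracting $\mathcal{B} \circ \mathcal{B}^{-1}$ via $i_r^{-1}$ and absorbing the resulting identity with $\rho_{\mathcal{A}^b_k}$ from \erf{19}, produces $\Theta$ landing on $k^{*}(\mathcal{T}^b)^{\dagger} \circ \mathcal{A}^b_k \circ (\mathcal{T}^b)^{-1} = \mathcal{R}^b$. As $\mathcal{R}^a$ and $\mathcal{R}^b$ are 1-isomorphisms, $\kappa := \bun(\Theta)$ is a genuine isomorphism of line bundles $R^a \to R^b$.

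The compatibility of $\kappa$ with $\phi^a$ and $\phi^b$ is obtained by applying the functor $\bun$ to an appropriate 2-categorical identity. Using the compatibility rules
$\bun(\mathcal{A}_2 \circ \mathcal{A}_1) = \bun(\mathcal{A}_1) \otimes \bun(\mathcal{A}_2)$,
$\bun(k^{*}\mathcal{A}) = k^{*}\bun(\mathcal{A})$ and
$\bun(\mathcal{A}^{\dagger}) = \bun(\mathcal{A})^{*}$ (so that $\bun(\Theta^{\dagger}) = \kappa^{*-1}$), the diagram asserted in the lemma is the $\bun$-image of the equation
\begin{equation*}
\psi^b \bullet \bigl(k^{*}\Theta^{\dagger} \circ \Theta\bigr) = \psi^a
\end{equation*}
of 2-isomorphisms $k^{*}(\mathcal{R}^a)^{\dagger} \circ \mathcal{R}^a \Rightarrow \id_{\mathcal{I}_{\rho}}$. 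Unwinding the definition \erf{70} of $\psi^{a}$ and $\psi^{b}$, the essential content is the pair $\varphi^a_{k,k}$ and $\varphi^b_{k,k}$, flanked by $\mathcal{T}^{a,b}$, their inverses, and the pullbacks and adjoints thereof. Substituting $\mathcal{T}^a = \mathcal{T}^b \circ \mathcal{B}$ and inserting the definition of $\Theta$, the identity reduces precisely to axiom \erf{17} for $\eta_k$ specialised to $\gamma_1 = \gamma_2 = k$, which is exactly the coherence between $\varphi^a_{k,k}$ and $\varphi^b_{k,k}$ through two suitably arranged copies of $\eta_k$.

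The principal obstacle is the bookkeeping of the 2-categorical diagram chase: every unit 2-isomorphism $\rho, \lambda, i_l, i_r$, every interchange of $(-)^{\dagger}$ with composition or with $k^{*}$, and every use of the involutivity relations \erf{102} must be tracked and aligned with the combinatorial shape of axiom \erf{17}. Once this alignment is made explicit, the required identity is a formal consequence of that axiom together with the 2-functor properties of $\bun$, $k^{*}$ and $(-)^{\dagger}$, which establishes the lemma.
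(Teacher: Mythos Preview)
Your proof is correct and follows essentially the same route as the paper's: construct a 2-isomorphism $\mathcal{R}^a \Rightarrow \mathcal{R}^b$ from $\eta_k$ by whiskering with $(\mathcal{T}^b)^{-1}$ and $k^{*}(\mathcal{T}^b)^{\dagger}$ and contracting $\mathcal{B}\circ\mathcal{B}^{-1}$, then apply $\bun$; the compatibility with $\phi^a,\phi^b$ is exactly what the paper calls ``straightforward to check,'' and you correctly identify its content as axiom \erf{17} at $\gamma_1=\gamma_2=k$. One small slip: the unit absorbing $\mathcal{A}^b_k \circ \id_{\mathcal{G}^b}$ is $\lambda_{\mathcal{A}^b_k}$ in the paper's conventions \erf{19}, not $\rho_{\mathcal{A}^b_k}$.
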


\begin{proof}
The 2-isomorphism $\eta_k: k^{*}\mathcal{B}^{\dagger} \circ
\mathcal{A}_k^a \Rightarrow \mathcal{A}_k^b \circ \mathcal{B}$
induces an isomorphism
\begin{equation*}
\alxydim{@C=1.5cm}{k^{*}\mathcal{B}^{\dagger} \circ \mathcal{A}_k^a
\circ \mathcal{B}^{-1} \ar@{=>}[r]^-{\eta_k \circ \id} &
\mathcal{A}_k^b \circ \mathcal{B} \circ \mathcal{B}^{-1}
\ar@{=>}[r]^-{\id \circ i_r^{-1}} & \mathcal{A}_k^b \circ
\id_{\mathcal{G}^b} \ar@{=>}[r]^-{\lambda_{\mathcal{A}_k^b}} &
\mathcal{A}_k^b\text{.}}
\end{equation*}
The composition of the above
1-morphisms with $(\mathcal{T}^b)^{-1}$ from the right
and with $k^{*}(\mathcal{T}^b)^\dagger$
from the left yields a
2-isomorphism $\eta_k':\mathcal{R}^a\Rightarrow\mathcal{R}^b$
according to \erf{69}. Then, we have $\kappa := \bun(\eta_k')$. It
is straightforward to check that the 2-isomorphism $\eta_k'$ and
the two 2-isomorphisms $\psi^a$ and $\psi^b$ from \erf{70} fit into
a commutative diagram, such that applying the functor $\bun$ yields
the assertion we had to show.
\end{proof}

\section{Descent Theory for Jandl Gerbes}

\label{sec2}

In this section, we consider an orientifold group
$(\Gamma,\epsilon)$ whose normal subgroup
$\Gamma_0 := \mathrm{ker}(\epsilon)$ of $\Gamma$ acts on $M$ without
fixed points, so that the quotient $M':=M/\Gamma_0$ is equipped with
a canonical smooth-manifold structure such that the projection
$p: M \to M'$ is a smooth map. We remark that there is a remaining
smooth group action of $\Gamma':=\Gamma/\Gamma_0$ on $M'$. We also
still have a group homomorphism $\epsilon': \Gamma' \to \Z_2$, so
that $(\Gamma',\epsilon')$ is an orientifold group for $M'$.

\begin{theorem}
\label{th2} Let $(\Gamma,\epsilon)$ be an orientifold group for $M$
with $\Gamma_0$ acting without fixed points, and let
$(\Gamma',\varepsilon')$ be the quotient orientifold group for
the quotient $M':=M/\Gamma_0$. Then, there is a canonical bijection
\begin{equation*}
\alxy{\left \lbrace \txt{Equivalence classes\\of $(\Gamma,\epsilon)$-equivariant\\bundle
gerbes
over $M$} \right \rbrace \ar[r]^-{\cong} & \left \lbrace \txt{Equivalence
classes\\of $(\Gamma',\epsilon')$-equivariant\\bundle gerbes over $M'$} \right \rbrace\text{.}}
\end{equation*}
\end{theorem}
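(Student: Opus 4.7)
The plan is to construct mutually inverse maps $\Phi$ and $\Psi$ between the two sides, with $\Phi$ given by pullback along $p:M\to M'$ and $\Psi$ by a descent construction. For $\Phi$, the identity $p\circ\gamma=[\gamma]\circ p$ (with $[\gamma]\in\Gamma'$ the class of $\gamma\in\Gamma$) yields a canonical identification $\gamma\act(p^{*}\mathcal{G}')\cong p^{*}([\gamma]\act\mathcal{G}')$, and setting $\mathcal{A}_{\gamma}:=p^{*}\mathcal{A}'_{[\gamma]}$ and $\varphi_{\gamma_1,\gamma_2}:=p^{*}\varphi'_{[\gamma_1],[\gamma_2]}$ produces a $(\Gamma,\epsilon)$-equivariant structure on $p^{*}\mathcal{G}'$. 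Strictness of the 2-functor $p^{*}$ delivers all coherence axioms for free, and normalization of $\mathcal{J}'$ forces $\mathcal{A}_{\gamma_0}=\id$ for $\gamma_0\in\Gamma_0$. Pullback of an equivariant 1-isomorphism is again equivariant, so $\Phi$ is well-defined on equivalence classes.

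For $\Psi$, I would first apply Lemma \ref{lem1} to assume $\mathcal{J}$ on $\mathcal{G}$ is descended. Because $\Gamma_0$ acts freely, $p$ is a covering and $\pi':=p\circ\pi:Y\to M'$ is a surjective submersion whose higher fibre products decompose canonically as
\begin{equation*}
Y\times_{M'}Y \;=\; \bigsqcup_{\gamma_0\in\Gamma_0}Z^{\gamma_0}, \qquad Y\times_{M'}Y\times_{M'}Y \;=\; \bigsqcup_{\gamma_0^{(1)},\gamma_0^{(2)}\in\Gamma_0}Z^{\gamma_0^{(1)},\gamma_0^{(2)}},
\end{equation*}
and likewise for the fourfold product. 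I would then define a bundle gerbe $\mathcal{G}'$ over $M'$ by taking $\pi'$ as its surjective submersion, $C$ as its 2-form, and a line bundle $L'$ over $Y\times_{M'}Y$ defined piecewise as $L$ on the piece $\gamma_0=1$ and as $A_{\gamma_0}$ on the piece indexed by $\gamma_0\neq 1$; the multiplication $\mu'$ is defined on the piece $(\gamma_0^{(1)},\gamma_0^{(2)})$ by $\varphi_{\gamma_0^{(1)},\gamma_0^{(2)}}$, restricting to $\mu$ at $(1,1)$ by normalization. Axiom (G1) on each piece becomes the curvature formula for $A_{\gamma_0}$ (using $\epsilon(\gamma_0)=1$), and (G2) is the restriction of diagram (\ref{7}) to $\Gamma_0^{\times 3}$.

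To equip $\mathcal{G}'$ with a $(\Gamma',\epsilon')$-equivariant structure $\mathcal{J}'$, I would pick a normalized set-theoretic section $s:\Gamma'\to\Gamma$ and use $\mathcal{A}_{s([\gamma])}$ to build $\mathcal{A}'_{[\gamma]}$ (noting that $s([\gamma])\act\mathcal{G}$ descends to $[\gamma]\act\mathcal{G}'$ by the same recipe), and use $\varphi_{s([\gamma_1]),s([\gamma_2])}$ together with the identification of $A_{\gamma_0}$ with $L'$ on $Z^{\gamma_0}$ for the cocycle discrepancy $\gamma_0:=s([\gamma_1])s([\gamma_2])s([\gamma_1\gamma_2])^{-1}\in\Gamma_0$ to build $\varphi'_{[\gamma_1],[\gamma_2]}$. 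Axiom (1M2) over $M'$ then follows from diagrams (\ref{6}) and (\ref{8}), and diagram (\ref{5}) for $\mathcal{J}'$ follows from diagram (\ref{7}). A different choice of section $s$ yields an equivalent equivariant structure (the equivariant 1-isomorphism is $\id_{\mathcal{G}'}$ with compensating $\eta$'s built from the $\mathcal{A}_{\gamma_0}$-data); equivariant 1-isomorphisms on $M$ descend by applying the same recipe to their underlying 1-isomorphism and $\eta_{\gamma_0}$-data, making $\Psi$ well-defined on equivalence classes.

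Finally, the compositions $\Phi\circ\Psi$ and $\Psi\circ\Phi$ are inverse up to equivariant 1-isomorphism: the fibre product $Y\times_{M'}M\cong\bigsqcup_{\gamma_0\in\Gamma_0}Y_{\gamma_0}$ appearing in $p^{*}\mathcal{G}'$ reproduces $L$ together with the $A_{\gamma_0}$ piece by piece, while descent of $p^{*}\mathcal{G}'$ with its canonical $\Gamma_0$-equivariance tautologically returns $\mathcal{G}'$. The main obstacle I foresee is the construction of $\mathcal{J}'$ from the section $s$: showing that the resulting equivalence class of equivariant structures is independent of $s$ and compatible with equivariant 1-isomorphisms requires systematically tracking the cocycle discrepancies $\gamma_0$ and absorbing them through the pieces of $L'$. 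Once this bookkeeping is in place, the remaining verification of the axioms for $\mathcal{G}'$ and $\mathcal{J}'$ amounts to translating the coherence diagrams (\ref{5})--(\ref{8}) and (\ref{17}) piecewise across the fibre-product decompositions above.
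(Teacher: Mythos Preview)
Your construction of the quotient bundle gerbe $\mathcal{G}'$ over $M'$ is essentially identical to the paper's (Definition~\ref{def5}): the surjective submersion $p\circ\pi:Y\to M'$, the fibre-product decomposition over $\Gamma_0$, and the piecewise definition of $L'$ and $\mu'$ via $A_{\gamma_0}$ and $\varphi_{\gamma_0^{(1)},\gamma_0^{(2)}}$ are exactly what the paper does. Likewise, your pullback map $\Phi$ matches the paper's surjectivity argument.

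The genuine difference lies in how you build the $(\Gamma',\epsilon')$-equivariant structure $\mathcal{J}'$. You propose to choose a set-theoretic section $s:\Gamma'\to\Gamma$ and use the single 1-isomorphism $\mathcal{A}_{s([\gamma])}$. The paper instead observes that the relevant fibre product $Y\times_{M'}Y_k$ decomposes as $\bigsqcup_{\gamma\in\Gamma_{-}}Z^{\gamma}$ and defines the line bundle $A'_k$ \emph{piecewise} using \emph{all} the $A_{\gamma}$ with $\gamma\in\Gamma_{-}$ simultaneously; the isomorphism $\alpha'_k$ is assembled from the $\varphi_{\gamma_1\gamma_2,\gamma_3}$ and $\varphi_{\gamma_1,\gamma_2\gamma_3}^{-1}$, and $\varphi'_{k,k}$ from the $\varphi_{\gamma_1,\gamma_2}$ with $\gamma_1,\gamma_2\in\Gamma_{-}$. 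No section is chosen, so no independence needs to be checked.

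Your route is not wrong, but there is a real issue you gloss over: $Z^{s(k)}$ is only one component of $Y\times_{M'}Y_k$, so $\mathcal{A}_{s(k)}$ alone does not furnish a line bundle over the whole fibre product. To make your plan work you must first promote $\mathcal{A}_{s(k)}$ to a $\Gamma_0$-equivariant 1-isomorphism between $\mathcal{G}$ and $s(k)\act\mathcal{G}$ (each viewed as a $\Gamma_0$-equivariant gerbe, the latter via conjugated structure maps) and then \emph{descend} that equivariant 1-isomorphism; the descent will reintroduce the remaining $A_{\gamma}$ for $\gamma\in\Gamma_{-}$ through the $\eta_{\gamma_0}$-data. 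This is doable, but it presupposes a descent theory for equivariant 1-morphisms that you only sketch later, and the bookkeeping of the conjugated $\Gamma_0$-structure on $s(k)\act\mathcal{G}$ and the resulting section-independence is precisely the ``main obstacle'' you flag. The paper's direct piecewise assembly sidesteps all of this: every coherence check reduces immediately to an instance of diagram~(\ref{7}), with no auxiliary choices to eliminate.
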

\noindent
Notice that Theorem \ref{th2} unites two interesting cases:
\begin{enumerate}
\item
The original group homomorphism $\epsilon:\Gamma \to \Z_2$ is
constant $\epsilon(\gamma)=1$. In this case,
$\Gamma_0=\Gamma$ and $(\Gamma',\epsilon')$ is the trivial
(orientifold) group. Here, Theorem \ref{th2}
reduces to the well-known bijection
\begin{equation*}
\alxy{\left \lbrace \txt{Equivalence classes\\of $\Gamma$-equivariant\\bundle
gerbes
over $M$} \right \rbrace \ar[r]^-{\cong} & \left \lbrace \txt{Isomorphism
classes of\\bundle gerbes over $M'$} \right \rbrace}\text{.}
\end{equation*}
This bijection was used in \cite{gawedzki2} to
construct bundle gerbes on non-simply connected Lie groups
$G/\Gamma_0$ from bundle gerbes over the universal covering
group $G$.

\item
The original group homomorphism $\epsilon:\Gamma \to \Z_2$ is
non-trivial. In this case, $\Gamma'=\Z_2$ and
$\epsilon'=\id$. Here Theorem \ref{th2} reduces to a
bijection
\begin{equation*}
\alxy{\left \lbrace \txt{Equivalence classes\\of $(\Gamma,\epsilon)$-equivariant\\bundle
gerbes
over $M$} \right \rbrace \ar[r]^-{\cong} & \left \lbrace \txt{Equivalence
classes of\\Jandl gerbes over $M'$} \right \rbrace\text{.}}
\end{equation*}
We will use this bijection in \cite{gawedzki7} to construct Jandl gerbes  over non-simply connected Lie groups.
\end{enumerate}

In the sequel of this section, we shall prove Theorem \ref{th2} assuming,
for simplicity, that all equivariant structures are normalized.
 First, we start with a given $(\Gamma,\epsilon)$-equivariant bundle
gerbe over $M$ and construct an associated quotient bundle gerbe
$\mathcal{G}'$ over $M'$ along the lines of \cite{gawedzki1}.

By Lemma \ref{lem1}, we may assume that the
$(\Gamma,\epsilon)$-equivariant structure is descended. If $\pi:Y
\to M$ is the surjective submersion of  $\mathcal{G}$, the fibre
products of the surjective submersion $\omega:Y \to M'$ with
itself, defined by $\omega:= p \circ \pi$, are disjoint unions
\begin{equation*}
Y \times_{M'} Y\cong \bigsqcup_{\gamma \in \Gamma_0}Z^{\gamma}\quad\text{ and }\quad Y \times_{M'}
Y \times_{M'} Y \cong \bigsqcup_{(\gamma_1,\gamma_2)\in
\Gamma_0^2} Z^{\gamma_1,\gamma_2}\text{.}
\end{equation*}
We recall that the $(\Gamma,\epsilon)$-equivariant structure
on  $\mathcal{G}$ in particular has
a line bundle $A_{\gamma}$ over $Z^{\gamma}$ of curvature
\begin{equation}
\label{9}
\mathrm{curv}(A_{\gamma})
= \epsilon(\gamma)\pi_{2}^{*}C - \pi_1^{*}C
\end{equation}
for each $\gamma\in\Gamma$, and for each pair $(\gamma_1,\gamma_2)\in\Gamma^2$ an isomorphism
\begin{equation*}
\varphi_{{\gamma_1,\gamma_2}}:\pi_{12}^{*}A_{\gamma_1}
\otimes \pi_{23}^{*}A_{\gamma_2}^{\epsilon(\gamma_1)}
\to \pi_{13}^{*}A_{\gamma_1\gamma_2}
\end{equation*}
of line bundles over $Z^{\gamma_1,\gamma_2}$, such that  diagram
(\ref{7}) is commutative.

\begin{definition}
\label{def5}
 The \emph{quotient
bundle gerbe $\mathcal{G}'$} over $M'$
is defined as follows:
\begin{itemize}
\item[(i)]
its surjective submersion $\omega: Y \to M'$ is the composition of
the surjective submersion $\pi:Y \to M$ of $\mathcal{G}$ with the
quotient map $p:M \to M'$;

\item[(ii)]
its 2-form is the 2-form $C\in \Omega^2(Y)$ of $\mathcal{G}$;

\item[(iii)]
its line bundle $A$ over $Y \times_{M'} Y$ is given by the line
bundle $A|_{Z^{\gamma}} := A_\gamma$ over each component
$Z^{\gamma}$ of $Y \times_{M'} Y$;

\item[(iv)]
its isomorphism is given by the isomorphism
$\varphi_{\gamma_1,\gamma_2}$ over each component
$Z^{\gamma_1,\gamma_2}$ of $Y \times_{M'} Y \times_{M'} Y$.

\end{itemize}
The axioms (G1) and (G2) for the quotient bundle
gerbe follow from (\ref{9}) and (\ref{7}), respectively.
\end{definition}

In the case of non-trivial $\varepsilon$, we enhance the quotient
bundle gerbe $\mathcal{G}'$ to a Jandl gerbe. Let us, for
simplicity, denote by $\Gamma_{-} \subset \Gamma$ the subset of
elements $\gamma\in \Gamma$ with $\epsilon(\gamma)=-1$. To define a
Jandl structure $\mathcal{J}'$ on the quotient bundle gerbe
$\mathcal{G}'$, we use the line bundles $A_\gamma$ over $Z^{\gamma}$
for $\gamma\in \Gamma_{-}$ (which have not been used in Definition
\ref{def5}), and the isomorphisms $\varphi_{\gamma_1,\gamma_2}$ for
elements $\gamma_1,\gamma_2\in \Gamma$ with either $\gamma_{1}\in
\Gamma_{-}$ or $\gamma_2\in\Gamma_{-}$ (which have
not been used yet either). The 1-isomorphism
\begin{equation}
\label{47}
\mathcal{A}'_k: \mathcal{G}' \to k^{*}\mathcal{G}^{\prime*}
\end{equation}
is
defined as follows: the fibre product $P:=Y \times_{M'} Y_k$ of the surjective
submersions of the two bundle gerbes can be written as
\begin{equation*}
P \cong \bigsqcup_{\gamma\in \Gamma_{-}} Z^{\gamma}\text{,}
\end{equation*}
and the line bundle $A'_k$ over $P$ is defined as $A'_k|_{Z^\gamma} :=
A_\gamma$. It has the correct curvature $\mathrm{curv}(A) = -\pi_2^{*}C - \pi_1^{*}C$
in the sense of axiom (1M1). The two-fold fibre product has
the components
\begin{equation*}
P \times_{M'} P \cong \bigsqcup_{\gamma_1,\gamma_2,\gamma_3\in
\Gamma_{-}} Z^{\gamma_1,\gamma_2,\gamma_3}\text{.}
\end{equation*}
Now, we have to define an isomorphism
$\alpha$ of line bundles over $P \times_{M'} P$, which is an
isomorphism
\begin{equation*}
\alpha|_{Z^{\gamma_1,\gamma_2,\gamma_3}}: \pi_{13}^{*}A_{\gamma_1\gamma_2} \otimes \pi_{34}^{*}A_{\gamma_3} \to \pi_{12}^{*}A_{\gamma_1}
\otimes \pi_{24}^{*}A^{*}_{\gamma_2\gamma_3}\text{,}
\end{equation*}
on the component $Z^{\gamma_1,\gamma_2,\gamma_3}$, where we
have a dual line bundle because the target of the isomorphism
$\mathcal{A}$ is the dual bundle gerbe. We define this isomorphism
as the composition of
\begin{equation*}
\pi_{134}^{*}\varphi_{\gamma_1\gamma_2,\gamma_3}:\pi_{13}^{*}A_{\gamma_1\gamma_2} \otimes \pi_{34}^{*}A_{\gamma_3}
\to \pi_{14}^{*}A_{\gamma_1\gamma_2\gamma_3}\text{,}
\end{equation*}
with no dual line bundle on the left
since $\epsilon(\gamma_1\gamma_2)=1$, with the inverse of
\begin{equation*}
\pi_{124}^{*}\varphi_{\gamma_1,\gamma_2\gamma_3}:\pi_{12}^{*}A_{\gamma_1} \otimes \pi_{24}^{*}A_{\gamma_1\gamma_3}^{*}
\to \pi_{14}^{*}A_{\gamma_1\gamma_2\gamma_3}\text{.}
\end{equation*}
The isomorphism $\alpha$ defined in this manner
satisfies axiom (1M2) for 1-morphisms
due to the commutativity condition (\ref{7}) for the isomorphisms
$\varphi_{\gamma_1,\gamma_2}$. This completes the definition of the
1-isomorphism $\mathcal{A}'_k$.

We are left with the definition of the 2-isomorphism
$\varphi'_{k,k}: k^{*}\mathcal{A}_k^{\prime\dagger} \circ
\mathcal{A}'_k \Rightarrow \id_{\mathcal{G}'}$ for which we use the
remaining structure, namely the 1-isomorphisms
$\varphi_{\gamma_1,\gamma_2}$ with
$\gamma_1,\gamma_2\in \Gamma_{-}$. The 1-morphism
$k^{*}\mathcal{A}_k^{\prime\dagger} \circ \mathcal{A}'_k$
has a surjective submersion $\omega:W\rightarrow
Y\times_{M'}Y$ for $W=Y\times_{M'}Y_k\times_{M'}Y$. Upon the identification
\begin{equation*}
W \cong \bigsqcup_{\gamma_1,\gamma_2
\in \Gamma_{-}} Z^{\gamma_1,\gamma_2}\text{,}
\end{equation*}
$\omega$ is induced by the natural maps $Z^{\gamma_1,\gamma_2}\rightarrow
Z^{\gamma_1\gamma_2}$. Over the component $Z^{\gamma_1,\gamma_2}$,
the line bundle of $k^{*}\mathcal{A}_k^{\prime\dagger} \circ \mathcal{A}'_k$
is equal to $\pi_{12}^{*}A_{\gamma_1} \otimes \pi_{23}^{*}A_{\gamma_2}^{*}$
and we define the bundle isomorphism $\varphi'_{k,k}|_{Z^{\gamma_1,\gamma_2}}$
as
\begin{equation*}
\varphi_{\gamma_1,\gamma_2}: \pi_{12}^{*}A_{\gamma_1} \otimes
\pi_{23}^{*}A_{\gamma_2}^{*} \to \pi_{13}^{*}A_{\gamma_1\gamma_2}.
\end{equation*}
Indeed, the 1-isomorphism
$\id_{\mathcal{G}}$ has the line bundle of the bundle gerbe
$\mathcal{G}'$ which is $A_{\gamma_1\gamma_2}$ over
$Z^{\gamma_1\gamma_2}$. The axiom for
$\varphi'_{k,k}$ can be deduced from the commutativity condition for
the 2-isomorphisms $\varphi_{\gamma_1,\gamma_2}$.

Finally, we have to assure that the 2-isomorphism $\varphi'_{k,k}$
satisfies equation (\ref{40}) for Jandl structures.
To see this, we have to express the natural 2-isomorphisms
$\rho_{\mathcal{A}}$ and $\lambda_{\mathcal{A}}$ by the given
2-isomorphisms $\varphi_{\gamma_1,\gamma_2}$. According to their
definition,
 we find $\rho_{\mathcal{A}}|_{Z^{\gamma_1,\gamma_2}} = \varphi
_{\gamma_1,\gamma_2}$ and
$\lambda_{\mathcal{A}}|_{Z^{\gamma_1,\gamma_2}} =
\varphi_{\gamma_2,\gamma_1}$ for $\gamma_1\in \Gamma_{-}$ and
$\gamma_2\in \Gamma_0$. Then, equation (\ref{40}) reduces to the
commutativity condition for the 2-isomorphisms
$\varphi_{\gamma_1,\gamma_2}$. This completes
the definition of the Jandl structure $\mathcal{J}'$ on the quotient
bundle gerbe $\mathcal{G}'$.

The second step in the proof of Theorem \ref{th2} is to
demonstrate that the procedure
described above is well-defined on equivalence classes. For this
purpose, we show that an equivariant 1-morphism
\begin{equation*}
(\mathcal{B},\eta_{\gamma}): (\mathcal{G}^a,\mathcal{J}^a) \to (\mathcal{G} ^b,\mathcal{J}^b)
\end{equation*}
between $(\Gamma,\epsilon)$-equivariant bundle gerbes over $M$
induces a 1-morphism $\mathcal{B}'$ between the quotient bundle
gerbes $\mathcal{G}^{a\,\prime}$ and $\mathcal{G}^{b\,\prime}$. We
may assume again that the 1-morphism $\mathcal{B}$ is descended.
Then, it consists of a line bundle $B$ over  $Z:=Y^a \times_M Y^b$,
and of an isomorphism $\beta$ of line bundles over $Z \times_M Z$.
The additional 2-isomorphisms correspond to bundle isomorphisms
\begin{equation*}
\eta_\gamma: \pi_{12}^{*}A^a_{\gamma} \otimes \pi_{24}^{*}B^{\epsilon(\gamma)} \to \pi_{13}^{*}B \otimes \pi_{34}^{*}A^b_{\gamma}\text{.}
\end{equation*}
of line bundles over $Y^a \times_M Y^a_{\gamma} \times_M Y^b
\times_M Y^b_{\gamma}$, see (\ref{34}).

The  quotient 1-morphism $\mathcal{B}'$ is defined as follows. Its
surjective submersion is the disjoint union $\tilde Z$ of $\tilde
Z^{\gamma} := Z \times_{Y^b} (Z^b)^{\gamma}\cong Y^a\times_MY^b
\times_MY^b_\gamma$ over all
$\gamma\in \Gamma_0$, together with the projection
\begin{equation*}
\pi_{13}: \tilde Z^{\gamma} \to Y^a \times_M Y^b_{\gamma}
\end{equation*}
whose codomain is the $\gamma$-component of the fibre product of the surjective submersions of the two quotient bundle
gerbes. Its line bundle $B'$ is defined as
$B'|_{\tilde Z^\gamma}\equiv B'_\gamma := \pi_{12}^{*}B \otimes \pi_{23}^{*}A^b_{\gamma}$, which
has the correct curvature:
\begin{multline*}
\mathrm{curv}(B'_\gamma) = \pi_{12}^*\mathrm{curv}(B) + \pi_{23}^*
\mathrm{curv}(A^b_{\gamma})\\ = \pi_2^{*}C^b
- \pi_1^{*}C^a + \pi_3^{*}C^b - \pi_2^{*}C^b
= \pi_3^{*}C^b - \pi_1^{*}C^a\text{.}
\end{multline*}
In order to define the isomorphism
of $\mathcal{B}'$, we have to consider the fibre product
\begin{equation*}
\tilde Z \times_{M'} \tilde Z \cong
\bigsqcup_{\gamma,\gamma',\gamma'' \in \Gamma_0} Y^a \times_M Y^b
\times_M Y^b_{\gamma} \times_M Y^a_{\gamma\gamma'} \times_M
Y^b_{\gamma\gamma'} \times_M Y^b_{\gamma\gamma'\gamma''}
\end{equation*}
and set
\begin{equation*}
\alxydim{@R=1.2cm@C=0.6cm}{\pi_{14}^{*}A^a_{\gamma\gamma'} \otimes \pi_{456}^{*}
B_{\gamma''} \ar@{=}[r]
& \pi_{14}^{*}A^a_{\gamma\gamma'}
\otimes \pi_{45}^{*}B \otimes \pi_{56}^{*}A^b_{\gamma''}
\ar[d]^{\pi_{1245}^{*}\eta_{\gamma\gamma'} \otimes  \id} \\ &
\pi_{12}^{*}B \otimes \pi_{25}^{*}A^b_{\gamma\gamma'} \otimes
\pi_{56}^{*}A^b_{\gamma''} \ar[d]^{\id \otimes
\pi_{256}^{*}\varphi^b_{\gamma\gamma',\gamma''}} \\ & \pi_{12}^{*}B
\otimes \pi_{26}^{*}A^b_{\gamma\gamma'\gamma''} \ar[d]^{\id \otimes
\pi_{236}^{*}\varphi_{\gamma,\gamma'\gamma''}^{b-1}}
\\ & \pi_{12}^{*}B \otimes \pi_{23}^{*}A^b_\gamma \otimes \pi_{36}^{*}A^b_{\gamma'\gamma''} \ar@{=}[r]
& \pi_{123}^{*} B'_\gamma \otimes \pi_{36}^{*}A^b_{\gamma'\gamma''}\text{.}}
\end{equation*}
This isomorphism satisfies axiom
(1M2) due to the
commutativity of the diagram for the isomorphisms $\eta_{\gamma}$
from Definition \ref{def4} and the one for the
$\varphi^b_{\gamma_1,\gamma_2}$ from Definition \ref{def1}.

In the case of $\varepsilon$
non-trivial, we enhance the quotient 1-morphism $\mathcal{B}'$ to
an equivariant 1-morphism
\begin{equation*}
(\mathcal{B}',\eta_{k}'): (\mathcal{G}^{a\prime},\mathcal{J}^{a\prime}) \to (\mathcal{G}^{b\,\prime},\mathcal{J}^{b\,\prime})
\end{equation*}
between Jandl gerbes over $M'$. To this end, we have
to define the 2-isomorphism
\begin{equation}
\label{48}
\eta'_k: k^{*}\mathcal{B}'^{\dagger} \circ \mathcal{A}'^a_k \Rightarrow \mathcal{A}'^b_k \circ \mathcal{B}'
\end{equation}
for $k$ the non-trivial group element of $\Gamma'\cong\Z_2$, and
$\mathcal{A}^{\prime a}_k$ and $\mathcal{A}_k^{\prime b}$ the
1-isomorphisms \erf{47} of the quotient Jandl structures on
$\mathcal{G}^{a\prime}$ and $\mathcal{G}^{b\prime}$, respectively.
Collecting all definitions, we establish that the 1-morphism
$k^{*}\mathcal{B}^{\prime\dagger} \circ \mathcal{A}^{\prime a}_k$ on
the left has the submersion $Z^{l} := P^a \times_{Y^a_k} \tilde Z_k$
which, for $\tilde Z\cong\bigsqcup_{\gamma\in\Gamma_0}Y^a
\times_M Y^b\times_M Y^b_{\gamma}$, becomes a disjoint union over the
fibre products $Y^a \times_M Y^a_{\gamma'} \times_{M} Y_{\gamma'}^b
\times_M Y^{b}_{\gamma''}$ for $\gamma',\gamma''\in\Gamma_{-}$. Over
this space, it has a line bundle $A^{l}$ defined componentwise as
$\pi_{12}^{*}A_{\gamma'}^a \otimes \pi_{23}^{*}B^{*} \otimes
\pi_{34}^{*}(A^b_{\gamma'^{-1}\gamma''})^{*}$. The 1-morphism
$\mathcal{A}_k^{\prime b} \circ \mathcal{B}'$ on the right has the
submersion $Z^r := \tilde Z \times_{Y^{b}} P^b$ which is the
disjoint union of the fibre products $Y^a \times_M Y^b \times_M
Y^b_{\gamma} \times Y^b_{\gamma''}$, with $\gamma\in\Gamma_0$ and
$\gamma''\in\Gamma_-$, and over that, a line bundle $A^r$ defined
componentwise as $\pi_{12}^{*}B \otimes \pi_{23}^{*}A^b_{\gamma}
\otimes \pi_{34}^{*}A_{\gamma^{-1}\gamma''}^b$. The 2-morphism
\erf{48} is now defined on the surjective submersion
\begin{equation*}
W := \bigsqcup_{\gamma\in\Gamma_{0},\gamma',\gamma''\in \Gamma_-}
Y^a \times_M Y_{\gamma'}^a \times_M Y^b \times_M Y^b_{\gamma'} \times_M
Y^{b}_{\gamma}\times_M Y^b_{\gamma''}
\end{equation*}
with the  projections $\pi_{1246}$ to $Z^l$ and $\pi_{1356}$ to
$Z^r$. We then declare the following isomorphism between the
pullbacks of $A^l$ and $A^r$ to $W$:
\begin{equation*}
\alxydim{@R=1.2cm}{\pi_{1246}^*A^l  \ar@{=}[r] & 
\pi_{12}^{*}A_{\gamma'}^a \otimes \pi_{24}^{*}B^{*}
\otimes \pi_{46}^{*}(A^b_{\gamma'^{-1}\gamma''})^{*}
 \ar[d]^{\pi_{1234}^{*}\eta_{\gamma'} \otimes \id}
& \\ &
\pi_{13}^{*}B\otimes\pi_{34}^*A_{\gamma'}^b \otimes
\pi_{46}^{*}(A^b_{\gamma'^{-1}\gamma''})^{*}
\ar[d]^{\id\otimes\pi_{346}^*\varphi_{\gamma',\gamma'^{-1}\gamma''}^b}
& \\ &
\pi_{13}^{*}B\otimes\pi_{36}^*A^b_{\gamma''}
\ar[d]^{\id\otimes\pi_{356}^*(\varphi_{\gamma,\gamma^{-1}\gamma''}^b){}^{-1}}
& \\ &
\pi_{13}^{*}B\otimes\pi_{35}^*A^b_{\gamma}\otimes
\pi_{56}^*A^b_{\gamma^{-1}\gamma''} \ar@{=}[r] &
\pi^*_{1356}A^r\textrm{.}}
\end{equation*}
%\end{definition}
It involves the isomorphism $\eta_{\gamma'}$ belonging to the
equivariant 1-morphism, see \erf{34}. Since $\gamma' \in \Gamma_{-}$
here, we have, by now, used all the structure of
$(\mathcal{B},\eta_{\gamma})$. It is straightforward to check that
these isomorphisms satisfy the compatibility axiom and make the
diagram \erf{17} commutative.

We have, so far, obtained a well-defined map
\begin{equation*}
q\;\;: \alxy{\left \lbrace \txt{Equivalence classes\\of
$(\Gamma,\epsilon)$-equivariant\\bundle gerbes over $M$} \right
\rbrace \ar[r] & \left \lbrace \txt{Equivalence classes\\of
$(\Gamma',\varepsilon')$-equivariant\\bundle gerbes over $M'$}
\right \rbrace\text{.}}
\end{equation*}
In order to finish the proof of Theorem
\ref{th2}, we now show that $q$ is surjective and injective. Let us
start with the observation that the pullback
$\mathcal{G}:=p^{*}\mathcal{G}'$ of any bundle gerbe $\mathcal{G}'$
over $M'$ along the projection $p:M \to M'$ has a canonical
$\Gamma_0$-equivariant structure. This comes from the fact that
$\mathcal{G} =\gamma\act\mathcal{G}$ for all $\gamma\in \Gamma_0$,
so that $\mathcal{A}_{\gamma} := \id_{\mathcal{G}}$ and
\begin{equation*}
\varphi_{\gamma_1,\gamma_2}:=\lambda_{\id_{\mathcal{G}}} = \rho_{\id_{\mathcal{G}}}: \id_{\mathcal{G}} \circ \id_{\mathcal{G}} \Rightarrow \id_{\mathcal{G}}
\end{equation*}
define a $\Gamma_0$-equivariant structure.

Let us have a closer look at this $\Gamma_0$-equivariant
structure. To this end, notice that $\mathcal{G}$ has the
surjective submersion $Y:= M \times_{M'} Y' \to M$ whose fibre
products have, each, an obvious projection $Y^{[k]}\to
Y'^{[k]}$. The line bundle $L$ and the isomorphism $\mu$ of
$\mathcal{G}$ are pullbacks of $L'$ and $\mu'$ along this
projection. The 1-isomorphisms $A_{\gamma}$ have the surjective
submersion $Z^{\gamma} \cong Y^{[2]}$ and the line bundles
$A_{\gamma} := L$, see Example \ref{ex3}. The 2-isomorphisms
$\varphi_{\gamma_1,\gamma_2}$ have the surjective submersions
$Z^{\gamma_1,\gamma_2} \cong Y^{[3]}$ and the isomorphisms
$\varphi_{\gamma_1,\gamma_2} := \mu$.

We can, next, pass to the quotient bundle gerbe $\mathcal{G}''$. Its
surjective submersion is $Y'':=Y$ whose fibre products $Y''^{[k]} =
\Gamma_0^{k-1}\times Y^{[k]}$ come, each, with an obvious
projection to $Y^{[k]}$. In fact, 
the line bundle $L''$ and the multiplication $\mu''$ for
$\mathcal{G}''$ are pullbacks of $L$ and $\mu$, respectively, along
these projections. Summarizing, the quotient bundle gerbe
$\mathcal{G}''$ has the structure of $\mathcal{G}'$ pulled back
along the composed projections $Y''^{[k]}\to Y'^{[k]}$, which are
fibre-preserving. It is well-known that such bundle gerbes are
isomorphic.

Let, now, $\mathcal{J}'$ be a Jandl
structure on $\mathcal{G}'$ consisting of a line bundle $A'_k$ over
$Z' := Y' \times_{M'} Y'_{k}$, and of an isomorphism $\alpha$
of line bundles over $Z'^{[2]}$. To the above canonical
$\Gamma_0$-equivariant structure, we add
1-isomorphisms $\mathcal{A}_{\gamma}: \mathcal{G} \to
\gamma\act\mathcal{G}$ for $\gamma \in \Gamma_{-}$. The relevant
fibre product $Z^{\gamma} = Y \times_M Y_{\gamma}$ projects to $Z'$,
so that the line bundle $A_{\gamma}$ is defined as the pullback of
$A'_k$ along this projection. Similarly, the isomorphism $\alpha'$
pulls back to the isomorphism of $\mathcal{A}_{\gamma}$. The Jandl
structure $\mathcal{J}'$ also contains a 2-isomorphism
$\varphi'_{k,k}$, which naturally pulls back to the 2-isomorphisms
$\varphi_{\gamma_1,\gamma_2}$ that we need to complete the
definition of a canonical $(\Gamma,\varepsilon)$-equivariant
structure $\mathcal{J}$ on the pullback bundle gerbe $\mathcal{G}$.
We conclude, along the lines of the above discussion, that 
$(\mathcal{G},\mathcal{J})$ descends to a Jandl
gerbe over $M'$ which is equivariantly isomorphic to
$(\mathcal{G}',\mathcal{J}')$.

It remains to prove that the map $q$ is injective. Thus, we assume 
that two $(\Gamma,\varepsilon)$-equivariant
bundle gerbes $(\mathcal{G}^a,\mathcal{J}^a)$ and
$(\mathcal{G}^b,\mathcal{J}^b)$ descend to a pair of
equivariantly isomorphic Jandl gerbes
$(\mathcal{G}^{a\prime},\mathcal{J}^{a\prime})$ and
$(\mathcal{G}^{b\prime},\mathcal{J}^{b\prime})$ over $M'$. Let
$(\mathcal{B}',\eta'_k)$ be an
equivariant 1-isomorphism between the latter Jandl gerbes. 
If we assume it to be descended,
$\mathcal{B}'$ is based on the fibre product
\begin{equation*}
Z' = Y^{a\prime} \times_{M'} Y^{b\prime} \cong \bigsqcup_{\gamma\in \Gamma_0} Y^a \times_M Y^b_{\gamma}
\end{equation*}
and the line bundle $B'$ over $Z'$ has
components $B'_{\gamma}$. Its isomorphism is
\begin{equation}
\label{60}
\beta'_{\gamma_1,\gamma_2,\gamma_3}:
\pi_{13}^{*}A^a_{\gamma_1\gamma_2} \otimes \pi_{34}^{*}B'_{\gamma_3}
\to \pi_{12}^{*}B'_{\gamma_1} \otimes
\pi_{24}^{*}A^b_{\gamma_2\gamma_3}\text{.}
\end{equation}
The additional 2-isomorphism $\eta'_{k}$ is, in the notation of
Section \ref{sec4}, an isomorphism of line bundles over
\begin{equation*}
Z_1^{\prime k} \times_{P'} Z_2^{\prime k} \cong
\bigsqcup_{\gamma_1,\gamma_2,\gamma_3\in\Gamma_{-}} Y^a
\times_M Y_{\gamma_1}^a \times_{M} Y^b_{\gamma_1\gamma_2}
\times_M Y^b_{\gamma_1\gamma_2\gamma_3}\text{,}
\end{equation*}
with components
\begin{equation}
\label{61} (\eta'_k)_{\gamma_1,\gamma_2,\gamma_3}:
\pi_{12}^{*}A^a_{\gamma_1} \otimes
\pi_{24}^{*}B'^{*}_{\gamma_2\gamma_3} \to \pi_{13}^{*}
B'_{\gamma_1\gamma_2} \otimes \pi_{34}^{*}A^b_{\gamma_3}\,,
\end{equation}
according to \erf{34}. Let us, now, construct an equivariant
1-isomorphism
\begin{equation*}
(\mathcal{B},\eta_{\gamma}): (\mathcal{G}^a,\mathcal{J}^a)
\to (\mathcal{G}^b,\mathcal{J}^b)
\end{equation*}
out of this structure. Its existence will prove
that the map $q$ is injective.

\label{67}

The 1-isomorphism $\mathcal{B}: \mathcal{G}^a \to \mathcal{G}^b$ has
the surjective submersion $Y^a \times_M Y^b$ and we take the line
bundle $B'_1$ over that space as its line bundle $B$ (all the other
line bundles $B'_{\gamma}$ with $\gamma\neq 1$
are to be ignored). Similarly, the isomorphism $\beta'_{1,1,1}$
serves as the isomorphism $\beta$ of $\mathcal{B}$. It remains to
construct the 2-isomorphisms $\eta_{\gamma}$. For
$\gamma\in\Gamma_{-}$, these we define them as the
isomorphisms
\begin{equation*}
(\eta'_{k})_{\gamma,\gamma^{-1},\gamma}: \pi_{12}^{*}A^a_{\gamma}
\otimes \pi_{24}^{*}B'^{*}_{1} \to \pi_{13}^{*}B'_{1} \otimes
\pi_{34}^{*}A^b_{\gamma} \end{equation*} from \erf{61}, while for
$\gamma\in \Gamma_0$, as the
isomorphisms
\begin{equation*}
\beta'_{1,\gamma,1}: \pi_{13}^{*}A^a_{\gamma} \otimes \pi_{34}^{*}B'_{1}
\to \pi_{12}^{*}B'_{1} \otimes \pi_{24}^{*}A^b_{\gamma}
\end{equation*}
from \erf{60} (pulled back along the map that exchanges the second
and the third factor). Finally, all the relations that these
isomorphisms should obey are readily seen to be satisfied.

\section{Twisted-equivariant Deligne Cohomology}

\label{sec3}

In this section, we relate the geometric theory developed in
Sections \ref{sec1} and \ref{sec2} to its cohomological counterpart
introduced in \cite{gawedzki6}.

\subsection{Local Data}

\label{sec3_1}

Let $(\Gamma,\epsilon)$ be an orientifold
group for $M$, and let $\mathcal{G}$
be a bundle gerbe over $M$ with
$(\Gamma,\epsilon)$-equivariant  structure $\mathcal{J}$,
 consisting
of 1-isomorphisms $\mathcal{A}_{\gamma}$ and of 2-isomorphisms
$\varphi_{\gamma_1,\gamma_2}$. We assume that there is a covering
$\mathfrak{O}=\lbrace O_i \rbrace_{i\in I}$ and a left action of
$\Gamma$ on the index set $I$ such that $\gamma(O_i)=O_{\gamma i}$,
and such that there exist sections $s_i: O_i \to Y$. We define
\begin{equation*}
M_{\mathfrak{O}}:=\bigsqcup_{i\in
I} O_i
\end{equation*}
and construct the smooth map
\begin{equation*}
s:M_{\mathfrak{O}}
\to Y:(x,i)\mapsto s_i(x)\text{.}
\end{equation*}
There are induced maps $s: M_{\mathfrak{O}}^{[k]} \to Y^{[k]}$ on
all fibre products, where $M_{\mathfrak{O}}^{[k]}$ is just the
disjoint union of all non-empty $k$-fold intersections
$O_{i_1...i_k}:= O_{i_1} \cap ... \cap O_{i_k}$ of open sets in
$\mathfrak{O}$. They may be used to pull back the line bundle $L$,
the 2-form $C$ and the isomorphism $\mu$ of $\mathcal{G}$. Choose,
now, sections $\sigma_{ij}:O_{ij} \to s^{*}L$ (of unit length) and
define smooth functions $g_{ijk}: O_{ijk}\to U(1)$ by
\begin{equation*}
s^{*}\mu(\sigma_{ij} \otimes \sigma_{jk})
= g_{ijk} \cdot \sigma_{ik}\text{,}
\end{equation*}
extract local connection 1-forms $A_{ij}\in \Omega^1(O_{ij})$ such that
\begin{equation*}
s^*\nabla\sigma_{ij}=\tfrac{1}{\mathrm{i}}A_{ij}\,\sigma_{ij}\,,
\end{equation*}
where $\nabla$ stands for the covariant derivative,
and define 2-forms $B_i := s_i^{*}C \in
\Omega^2(O_i)$. Axiom (G1) gives the equation
\begin{equation}
\label{151}
\mathrm{d}A_{ij}=B_j - B_i\quad\text{ on }O_{ij}\text{.}
\end{equation}
Since $\mu$
preserves connections, one obtains
\begin{equation}
\label{152}
A_{ij}-A_{ik}+A_{jk}=\mathrm{i}g_{ijk}^{-1}\mathrm{d}g_{ijk}\quad\text{ on }O_{ijk}\text{,}
\end{equation}
and axiom (G2) infers the cocycle condition
\begin{equation}
\label{153} g_{ijl}\cdot g_{jkl}=g_{ikl} \cdot g_{ijk}\quad\text{ on
}O_{ijkl}\text{.}
\end{equation}
One can always choose the sections
$\sigma_{ij}$ such that $A_{ij}$ and
$g_{ijk}$ have the antisymmetry property
$A_{ij}=-A_{ji}$ and
$g_{ijk}=g_{jik}^{-1}=g_{ikj}^{-1}=g_{kji}^{-1}$.

We continue by extracting local data of the 1-isomorphisms
\begin{equation*}
\mathcal{A}_{\gamma}: \mathcal{G} \to \gamma\act\mathcal{G}
\end{equation*}
consisting, each, of a line bundle $A_{\gamma}$ over $Z^{\gamma}=Y
\times_M Y_{\gamma}$ and of an isomorphism
\begin{equation*}
\alpha_{\gamma}: \pi_{13}^{*}L
\otimes \pi_{34}^{*}A_{\gamma} \to \pi_{12}^{*}A_{\gamma}
\otimes \pi_{24}^{*}L^{\epsilon(\gamma)}
\end{equation*}
of line bundles over $Z^{\gamma} \times_M Z^{\gamma}$, see the
discussion in Section \ref{sec4}. Note that $s_{\gamma^{-1} i} \circ
\gamma^{-1}: O_i \to Y_{\gamma}$  is a section into $Y_{\gamma}$,
\,i.e. $\pi_{\gamma} \circ (s_{\gamma^{-1} i} \circ
\gamma^{-1})=\id_{O_{i}}$, and so we get the map
$s_{\gamma}:M_{\mathfrak{O}} \to Y_{\gamma}:(x,i) \mapsto
s_{\gamma^{-1} i}(\gamma^{-1}(x))$ compatible with the projections
to $M$.
 Note that also  $\sigma_{\gamma^{-1} i\,\gamma ^{-1}j} \circ
 \gamma^{-1}:O_{ij}\to s_{\gamma}^{*}L$ is a
 section. Furthermore, we have a mixed map
\begin{equation}
z_{\gamma}:M_{\mathfrak{O}}\to
Z^{\gamma}:(x,i) \mapsto (s_i(x),s_{\gamma^{-1} i}(\gamma^{-1}(x)))
\label{10}
\end{equation}
into the space $Z^{\gamma}$.
Note that $\pi_{1} \circ z_{\gamma}=s$
and $\pi_{2} \circ z_{\gamma}=s_{\gamma}$, so
that the pullback of $\alpha_{\gamma}$
along $z_{\gamma}$ is an isomorphism
\begin{equation*}
z_{\gamma}^{*}\alpha_{\gamma}:s^{*}L
\otimes \pi_{2}^{*}z_{\gamma}^{*}A_{\gamma} \to \pi_{1}^{*}z_{\gamma}^{*}A_{\gamma}
\otimes s_{\gamma}^{*}L^{\epsilon(\gamma)}
\end{equation*}
of line bundles over  $M_{\mathfrak{O}} \times_M M_{\mathfrak{O}}$
(the two maps $\pi_1,\pi_2$ above are the canonical projections
from $M_{\mathfrak{O}} \times_M M_{\mathfrak{O}}$ to
$M_{\mathfrak{O}}$). Choose new unit-length sections
$\sigma^{\gamma}_i:O_i \to
z_\gamma^{*}A_{\gamma}$ and obtain local connection 1-forms
$\Pi_i^{\gamma}\in \Omega^1(O_i)$, as well as smooth functions
$\chi^{\gamma}_{ij}:O_{ij} \to U(1)$ by the relation
\begin{equation*}
z_{\gamma}^{*}\alpha_{\gamma}(\sigma_{ij} \otimes \sigma^{\gamma}_j)=\chi^{\gamma}_{ij}\cdot
 (\sigma^{\gamma}_i
 \otimes (  \sigma^{\epsilon(\gamma)}_{\gamma ^{-1}i\gamma^{-1}j} \circ
 \gamma^{-1}) )\text{.}
\end{equation*}
In order to simplify the notation in the following discussion, we write
\begin{equation}
\label{54}
\gamma f_i :=( (\gamma^{-1})^{*}f_{\gamma^{-1} i})^{\epsilon(\gamma)}
\quad\text{ and }\quad
\gamma\Pi_i := \epsilon(\gamma)(\gamma^{-1})^{*}\Pi_{\gamma^{-1}i}
\end{equation}
for $U(1)$-valued functions $f_i$ and 1-forms $\Pi_i$,
respectively, and likewise for components of generic $p$-form-valued
\v Cech cochains encountered below.
%\begin{equation}
%\label{eptwist} \gamma f_{i_1 i_2\ldots i_n}=(
%(\gamma^{-1})^{*}f_{\gamma^{-1} i_1\, \gamma^{-1} i_2 \ldots
%\gamma^{-1} i_n})^{\epsilon(\gamma)} \quad\text{ and }\quad
%\gamma\omega_{i_1 i_2\ldots i_n} :=
%\epsilon(\gamma)(\gamma^{-1})^{*}\omega_{\gamma^{-1} i_1\,
%\gamma^{-1} i_2 \ldots \gamma^{-1} i_n}
%\end{equation}
In this notation, axiom
(1M1) gives
\begin{equation}
\label{154}
\gamma B_i -B_i = \mathrm{d}\Pi_{i}^{\gamma}\text{.}
\end{equation}
Since $\alpha_{\gamma}$ preserves
connections, one obtains
\begin{equation}
\label{11} \gamma A_{ij} -A_{ij} = \Pi_j^{\gamma} - \Pi_i^{\gamma}
-\mathrm{i}\chi_{ij}^{\gamma-1}\mathrm{d}\chi_{ij}^{\gamma}\textrm{,}
\end{equation}
and axiom (1M2) gives
\begin{equation}
\gamma g_{ijk} \cdot g_{ijk}^{-1}
=\chi_{ij}^{\gamma-1} \cdot \chi_{ik}^{\gamma} \cdot \chi_{jk}^{\gamma-1}\text{.}
\label{12}
\end{equation}
One can, again, choose the sections such that
$\chi^{\gamma}_{ij}=(\chi_{ji}^{\gamma})^{-1}$.

Finally, we extract local data from the
2-isomorphisms $\varphi_{\gamma_1,\gamma_2}$. Consider the space
$Z^{\gamma_1,\gamma_2}=Y \times _M Y_{\gamma_1} \times_M
Y_{\gamma_1\gamma_2}$ and the isomorphism
\begin{equation*}
\varphi_{\gamma_1,\gamma_2}:\pi_{12}^{*}A_{\gamma_1}
\otimes \pi_{23}^{*}A_{\gamma_2}^{\epsilon(\gamma_1)}
\to \pi_{13}^{*}A_{\gamma_1\gamma_2}
\end{equation*}
of line bundles over $Z^{\gamma_1,\gamma_2}$.
We use the map
\begin{eqnarray*}
z_{\gamma_1,\gamma_2}:M_{\mathfrak{O}}
&\to& Z^{\gamma_1,\gamma_2}\\(x,i)&\mapsto&
(s_{i}(x),s_{\gamma_1
^{-1}i}(\gamma_1^{-1}(x)),s_{\gamma_2^{-1}\gamma_1^{-1}i}(\gamma_2^{-1}(\gamma_1^{-1}(x))))
\end{eqnarray*}
to pull back the isomorphism $\varphi_{\gamma_1,\gamma_2}$ to
$M_{\mathfrak{O}} $. Note that $\pi_{12}\circ
z_{\gamma_1,\gamma_2}=z_{\gamma_1}$, $\pi_{23}\circ
z_{\gamma_1,\gamma_2}=z_{\gamma_2} \circ \gamma_1^{-1}$ and
$\pi_{13} \circ z_{\gamma_1,\gamma_2}=z_{\gamma_1\gamma_2}$. Hence,
we may use the sections $\sigma_{i}^{\gamma}:O_i \to
z_{\gamma}^{*}A_\gamma$ to extract smooth
functions $f^{\gamma_1,\gamma_2}_i:O_i \to U(1)$ by the relation
\begin{equation*}
z_{\gamma_1,\gamma_2}^{*}\varphi_{\gamma_1,\gamma_2}(\sigma_i^{\gamma_1}
\otimes (\sigma^{\gamma_2}_{\gamma_1^{-1}i}\circ
\gamma_1^{-1})^{\epsilon(\gamma_1)} )=f^{\gamma_1,\gamma_2}_i\cdot
\sigma_{i}^{\gamma_1\gamma_2}\text{.}
\end{equation*}
From the requirement
that $\varphi_{\gamma_1,\gamma_2}$ respect connections, it
follows that
\begin{equation}
\gamma_1\Pi^{\gamma_2}_i
-\Pi^{\gamma_1\gamma_2}_i+\Pi^{\gamma_1}_i=\mathrm{i}(f^{\gamma_1,\gamma_2}_i)^{-1}\mathrm{d}
f_i^{\gamma_1,\gamma_2} \label{14}
\end{equation}
The compatibility condition (\ref{8}) for the 2-morphism $\varphi_{{\gamma_1,\gamma_2}}$ becomes
\begin{equation}
\gamma_1\chi^{\gamma_2}_{ij}\cdot(\chi^{\gamma_1\gamma_2}_{ij})^{-1}\cdot
\chi^{\gamma_1}_{ij}=(f_i^{\gamma_1,\gamma_2})^{-1}\cdot
f^{\gamma_1,\gamma_2}_j\textrm{,} \label{15}
\end{equation}
and the condition imposed on the morphisms
$\varphi_{\gamma_1,\gamma_2}$ in Definition
\ref{def1}, equivalent to the commutativity of diagram (\ref{7}), reads
\begin{equation}
\gamma_1 f_{i}^{\gamma_2,\gamma_3} \cdot (f_i^{\gamma_1\gamma_2,\gamma_3})^{-1}
\cdot f_i^{\gamma_1,\gamma_2\gamma_3} \cdot (f_i^{\gamma_1,\gamma_2})^{-1}=1\text{.}
\label{13}
\end{equation}

Summarizing, the bundle gerbe has local data
$c:=(B_i,A_{ij},g_{ijk})$, and the $(\Gamma,\epsilon)$-equivariant
structure has local data
$b_{\gamma}:=(\Pi^{\gamma}_i,\chi^{\gamma}_{ij})$ and
$a_{\gamma_1,\gamma_2} := (f_i^{\gamma_1,\gamma_2})$. If the
$(\Gamma,\epsilon)$-equivariant
 structure is normalized, the mixed map $z_1:M_{\mathfrak{O}}
\to Z^1=Y^{[2]}$ defined in (\ref{10}) is $z_1=\Delta \circ s$, so
we may choose $\sigma_i^1:=\Delta^{*}\sigma_{ii}$ and obtain
$\Pi^1_i=A_{ii}=0$. With this choice, we find,
for $\alpha_1=(1 \otimes \pi_{124}^{*}\mu^{-1}) \circ
(\pi_{134}^{*}\mu \otimes 1)$, the local datum
$\chi^{1}_{ij}=g_{iij}^{-1}\cdot g_{ijj}=1$. For
$\varphi_{1,\gamma}$, we get
$f^{1,\gamma}_i=(\chi^{\gamma}_{ii})^{-1}g_{iii}=1$, and,
analogously, $f^{\gamma,1}_i=\chi^{\gamma}_{ii}g_{iii}=1$. Finally,
$f^{1,1}_{i}=g_{iii}=1$.

Let us now extract local data of an equivariant 1-morphism
\begin{equation*}
(\mathcal{B},\eta_{\gamma}): (\mathcal{G},\mathcal{J}) \to (\mathcal{G}',\mathcal{J}')
\end{equation*}
between two $(\Gamma,\epsilon)$-equivariant bundle gerbes over $M$.
We may choose an open covering $\mathfrak{O}$ of $M$ with an action
of $\Gamma$ on its index set as above, such that it admits sections
$s_i:O_i \to Y$ and $s_i':O_i \to Y'$ for both bundle gerbes. The
1-morphism $\mathcal{B}$ provides a vector bundle $B$ over $Z:=Y
\times_M Y'$ of some rank $n$. Generalizing the mixed map
(\ref{10}), we have a map
\begin{equation}
\label{36}
z: M_{\mathfrak{O}} \to Z: (x,i) \mapsto (s_i(x),s'_i(x))\text{.}
\end{equation}
Just as above, we choose an orthonormal frame of sections 
$\sigma^a_i: O_i \to z^{*}B$, $\,a=1,\dots,n$, 
and obtain local connection 1-forms
$\Lambda_i\in\Omega^1(O_i,\mathfrak{u}(n))$ with values in the
set of hermitian $(n\times n)$-matrices, alongside
smooth functions $G_{ij}: O_{ij} \to U(n)$
defined by the relation
\begin{eqnarray}
(z^*\nabla)\sigma_i^a&=&\frac{_1}{^\mathrm{i}}
(\Lambda_i)_b^{\ a}\,\sigma_i^b\cr
z^{*}\beta(\sigma_{ij} \otimes \sigma_j^a)&=&(G_{ij})_b^{\ a}\, 
(\sigma_i^b\otimes \sigma_{ij}')\text{,}
\nonumber
\end{eqnarray}
where $\beta$ is the
isomorphism of $\mathcal{B}$ over $Z \times _M Z$ and $\sigma_{ij}$
and $\sigma_{ij}'$ are the sections chosen to
extract local data of the two bundle gerbes $\mathcal{G}$ and
$\mathcal{G}'$. Analogously to equations (\ref{154}), (\ref{11}) and
(\ref{12}), we have
\begin{eqnarray}
B_{i}'-B_i &=& \frac{_1}{^n}\mathrm{tr}(
\mathrm{d}\Lambda_{i})\textrm{,}
\nonumber\\
A'_{ij} -A_{ij} &=&   \Lambda_j  -G_{ij}^{-1}\cdot \Lambda_i\cdot
G_{ij} -\mathrm{i}G_{ij}^{-1}\mathrm{d}G_{ij}\textrm{,} \label{39}
\\\nonumber
g_{ijk}'\cdot g_{ijk}^{-1}
&=& G_{ik} \cdot G_{jk}^{-1} \cdot G_{ij}^{-1} \text{.}
\end{eqnarray}
Again, the sections $\sigma_i^{a}$ can be chosen such that
$G_{ij}=G_{ji}^{-1}$. The 2-isomorphisms
$\eta_{\gamma}:\gamma\act\mathcal{B} \circ \mathcal{A}_{\gamma}
\Rightarrow \mathcal{A}'_{\gamma} \circ \mathcal{B}$ are,
following the discussion in Section \ref{sec4}, isomorphisms
\begin{equation*}
\eta_{{\gamma}}: \pi_{12}^{*}A_{\gamma} \otimes \pi_{24}^{*}B^{\epsilon(\gamma)}\to \pi_{13}^{*}B\otimes \pi_{34}^{*}A'_{\gamma}
\end{equation*}
of vector bundles over $Z_1^{\gamma} \times_{P} Z^{\gamma}_2 \cong Y
\times_M Y_{\gamma} \times_M Y' \times_M Y_{\gamma}'$, see
(\ref{34}). We compose a map
\begin{equation*}
z_{\gamma}:M_{\mathfrak{O}} \to Z_1^{\gamma} \times_{P} Z^{\gamma}_2
: (x,i) \mapsto (s_i(x),s_{\gamma^{-1}
i}(\gamma^{-1}(x)),s'_i(x),s'_{\gamma^{-1}
i}(\gamma^{-1}(x)))\text{,}
\end{equation*}
 into that space and define smooth functions
$H^{\gamma}_i: O_i \to U(n)$ by
\begin{equation*}
z_{\gamma}^{*}\eta_{\gamma}(\sigma^{\gamma}_i \otimes
(\sigma^a_{\gamma^{-1}i}\circ \gamma^{-1})^{\epsilon(\gamma)}) =
(H^{\gamma}_i)_b^{\ a}\,(\sigma^b_i \otimes \sigma_i'^{\gamma})\text{.}
\end{equation*}
Since $\eta_{\gamma}$ preserves the connections, we obtain
\begin{equation}
\label{42}
\gamma\Lambda_{i} = (H_i^{\gamma})^{-1}\cdot\Lambda_i\cdot 
H_i^{\gamma} +  \mathrm{i} (H_i^{\gamma})^{-1} \mathrm{d}H_i^{\gamma}+ \Pi_i'^{\gamma}-\Pi_i^{\gamma}\text{.}
\end{equation}
The commutativity of diagram (\ref{32}) implies that
\begin{equation}
\label{43}
\gamma G_{ij}=  \chi'^{\gamma}_{ij} \cdot (\chi_{ij}^{\gamma})^{-1} \cdot (H_i^{\gamma})^{-1} \cdot G_{ij} \cdot H_{j}^{\gamma}\text{.}
\end{equation}
Here, we have extended the notation of \erf{54} to
$U(n)$-valued functions and $\mathfrak{u}(n)$-valued 1-forms in the
following way:
\begin{eqnarray}
\gamma\Lambda_i&:=&\begin{cases}(\gamma^{-1})^{*}\Lambda_{\gamma^{-1}i} & \text{if }\epsilon(\gamma)=1 \\
-(\gamma^{-1})^{*}\overline{\Lambda_{\gamma^{-1}i}} & \text{if }\epsilon(\gamma)=-1 \end{cases}
\nonumber\\[-0.2cm]\label{53}
\\[-0.2cm]\nonumber
\gamma G_i&:=&
\begin{cases} (\gamma^{-1})^{*}G_{\gamma^{-1} i} & \text{if } \epsilon(\gamma)=1 \\
 (\gamma^{-1})^{*}\overline{G_{\gamma^{-1} i}} & \text{if }\epsilon(\gamma)=-1
\end{cases}
\end{eqnarray}
with the overbar denoting the complex conjugation. These definitions
coincide with \erf{54} for $n=1$. The commutativity  of
diagram (\ref{33}) leads to
\begin{equation}
\label{41}
H_i^{\gamma_1\gamma_2} \cdot f_i^{\gamma_1,\gamma_2} = f_i'^{\gamma_1,\gamma_2} \cdot H_i^{\gamma_1} \cdot \gamma_1H_{i}^{\gamma_2}\text{.}
\end{equation}
Thus, an equivariant 1-morphism has
local data $\beta:=(\Lambda_i,G_{ij})$ and
$\eta_{\gamma}:=(H_i^{\gamma})$
satisfying (\ref{42}), (\ref{43}) and (\ref{41}).

Finally, for completeness, assume that
$\phi:(\mathcal{B},\eta_{\gamma}) \Rightarrow
(\mathcal{B}',\eta_\gamma')$ is an equivariant
2-morphism, i.e. $\phi:B \to B'$ is a morphism of vector bundles
over $Z$ subject to the two conditions of
Section \ref{sec4}. We use the map $z:M_{\mathfrak{O}} \to Z$ from
(\ref{36}) to pull back $\phi$ and to
extract smooth functions $(U_i)_{a'}^{\ a}$ defined on $O_i$ by
\begin{equation*}
z^{*}\phi(\sigma^a_i) = (U_i)_{a'}^{\ a}\,\sigma'^{a'}_i\text{.}
\end{equation*}
Here, $a=1,\dots,n$ and $a'=1,\dots,n'$, where $n$ and $n'$ are ranks
of the vector bundles $B$ and $B'$, respectively. The condition that
$\phi$ preserves hermitian metrics and the connections yields
\begin{equation}
\label{52}
U^\dagger_i\cdot U_i=1\quad\text{ and }\quad
\Lambda_i = U_i^\dagger\cdot\Lambda'_i\cdot U_i + \mathrm{i}U_i^\dagger
\mathrm{d}U_i\text{.}
\end{equation}
The two remaining
conditions, namely the commutativity of diagrams (\ref{37}) and
(\ref{62}) impose the relations
\begin{equation}
\label{51}
 G_{ij}=U_i^\dagger \cdot G_{ij}'\cdot U_j
\quad\text{ and }\quad
H_i^{\gamma} = U_i^\dagger\cdot H'^{\gamma}_i \cdot \gamma U_{i}\text{.}
\end{equation}

\subsection{Deligne Cohomology-valued Group Cohomology}

It is convenient to put the local data extracted above into the
context of Deligne hypercohomology. We denote  by $\mathcal{U}$ the
sheaf of smooth $U(1)$-valued functions on $M$, and by $\Lambda^{q}$
the sheaf of $q$-forms on $M$. The Deligne
complex in degree 2, denoted by $\mathcal{D}(2)$, is the complex
\begin{equation}
\label{150}
\alxydim{@C=1.5cm}{0 \ar[r] & \mathcal{U} \ar[r]^-{\frac{1}{\mathrm{i}}\mathrm{dlog}}
& \Lambda^1 \ar[r]^{\mathrm{d}} & \Lambda^2}
\end{equation}
of sheaves over $M$. Together with the \v Cech complex of the open
cover $\mathfrak{O}$, it gives a double
complex whose total complex $K(\mathcal{D}(2))$
\begin{equation}
\label{150a}
\alxydim{@C=1.5cm}{0 \ar[r] & A^0 \ar[r]^-{D_0}
& A^1 \ar[r]^{D_1} & A^2 \ar[r]^{D_2} & A^3}
\end{equation}
has the cochain
groups
\begin{eqnarray*}
A^0 &=& C^0(\mathcal{U})\textrm{,} \\
A^1 &=& C^0(\Lambda^1) \oplus C^1(\mathcal{U})\textrm{,} \\
A^2 &=& C^0(\Lambda^2) \oplus C^1(\Lambda^1) \oplus C^2(\mathcal{U})\textrm{,} \\
A^3 &=& C^1(\Lambda^2) \oplus C^2(\Lambda^1) \oplus
C^3(\mathcal{U})\textrm{,}
\end{eqnarray*}
and the differentials
\begin{eqnarray*}
D_0(f_i) &=& (-\mathrm{i}f_i^{-1}\mathrm{d}f_i\;,\; f_j^{-1}\cdot f_i)\textrm{,} \\
D_1(\Pi_i,\chi_{ij}) &=& (\mathrm{d}\Pi_i\;,\;
-\mathrm{i}\chi_{ij}^{-1}\mathrm{d}\chi_{ij}+\Pi_{j}
- \Pi_{i}\;,\;\chi_{jk}^{-1}\cdot \chi_{ik}\cdot \chi_{ij}^{-1})\textrm{,} \\
D_2(B_i,A_{ij},g_{ijk}) &=& (\mathrm{d}A_{ij}-B_{j} + B_i \;,\;
-\mathrm{i}g_{ijk}^{-1}\mathrm{d}g_{ijk}+A_{jk}-A_{ik}+A_{ij}\;,
\nonumber\\
&& \hspace{5cm} g_{jkl}^{-1} \cdot g_{ikl} \cdot g_{ijl}^{-1}\cdot g_{ijk})\textrm{.}
\end{eqnarray*}
The cohomology of this complex is the hypercohomology of the double
complex induced from (\ref{150}). Its
groups are denoted by $\mathbb{H}^k(M,\mathcal{D}(2))$. The local
data $c=(B_i,A_{ij},g_{ijk})$ extracted above from the
bundle gerbe $\mathcal{G}$ are an element of $A^2$, and the properties
(\ref{151}), (\ref{152}) and
(\ref{153}) show  that $D_2(c)=0$. This is
the Deligne cocycle of a bundle gerbe
\cite{gawedzki3}.

In \cite{gawedzki6}, we turned the complex $A^{n}$
into a complex of left $\Gamma$-modules, where the action of
$\Gamma$ is given by \erf{54} and its extension to higher order
forms. Thus, $\gamma c$ are the local data of
$\gamma\act\mathcal{G}$ for the same choice of
sections. The local data
$b_{\gamma}=(\Pi^{\gamma}_i,\chi^{\gamma}_{ij})$ extracted above
from 1-isomorphisms $\mathcal{A}_{\gamma}$
give an element $b_{\gamma} \in A^1$, and the
properties (\ref{154}), (\ref{11}) and (\ref{12}) amount
to the relations
\begin{equation}
\label{160}
\gamma c - c = D_1b_{\gamma}\text{.}
\end{equation}
Furthermore, the local data $a_{\gamma_1,\gamma_2} =
(f_i^{\gamma_1,\gamma_2})$ extracted from
2-isomorphisms $\varphi_{\gamma_1,\gamma_2}$ give an
element $a_{\gamma_1,\gamma_2} \in A^0$, and its properties
(\ref{14}) and (\ref{15}) are equivalent to the identities
\begin{equation}
\label{161}
\gamma_1 b_{\gamma_2} - b_{\gamma_1\gamma_2} + b_{\gamma_2} = -D_0a_{\gamma_1\gamma_2}\text{.}
\end{equation}
The additional constraint (\ref{13}) reads:
\begin{equation}
\label{162} \gamma_1 a_{\gamma_2,\gamma_3} -
a_{\gamma_1\gamma_2,\gamma_3} + a_{\gamma_1,\gamma_2\gamma_3} -
a_{\gamma_1,\gamma_2}=0
\end{equation}
(where the right-hand side represents the trivial \v Cech
cochain $(1)$ in the additive notation). As recognized in
\cite{gawedzki6}, the equations above show that group cohomology is
relevant in the present context. For each $\Gamma$-module
$A^n$, we form the group of $\Gamma$-cochains
$C^{k}(A^n)= \mathrm{Map}(\Gamma^k, A^n)$, together with the
$\Gamma$-coboundary operator
\begin{equation*}
\delta: C^{k}(A^n) \to C^{k+1}(A^n)
\end{equation*}
defined as
\begin{multline*}
(\delta n)_{\gamma_1,...,\gamma_{k+1}} =
\gamma_1n_{\gamma_2,...,\gamma_{k+1}} -
n_{\gamma_1\gamma_2,...,\gamma_{k+1}}
\\+...+(-1)^{k}n_{\gamma_1,...,\gamma_{k}\gamma_{k+1}} +
(-1)^{k+1}n_{\gamma_1,...,\gamma_k}\text{.}
\end{multline*}
When expressed in terms
of this coboundary operator, equations (\ref{160}), (\ref{161}) and
(\ref{162}) become
\begin{equation}
\label{164}
(\delta c)_{\gamma} = D_1b_{\gamma}
\quad\text{, }\quad
(\delta b)_{\gamma_1,\gamma_2} = -D_0a_{\gamma_1,\gamma_2}
\quad\text{ and }\quad
(\delta a)_{\gamma_1,\gamma_2,\gamma_3}=0\text{.}
\end{equation}
These are equations (2.13)-(2.15) of \cite{gawedzki6}.
In group cohomology, a cochain
$n_{\gamma_1,...,\gamma_n}$ is called \emph{normalized} if
$n_{\gamma_1,...,\gamma_n}=0$ whenever
some $\gamma_i=1$. Notice that the cochains $\eta_{\gamma}$ and
$a_{\gamma_1,\gamma_2}$ are normalized if the
$(\Gamma,\epsilon)$-equivariant structure on $\mathcal{G}$ is
normalized.

The coboundary operator $\delta$ commutes with the differential
$D_n$, and so the groups $C^k(A^n)$ form again a
double complex. Its hypercohomology is denoted by
$\mathbb{H}^n(\Gamma,K(\mathcal{D}(2))_{\varepsilon})$, where
the subscript $\varepsilon$ on $K(\mathcal{D}(2))$ indicates that
the action of $\Gamma$ on the latter module is the one inherited
from (\ref{54}). The collection
$(c,b_{\gamma},a_{\gamma_1,\gamma_2})$, representing the bundle
gerbe $\mathcal{G}$ with $(\Gamma,\epsilon)$-equivariant structure,
defines an element in the degree 2 cochain group of this total
complex.  Equations (\ref{164}) together with
$D_2c=0$ show that it is even a cocycle,
defining a class in
$\mathbb{H}^2(\Gamma,K(\mathcal{D}(2))_{\varepsilon})$.

In the same way as above, local data $\beta=(\Lambda_i,G_{ij})$ and
$\eta_{\gamma}=(H_i^{\gamma})$ of an equivariant 1-isomorphism fit
into this framework. The restriction to 1-\emph{iso}morphisms means
that the functions $\Lambda_i$ and $H_i^{\gamma}$ take values in
$U(1)$, and that the 1-forms $\Lambda_{i}$ are real-valued. Notice
that the conventions (\ref{53}) coincide with the definitions (\ref{54})
in the abelian case. Then, $\beta\in A^1$ and $\eta_{\gamma} \in A^0$.
Equations (\ref{39}), equations (\ref{42}) and (\ref{43}), and
equation (\ref{41}) mean
\begin{multline}
\label{44} c'=c+D_{1}\beta \quad\text{, }\quad
b_{\gamma}'=b_{\gamma} + (\delta\beta)_{\gamma} + D_0\eta_{\gamma}
\\\quad\text{ and }\quad
a'_{\gamma_1,\gamma_2}=a_{\gamma_1,\gamma_2}-(\delta\eta)_{\gamma_1,\gamma_2}\text{.}
\end{multline}
These are equations (2.16)-(2.18) of
\cite{gawedzki6}. Thus, $(\Gamma,\epsilon)$-equivariant bundle
gerbes which are related by an equivariant 1-isomorphism define the
same class in
$\mathbb{H}^2(\Gamma,K(\mathcal{D}(2))_{\varepsilon})$.

\begin{proposition}
\label{prop1}
The map
\begin{equation*}
\left \lbrace \txt{Equivalence classes\\of $(\Gamma,\epsilon)$-equivariant\\bundle
gerbes
over $M$} \right \rbrace \to \mathbb{H}^2(\Gamma,K(\mathcal{D}(2))_{\varepsilon})
\end{equation*}
defined by extracting local data as described above is a bijection.
\end{proposition}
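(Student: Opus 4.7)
The argument follows the familiar three-step pattern for cohomological classifications in gerbe theory: well-definedness (including independence from choices), surjectivity, and injectivity. First, for well-definedness, one verifies that the extracted triple $(c,b_\gamma,a_{\gamma_1,\gamma_2})$ is a $2$-cocycle in the total complex computing $\mathbb{H}^2(\Gamma,K(\mathcal{D}(2))_\varepsilon)$; this is precisely the content of the identities (\ref{164}) together with $D_2c=0$, already assembled in the preceding subsection. Independence from the choice of unit-length sections $\sigma_{ij}$, $\sigma_i^\gamma$ and of local representatives $s_i$ is a routine check: a modification of any such section by a $U(1)$-valued function produces a coboundary in the appropriate direction of the double complex, i.e.\ a shift by $D_n$ or $\delta$ applied to a lower-degree cochain. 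The equivariant-1-isomorphism invariance is exactly equation (\ref{44}).

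Second, for surjectivity, given a cocycle $(c,b_\gamma,a_{\gamma_1,\gamma_2})$ I would reconstruct a descended $(\Gamma,\epsilon)$-equivariant bundle gerbe by the standard clutching construction. Take $Y:=M_{\mathfrak{O}}=\bigsqcup_i O_i$ with $\pi$ the obvious projection; set $C|_{O_i}:=B_i$; build the line bundle $L$ over $Y^{[2]}$ from the transition functions $g_{ijk}$ together with connection 1-forms $A_{ij}$; and define $\mu$ componentwise by $g_{ijk}$. Axioms (G1) and (G2) then translate to (\ref{151})--(\ref{153}). For the twisted-equivariant structure, similarly define the line bundles $A_\gamma$ over $Z^\gamma$ from the data $(\Pi_i^\gamma,\chi_{ij}^\gamma)$ and the 2-isomorphisms $\varphi_{\gamma_1,\gamma_2}$ from the functions $f_i^{\gamma_1,\gamma_2}$; the cocycle identities (\ref{164}) translate exactly to the axiom (1M2) expressed by diagram (\ref{6}), the 2-morphism axiom (\ref{8}), and the equivariance axiom (\ref{7}).

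Third, for injectivity, suppose that two $(\Gamma,\epsilon)$-equivariant bundle gerbes $(\mathcal{G}^a,\mathcal{J}^a)$ and $(\mathcal{G}^b,\mathcal{J}^b)$ have cohomologous local cocycles. After passing to a common $\Gamma$-stable refinement of the two open covers (and by Lemma \ref{lem1} assuming both structures descended), the difference of their cocycles is a coboundary and thus takes the form $(D_1\beta, (\delta\beta)_\gamma+D_0\eta_\gamma, -(\delta\eta)_{\gamma_1,\gamma_2})$. The pair $(\beta,\eta_\gamma)=((\Lambda_i,G_{ij}),H_i^\gamma)$ then serves as local data for an equivariant 1-isomorphism: another clutching construction produces a line bundle $B$ over $Z=Y^a\times_M Y^b$ with isomorphism $\beta$ and additional 2-isomorphisms $\eta_\gamma$, and the remaining cocycle/coboundary equations (\ref{39}), (\ref{42}), (\ref{43}) and (\ref{41}) translate to axiom (1M1)/(1M2) for the 1-morphism and to the commutativity of diagrams (\ref{32}) and (\ref{33}).

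The main obstacle is keeping the bookkeeping under control in the clutching constructions, particularly in ensuring that the reconstructions are $\Gamma$-compatible: the open cover and the local sections must be chosen $\Gamma$-equivariantly (as built into the assumption $\gamma(O_i)=O_{\gamma i}$), and the refinement arguments for independence of the cover must be carried out within the category of such $\Gamma$-stable covers. Once these equivariant refinements are set up, each diagram chase reduces term-by-term to one of the cocycle identities in (\ref{164}), and the remaining verifications are formal.
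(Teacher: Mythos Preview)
Your proposal is correct and follows essentially the same approach as the paper: reconstruct geometric objects from cocycle data for surjectivity, and reconstruct an equivariant 1-isomorphism from trivializing-cochain data for injectivity. The only minor variation is that the paper phrases injectivity as ``trivial cocycle implies equivalent to $(\mathcal{I}_0,\mathcal{J}_0)$'' (implicitly using the group structure), whereas you argue it directly for an arbitrary pair of cohomologous cocycles; both are the same reconstruction argument.
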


\begin{proof}
This follows from the usual reconstruction of bundle gerbes,
1-morphisms and 2-morphisms from given local data, see, e.g.,
\cite{stevenson1}. The reconstructed objects have the property that
they admit local data from which they were reconstructed.
Thus, in order to see the surjectivity, one reconstructs a
bundle gerbe $\mathcal{G}$, the 1-morphisms $\mathcal{A}_{\gamma}$
and the 2-morphisms $\varphi_{\gamma_1,\gamma_2}$ from given local
data. The cocycle condition assures that all necessary diagrams are
commutative, so that one obtains a twisted-equivariant bundle gerbe
whose local data are the given one. To see the injectivity, assume
that the class of the cocycle $c$ of a given twisted-equivariant
bundle gerbe $(\mathcal{G},\mathcal{J})$ is trivial,
$c=D_1(d)$. Then, one can reconstruct an equivariant 1-isomorphism
$(\mathcal{G},\mathcal{J}) \to (\mathcal{I}_0,\mathcal{J}_0)$ from
the cochain $d$.
\end{proof}

The geometric descent theory from Section \ref{sec2} implies,
via Proposition \ref{prop1}, results for the cohomology theories,
namely a bijection
\begin{equation*}
\mathbb{H}^2(\Gamma,K(\mathcal{D}(2))_{\varepsilon})\ \cong\
\mathbb{H}^2(\Gamma',K(\mathcal{D}(2))_{\varepsilon'})\textrm{,}
\end{equation*}
whenever the normal subgroup $\Gamma_0:=\mathrm{ker}(\varepsilon)$
acts without fixed points so that $(\Gamma',\varepsilon')$ is an
orientifold group for the quotient manifold $M'$. In the next
section, we will use Proposition \ref{prop1} in the opposite
direction.

\subsection{Classification Results}

In this section, we present a short summary of the
classification results from \cite{gawedzki6}. In general, there are
obstructions to the existence of a $(\Gamma,\epsilon)$-equivariant
structure on a bundle gerbe
$\mathcal{G}$, and if these vanish, there may be inequivalent
choices thereof. We use Proposition \ref{prop1} to study these
issues in a purely cohomological way. To this end, we are
looking for the image and the kernel of the homomorphism
\begin{equation}
\label{58}
\mathrm{pr}: \mathbb{H}^2(\Gamma,K(\mathcal{D}(2))_{\varepsilon}) \to \mathbb{H}^2(M,\mathcal{D}(2))
\end{equation}
which sends a twisted-equivariant Deligne class to the underlying ordinary
Deligne class.

As shown in \cite{gawedzki6}, the image can be characterized by
hierarchical obstructions to the existence of
twisted-equivariant structures on a given bundle gerbe
$\mathcal{G}$. If we assume that the curvature $H$ of $\mathcal{G}$
is $(\Gamma,\varepsilon)$-equivariant in the sense that $\gamma\act
H = H$ for all $\gamma\in \Gamma$, these obstructions are classes
\begin{multline}
\label{59}
o_1 \in H^2(M,U(1))
\quad\text{, }\quad
o_2  \in H^2(\Gamma,H^1(M,U(1))_{\varepsilon})
\\\quad\text{ and }\quad
o_3  \in H^3(\Gamma,H^0(M,U(1))_{\varepsilon})\text{.}
\end{multline}
The latter two are $\Gamma$-cohomology groups, with the action of $\Gamma$
on the coefficients induced from \erf{54}. The class $o_2$ is
well-defined if $o_1$ vanishes, and $o_3$ is well-defined if $o_1$
and $o_2$ vanish.

Let us now discuss the kernel of the homomorphism \erf{58}, i.e. the
question what the set of equivalence classes of
twisted-equivariant bundle gerbes with isomorphic underlying bundle
gerbe looks like. We infer that $\mathrm{pr}$ is induced from the
projection
\begin{equation}
\label{24}
p^{n}:\bigoplus_{p+q=n} C^p(A^q) \to A^n
\end{equation}
of chain complexes, whose cohomologies are $\mathbb{H}^n(\Gamma,
K(\mathcal{D}(2))_{\varepsilon})$ and
$\mathbb{H}^{n}(M,\mathcal{D}(2))$, respectively. The kernel of
$p^{n}$ forms, again, a complex whose cohomology will be
denoted by $\mathcal{H}^n$. Explicitly, a class in
$\mathcal{H}^2$ is represented by a pair
$(b_{\gamma},a_{\gamma_1,\gamma_2})$ with $b_{\gamma}\in A^1$ and
$a_{\gamma_1,\gamma_2}\in A^0$ such that
\begin{equation*}
\mathrm{D}_1b_{\gamma}=0
\quad\text{, }\quad
(\delta b)_{\gamma_1,\gamma_2}=-\mathrm{D}_0 a_{\gamma_1,\gamma_2}
\quad\text{ and }\quad
(\delta a)_{\gamma_1,\gamma_2\gamma_3}=0\text{.}
\end{equation*}
Equivalent representatives satisfy $b_{\gamma}' = b_{\gamma} +
\mathrm{D}_0\eta_{\gamma}$ and $a'_{\gamma_1,\gamma_2} =
a_{\gamma_1,\gamma_2} - (\delta \eta)_{\gamma_1,\gamma_2}$ for a
collection $\eta_{\gamma}\in A^0$. Comparing this with equations
(20) in \cite{gawedzki6}, we conclude that the group
$H_{\Gamma}$ which we considered there is obtained
from $\mathcal{H}^2$ by additionally identifying cocycles
$(b_{\gamma},a_{\gamma_1,\gamma_2})$ and
$(b'_{\gamma},a'_{\gamma_1,\gamma_2})$ if there exists a $\beta\in
A^1$ such that $\mathrm{D}_1\beta=0$ and
$b_{\gamma}'=b_{\gamma} + (\delta\beta)_{\gamma}$.
\begin{lemma}
\label{lem2}
The group  $\mathcal{H}^2$  fits into the exact sequences
\begin{equation*}
\alxydim{@C=0.9cm}{0 \ar[r] & \mathcal{H}^2\,/\,H^1(M,U(1))
\ar[r] & \mathbb{H}^2(\Gamma, K(\mathcal{D}(2))_{\varepsilon}) \ar[r]^-{\mathrm{pr}} & \mathbb{H}^{2}(M,\mathcal{D}(2))}
\end{equation*}
and
\begin{equation*}
\alxydim{}{0 \ar[r] & H^2(\Gamma,H^0(M,U(1))_{\varepsilon}) \ar[r]
& \mathcal{H}^2 \ar[r] & C^1(H^1(M,U(1)))\text{.}}
\end{equation*}
\end{lemma}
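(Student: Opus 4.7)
The plan is to deduce both exact sequences from the short exact sequence of cochain complexes
\begin{equation*}
0 \to \ker p^\bullet \to \bigoplus_{p+q=\bullet} C^p(A^q) \to A^\bullet \to 0,
\end{equation*}
whose cohomologies are $\mathcal{H}^\bullet$, $\mathbb{H}^\bullet(\Gamma, K(\mathcal{D}(2))_{\varepsilon})$ and $\mathbb{H}^\bullet(M, \mathcal{D}(2))$, respectively. The first asserted sequence will then fall out as the middle portion of the associated long exact sequence in cohomology, once one identifies the connecting homomorphism $\partial:\mathbb{H}^1(M,\mathcal{D}(2))\to \mathcal{H}^2$ and uses the standard (and $\Gamma$-equivariant) isomorphism $\mathbb{H}^1(M,\mathcal{D}(2))\cong H^1(M,U(1))$.

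To implement this, I would first unravel $\partial$ by a brief diagram chase: a Deligne 1-cocycle $\beta\in A^1$ with $D_1\beta=0$ lifts to $(\beta,0,0)$ in the total complex, whose coboundary in the full differential equals $(0,\delta\beta,0)$, so that $\partial[\beta]=[(\delta\beta,0)]\in \mathcal{H}^2$. Exactness of the long sequence at $\mathbb{H}^2(\Gamma,K(\mathcal{D}(2))_{\varepsilon})$ then reads $\ker(\mathrm{pr})\cong \mathcal{H}^2/\mathrm{im}(\partial)$; interpreting the symbol ``$H^1(M,U(1))$'' in the statement as the image of $\partial$ under the isomorphism above yields the first sequence.

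For the second sequence, I would construct the two maps explicitly and check exactness directly on representatives. The left map sends $[a]\in H^2(\Gamma,H^0(M,U(1))_{\varepsilon})$ to $[(0,a)]\in\mathcal{H}^2$: since $a$ is constant-valued, the cocycle condition $\delta(0)=-D_0 a$ holds automatically, and $\delta a=0$ is given. Well-definedness and injectivity both follow from the observation that any $\mathcal{H}^2$-equivalence $(0,a)\sim(0,a')$ must use $\eta\in C^1(A^0)$ with $D_0\eta=0$, i.e. $\eta\in C^1(H^0(M,U(1)))$, and then $a'=a-\delta\eta$ is the group-cohomology relation. The right map sends $[(b,a)]$ to the group 1-cochain $\gamma\mapsto[b_\gamma]\in H^1(M,U(1))$; this is independent of the representative because the transformation $b\mapsto b+D_0\eta$ shifts each $b_\gamma$ by a Deligne coboundary. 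The composition vanishes trivially. For the crucial exactness at $\mathcal{H}^2$, if every $[b_\gamma]$ is zero I choose $\eta_\gamma\in A^0$ with $b_\gamma=D_0\eta_\gamma$ and apply the equivalence to replace $(b,a)$ by $(0,a+\delta\eta)$; the remaining kernel-complex cocycle conditions then force $a+\delta\eta\in C^2(H^0(M,U(1)))$ with $\delta(a+\delta\eta)=0$, producing the required preimage under the left map.

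The main source of difficulty is largely bookkeeping: the signs in the total differential $\mathrm{D}=D\pm\delta$ on the Čech--Deligne--group triple complex must be chosen consistently with the cocycle equations~(\ref{164}) and the equivalence relation recalled just before the lemma, so that the substitutions above indeed cancel correctly. A secondary point worth flagging is that the subgroup ``$H^1(M,U(1))\subseteq\mathcal{H}^2$'' in the first sequence is genuinely the image of $\partial$, which may fail to be injective exactly when an invariant class in $\mathbb{H}^1(M,\mathcal{D}(2))$ admits an equivariant lift to $\mathbb{H}^1(\Gamma,K(\mathcal{D}(2))_{\varepsilon})$; interpreting the quotient as $\mathcal{H}^2/\mathrm{im}(\partial)$, the sequence as stated holds with no additional hypothesis.
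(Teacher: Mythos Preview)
Your argument for the first exact sequence is essentially identical to the paper's: both extract it from the long exact sequence induced by the short exact sequence $0\to\ker p^\bullet\to(\text{total})^\bullet\to A^\bullet\to 0$, and both invoke $\mathbb{H}^1(M,\mathcal{D}(2))\cong H^1(M,U(1))$. Your additional remark that the quotient is strictly by $\mathrm{im}(\partial)$ rather than by all of $H^1(M,U(1))$ is a correct clarification of a minor abuse in the statement.

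For the second sequence you diverge from the paper. The paper iterates the same device: it introduces a further projection $q^n:\ker(p^n)\to C^1(A^{n-1})$, obtains a second short exact sequence of complexes $0\to\ker(q)^\bullet\to\ker(p)^\bullet\to C^1(A^{\bullet-1})\to 0$, reads off the relevant portion of its long exact sequence, and then identifies $H^2(\ker(q))$ with $\ker\bigl(\delta|_{C^2(H^0(M,U(1)))}\bigr)$, whose quotient by the image of $\delta$ from $C^1(H^0(M,U(1)))$ is $H^2(\Gamma,H^0(M,U(1))_\varepsilon)$. You instead write down the two maps by hand and verify well-definedness, injectivity on the left, vanishing of the composition, and exactness at $\mathcal{H}^2$ directly on representatives. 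Both approaches are sound; the paper's is more uniform (the same long-exact-sequence machinery applied twice) and gives the right-hand map's target automatically, while yours is more elementary and makes the exactness at $\mathcal{H}^2$ entirely explicit via the substitution $b_\gamma=D_0\eta_\gamma$ followed by the observation that the cocycle relation $(\delta b)=-D_0 a$ forces $D_0(a+\delta\eta)=0$.
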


\begin{proof}
The first sequence is just a piece of the long exact sequence
obtained from the short exact sequence which is (\ref{24}) extended
by its kernel to the left, together with the well-known
identification $\mathbb{H}^k(M,\mathcal{D}(2)) \cong H^k(M,U(1))$
for $k=0,1$ \cite{brylinski1}. The second sequence can be obtained
by the same trick: we project out another factor
$q^n: \mathrm{ker}(p^n) \to C^1(A^{n-1})$ from the exact
sequence of complexes, yielding a short exact sequence
\begin{equation*}
\alxydim{}{0 \ar[r] & \mathrm{ker}(q)^{\bullet} \ar[r] &
\mathrm{ker}(p)^{\bullet} \ar[r]^-{q} & C^1(A^{\bullet-1})
\ar[r] & 0}
\end{equation*}
of chain complexes. The interesting part of its long
exact sequence is
\begin{equation*}
\alxydim{}{C^1(H^0(M,U(1))) \ar[r]^-{\delta} &
H^{2}(\mathrm{ker}(q)) \ar[r] & \mathcal{H}^2 \ar[r] &
C^1(H^1(M,U(1)))\text{,}}
\end{equation*}
for which an easy computation shows that
$H^{2}(\mathrm{ker}(q))$ coincides with $\mathrm{ker}(\delta|_{C^2(H^0(M,U(1)))})$. \end{proof}

We have now derived results on the image and the kernel
of the projection \erf{58}. When the
underlying manifold is 2-connected, $H^2(M,U(1))=H^1(M,U(1))=0$ and
$H^0(M,U(1))_{\varepsilon}=U(1)_{\varepsilon}$ as $\Gamma$-modules,
so that the obstructions \erf{59} and Lemma \ref{lem2} boil down to
\begin{proposition}
Let $M$ be a 2-connected smooth manifold and let $\mathcal{G}$ be a bundle gerbe
over $M$ with $(\Gamma,\epsilon)$-equivariant curvature.
\begin{enumerate}
\item[(a)]
$\mathcal{G}$
admits $(\Gamma,\epsilon)$-equivariant structures if and only if the third
obstruction class $o_3 \in H^3(\Gamma,U(1)_{\varepsilon})$ vanishes.

\item[(b)]In the latter case, equivalence classes of
$(\Gamma,\epsilon)$-equi\-va\-ri\-ant bundle gerbes whose underlying
bundle gerbe is isomorphic to $\mathcal{G}$ are parameterized by the
group $H^2(\Gamma,U(1)_{\varepsilon})$.
\end{enumerate}
\end{proposition}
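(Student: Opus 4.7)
The plan is to combine the hierarchical obstruction theory summarized in \erf{59} with the kernel computation provided by Lemma \ref{lem2}, specializing both to the case where $M$ is 2-connected. For part (a), the hypothesis gives $H^1(M,U(1)) = H^2(M,U(1)) = 0$, so the first two obstructions $o_1 \in H^2(M,U(1))$ and $o_2\in H^2(\Gamma,H^1(M,U(1))_\varepsilon)$ live in the trivial group and vanish automatically. The third obstruction is then well-defined and lies in $H^3(\Gamma, H^0(M,U(1))_\varepsilon) = H^3(\Gamma, U(1)_\varepsilon)$, since $H^0(M,U(1))=U(1)$ with the $\Gamma$-action inherited from \erf{54}. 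Its vanishing is the remaining necessary and sufficient condition for the existence of a $(\Gamma,\epsilon)$-equivariant structure, by the hierarchical obstruction theory of \cite{gawedzki6}.

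For part (b), I fix a Deligne cocycle $c$ representing $\mathcal{G}$ and study the fiber $\mathrm{pr}^{-1}([c])$ of the projection \erf{58}. By Proposition \ref{prop1}, this fiber is in bijection with the set of equivalence classes of $(\Gamma,\epsilon)$-equivariant bundle gerbes whose underlying bundle gerbe is isomorphic to $\mathcal{G}$. Since $\mathrm{pr}$ is a homomorphism of abelian groups, any non-empty fiber is a torsor over $\ker(\mathrm{pr})$; by the hypothesis of (b) and part (a), the fiber is non-empty. This step is purely formal and requires no further geometry.

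It remains to identify $\ker(\mathrm{pr})$ with $H^2(\Gamma, U(1)_\varepsilon)$. The first exact sequence of Lemma \ref{lem2} identifies $\ker(\mathrm{pr})$ with $\mathcal{H}^2/H^1(M,U(1))$, which reduces to $\mathcal{H}^2$ since $H^1(M,U(1))=0$. The second exact sequence of Lemma \ref{lem2}, combined again with $H^1(M,U(1))=0$ (so that $C^1(H^1(M,U(1)))=0$), yields $\mathcal{H}^2 \cong H^2(\Gamma, H^0(M,U(1))_\varepsilon) = H^2(\Gamma, U(1)_\varepsilon)$. Composing these identifications produces the claimed parametrization. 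No substantial obstacle arises: the entire proof is a cohomological bookkeeping exercise once the 2-connectedness hypothesis is plugged into the general obstruction and kernel results already established.
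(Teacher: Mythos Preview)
Your proof is correct and follows essentially the same approach as the paper. The paper's argument is nothing more than the sentence preceding the proposition: specializing the obstruction classes \erf{59} and the exact sequences of Lemma \ref{lem2} to the 2-connected case, where $H^1(M,U(1))=H^2(M,U(1))=0$ and $H^0(M,U(1))_\varepsilon=U(1)_\varepsilon$; you have simply spelled out explicitly the torsor argument and the collapse of the two exact sequences that the paper leaves implicit.
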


\noindent This  was the starting point for the calculations in
finite-group cohomology of \cite{gawedzki6}. Namely, on a
compact connected simple and simply connected Lie group, there
is a canonical family $\mathcal{G}_k$ of bundle gerbes with
$(\Gamma,\epsilon)$-equivariant curvature for $\Gamma$  a semidirect
product of $\mathbb{Z}_2$ (generated by the $\zeta$-twisted
inversion $g\to\zeta\cdot g^{-1}$, with $\zeta$ from the center of
$G$) and a subgroup of the center of $G$. Since these Lie groups
are 2-connected, the obstruction classes and the classifying groups
for $(\Gamma,\epsilon)$-equivariant structures on $\mathcal{G}_k$ may be
computed by calculations in finite-group cohomology.

\section{Equivariant Gerbe Modules}

\label{sec5}

Gerbe modules can be described conveniently as 1-morphisms
\cite{waldorf1}:
\begin{definition}
Let $\mathcal{G}$ be a bundle gerbe over $M$. A \uline{$\mathcal{G}$-module}  is a 1-morphism
\begin{equation*}
\mathcal{E}:\mathcal{G} \to \mathcal{I}_{\omega}\text{.}
\end{equation*}
The rank of the vector bundle of $\mathcal{E}$ is called the
\uline{rank} of the bundle-gerbe module, and the 2-form
$\omega$ is called its \uline{central curvature}. \end{definition}

Let us extract the details of this definition. We assume that the
1-morphism $\mathcal{E}$ is descended in the sense that it consists
of a vector bundle $E$ over $Y\cong Y \times_M M$. Similarly as
in Lemma \ref{lem1}, this can be assumed up to natural
2-isomorphisms, see Theorem 1 in \cite{waldorf1}. By axiom
(1M1), the curvature of the vector bundle $E$
satisfies
\begin{equation*}
\frac{_1}{^n}\mathrm{tr}(\mathrm{curv}(E)) = \pi^{*}\omega - C\text{.}
\end{equation*}
The $\mathcal{G}$-module consists also of an isomorphism
\begin{equation*}
\rho: L \otimes \pi_2^{*}E \to \pi_1^{*}E
\end{equation*}
of vector bundles over $Y^{[2]}$ which satisfies, by axiom
(1M2), the condition
\begin{equation}
\label{21} \pi_{13}^{*}\rho \circ (\mu \otimes \id)=
\pi_{12}^{*}\rho \circ (\id \otimes \pi_{23}^{*}\rho)\text{.}
\end{equation}
The latter resembles the axiom for an action $\rho$ of a monoid $L$
on a module $E$, hence the terminology. The above definition of a
bundle-gerbe module coincides with the usual one, see, e.g.,
\cite{bouwknegt1,gawedzki4}.

\begin{definition}
\label{def3} Let $(\Gamma,\epsilon)$ be an orientifold group for $M$
and let  $(\mathcal{G},\mathcal{J})$ be a
$(\Gamma,\epsilon)$-equivariant bundle gerbe over $M$. A
\uline{$(\mathcal{G},\mathcal{J})$-module} is a 2-form $\omega$ on
$M$ that satisfies the condition $\gamma\act\omega =\omega$
for all $\gamma\in\Gamma$, together with an equivariant 1-morphism
\begin{equation*}
(\mathcal{E},\rho_{\gamma}): (\mathcal{G},\mathcal{J}) \to (\mathcal{I}_{\omega},\mathcal{J}_{\omega})\text{,}
\end{equation*}
where $\mathcal{J}_{\omega}$ is the canonical $(\Gamma,\epsilon)$-equivariant structure on the trivial bundle gerbe $\mathcal{I}_{\omega}$ from Example \ref{ex1}.
\end{definition}

Thus,  a $(\mathcal{G},\mathcal{J})$-module is a $\mathcal{G}$-module
$\mathcal{E}:\mathcal{G} \to \mathcal{I}_{\omega}$ together with a 2-morphism
\begin{equation*}
\rho_{\gamma}: \gamma\act\mathcal{E} \circ \mathcal{A}_{\gamma} \Rightarrow  \mathcal{E}
\end{equation*}
for every $\gamma\in\Gamma$, such that the diagram
\begin{equation}
\label{23}
\alxydim{@C=2.5cm@R=1.2cm}{\gamma_1\act\gamma_2\act\mathcal{E} \circ \gamma_1\act\mathcal{A}_{\gamma_2} \circ \mathcal{A}_{\gamma_1} \ar@{=>}[d]_-{\gamma\act_1\rho_{\gamma_2} \circ \id_{\mathcal{A}_{\gamma_1}}} \ar@{=>}[r]^-{\id_{\gamma_1\gamma_2\act \mathcal{E}} \circ \varphi_{\gamma_1,\gamma_2}} & \gamma_1\gamma_2\act \mathcal{E} \circ \mathcal{A}_{\gamma_1\gamma_2} \ar@{=>}[d]^-{\rho_{\gamma_1\gamma_2}} \\   \gamma_1\act \mathcal{E}\circ \mathcal{A}_{\gamma_1} \ar@{=>}[r]_-{ \rho_{\gamma_1}} &  \mathcal{E}}
\end{equation}
is commutative for all $\gamma_1,\gamma_2\in\Gamma$. We say that a
$(\mathcal{G},\mathcal{J})$-module is \emph{normalized} if the
equivariant 1-morphism $(\mathcal{E},\rho_{\gamma})$ is normalized.

We already discussed equivariant 1-morphisms in
terms of vector bundles and isomorphisms of vector bundles in
Section \ref{sec4}, so that we only have to apply
these results to the particular case at hand. We recall that the
$(\Gamma,\epsilon)$-equivariant structure on $\mathcal{G}$ consists
of a line bundle $A_{\gamma}$ over $Z^{\gamma}$ of curvature
$\mathrm{curv}(A_{\gamma}) = \epsilon(\gamma)\pi_{2}^{*}C -
\pi_1^{*}C$ for each $\gamma\in\Gamma$, and of isomorphisms
\begin{equation*}
\alpha_{\gamma}: \pi_{13}^{*}L
\otimes \pi_{34}^{*}A_{\gamma} \to \pi_{12}^{*}A_{\gamma}
\otimes \pi_{24}^{*}L^{\epsilon(\gamma)}
\end{equation*}
of line bundles over $Z^{\gamma} \times_M Z^{\gamma}$ subject
to various conditions. The
$\mathcal{G}$-module  $\mathcal{E}: \mathcal{G} \to
\mathcal{I}_{\omega}$ consists of a vector bundle $E$ over $Y$ and
of an isomorphism $\rho:L \otimes \pi_2^{*}E \to \pi_1^{*}E$ over
$Y^{[2]}$ satisfying (\ref{21}). The 2-morphisms $\rho_{\gamma}$ are
isomorphisms
\begin{equation}
\rho_{{\gamma}}: \pi_{12}^{*}A_{\gamma} \otimes \pi_{2}^{*}E^{\epsilon(\gamma)}\to \pi_{1}^{*}E
\end{equation}
of vector bundles over $Z_1^{\gamma} \times_{P} Z^{\gamma}_2$, see
(\ref{34}), which is just $Z^{\gamma}$ here.
The compatibility condition (\ref{32}) now reads
\begin{equation}
\alxydim{@C=1.2cm@R=1.2cm}{\pi_{13}^{*}L \otimes \pi_{34}^{*}A_{\gamma}
\otimes \pi_{4}^{*}E^{\epsilon(\gamma)} \ar[d]_{\id \otimes \pi_{34}^{*}\rho_{{\gamma}}} \ar[r] & \pi_{12}^{*}A_{\gamma} \otimes \pi_{2}^{*}E^{\epsilon(\gamma)} \ar[d]^{\pi_{12}^{*}\rho_{{\gamma}}}
\\ \pi_{13}^{*}L \otimes \pi_{3}^{*}E \ar[r]_{\pi_{13}^{*}\rho} &
\pi_1^{*}E\textrm{,}}
\end{equation}
and the commutative diagram (\ref{23}), which is a
specialisation of (\ref{33}), becomes
\begin{equation}
\label{100}
\alxydim{@C=2cm@R=1.2cm}{\pi_{12}^{*}A_{\gamma_1} \otimes \pi_{23}^{*}A^{\epsilon(\gamma_1)}_{\gamma_2} \otimes \pi_3^{*}E^{\epsilon(\gamma_1\gamma_2)} \ar[r]^-{\pi_{12}^{*}\varphi_{\gamma_1,\gamma_2}
\otimes \id} \ar[d]_{\id \otimes \pi_{23}^{*}\rho^{\epsilon(\gamma_1)}_{{\gamma_2}}} & \pi_{13}^{*}A_{\gamma_1\gamma_2}
\otimes \pi_{3}^{*}E^{\epsilon(\gamma_1\gamma_2)} \ar[d]^{\pi_{13}^{*}\rho_{{\gamma_1\gamma_2}}} \\ \pi_{12}^{*}A_{\gamma_1} \otimes \pi_2^{*}E^{\epsilon(\gamma_1)} \ar[r]_{\pi_{12}^{*}\rho_{{\gamma_1}}} & \pi_1^{*}E\text{.}}
\end{equation}

\begin{definition}
\label{def6} A $(\mathcal{G}^a,\mathcal{J}^a)$-module
$(\mathcal{E}^a,\rho_{\gamma}^a)$ and a
$(\mathcal{G}^b,\mathcal{J}^b)$-module
$(\mathcal{E}^b,\rho_{\gamma}^b)$ are called \emph{equivalent} if
there exists an equivariant 1-isomorphism
$(\mathcal{B},\eta_{\gamma}):(\mathcal{G}^a,\mathcal{J}^a) \to
(\mathcal{G}^b,\mathcal{J}^b)$ and an equivariant 2-isomorphism
\begin{equation*}
\nu:(\mathcal{E}^b,\rho_{\gamma}^b) \circ (\mathcal{B},\eta_{\gamma}) \Rightarrow (\mathcal{E}^a,\rho_{\gamma}^a)\text{.}
\end{equation*}
\end{definition}
In particular, the bundle gerbes $\mathcal{G}^a$ and $\mathcal{G}^b$
are isomorphic,  the 2-forms $\omega^a$ and $\omega^b$ of the two
gerbe modules coincide, and the two modules have the same rank. If
the 1-isomorphism $\mathcal{B}$ has a line bundle $B$ over $Z$ and
an isomorphism $\beta$, this equivariant 2-morphism is just an
isomorphism
\begin{equation}
\label{63}
\nu: B \otimes \pi_2^{*}E^b \to \pi_1^{*}E^a
\end{equation}
of line bundles over $Z$ that satisfies the usual axiom for
2-isomorphisms and the additional equivariance condition \erf{62},
which now becomes the commutative diagram
\begin{equation*}
\alxydim{@C=2cm@R=1.2cm}{
\pi_{12}^{*}A^a_{\gamma} \otimes (\pi_{24}^{*}B \otimes \pi_4^{*}E^b)^{\epsilon(\gamma)}\ \ar[r]^-{\pi_{34}^*\rho^b_\gamma\circ\eta_{\gamma}\otimes\id}
\ar[d]_{\id \otimes \pi_{24}^{*}\nu^{\epsilon(\gamma)}}
&\ \pi_{13}^{*}B \otimes \pi_3^{*}E^b
\ \ar[d]^{\pi_{13}^{*}\nu\otimes
\id}  \\ \pi_{12}^{*}A^{a}_{\gamma} \otimes \pi_{2}^{*}(E^a)^{\epsilon(\gamma)} \ar[r]_-{\rho^a_{\gamma}}
& \pi_{1}^{*}E^a\text{.}}
\end{equation*}

Concerning the local data of a $(\mathcal{G},\mathcal{J})$-module,
we only have to specialize the local data of an
equivariant 1-morphism to the case in which the
second bundle gerbe is a trivial one equipped with its canonical
$(\Gamma,\epsilon)$-equivariant structure, see Section \ref{sec3_1}.
Thus, let $c=(B_i,A_{ij},g_{ijk})$ be local data of
the bundle gerbe $\mathcal{G}$ with respect to some invariant open
cover $\mathfrak{O}$, and let
$b_{\gamma}=(\Pi^{\gamma}_i,\chi^{\gamma}_{ij})$ and
$a_{\gamma_1,\gamma_2} = (f_i^{\gamma_1,\gamma_2})$ be local data of
the $(\Gamma,\epsilon)$-equivariant structure. Evidently, the
trivial bundle gerbe $\mathcal{I}_{\omega}$ has local data
$c'=(\omega,0,1)$, and its canonical equivariant structure
$\mathcal{J}_{\omega}$ has local data $b_{\gamma}'=(0,1)$ and
$a_{\gamma_1,\gamma_2}'=(1)$. A $(\mathcal{G},\mathcal{J})$-module
of rank $n$, i.e. an equivariant 1-morphism
\begin{equation*}
(\mathcal{E},\rho_{\gamma}): (\mathcal{G},\mathcal{J}) \to (\mathcal{I}_{\omega},\mathcal{J}_{\omega})\text{,}
\end{equation*}
has local data $\beta=(\Lambda_i,G_{ij})$ and
$\eta_{\gamma}=(H_i^{\gamma})$ satisfying equations \erf{39},
\erf{42}, \erf{43} and \erf{41}. Explicitly, we have 1-forms
$\Lambda_i \in \Omega^1(O_i,\mathfrak{u}(n))$, and smooth functions
$G_{ij}:O_{ij} \to U(n)$ and $H_i^{\gamma}:O_{i} \to U(n)$. The
equations are
\begin{multline*}
\omega = B_i + \frac{_1}{^n}\mathrm{tr}( \mathrm{d}\Lambda_{i})
\quad\text{, }\quad
 \Lambda_j = G_{ij}^{-1} \Lambda_i G_{ij}  -A_{ij} +\mathrm{i}G_{ij}^{-1}\mathrm{d}G_{ij}
\\\quad\text{ and }\quad
 G_{ij}\cdot G_{jk}
= g_{ijk} \cdot G_{ik}   \text{.}
\end{multline*}
These are just the relations for an ordinary
$\mathcal{G}$-module, see equations (2.3) in \cite{gawedzki4}.
Equivariance is expressed by the relations
\begin{eqnarray}
\nonumber
\gamma\Lambda_i
&=&(H^\gamma_i)^{-1}
\Lambda_i H^\gamma_i+\mathrm{i}(H^\gamma_i)^{-1}\mathrm{d}H^\gamma_i-\Pi^\gamma_i\textrm{,} \\
\label{79}
\gamma G_{ij}&=&(H^\gamma_i)^{-1}
 \cdot G_{ij} \cdot H^\gamma_j\cdot(\chi^\gamma_{ij})^{-1}\textrm{,}
 \\\nonumber
H^{\gamma_1\gamma_2}_i&=&H_i^{\gamma_1} \cdot \gamma_1 H_{i}^{\gamma_2}\cdot (f_i^{\gamma_1,\gamma_2})^{-1}\text{,}
\end{eqnarray}
where we have used the conventions \erf{54} and \erf{53}.

If a $(\mathcal{G},\mathcal{J})$-module
$(\mathcal{E},\rho_{\gamma})$ and a
$(\mathcal{G}',\mathcal{J}')$-module $(\mathcal{E}',\rho_{\gamma}')$
are equivalent in the sense of Definition \ref{def6}, and
$(c,b_{\gamma},a_{\gamma_1,\gamma_2})$ and
$(c',b_{\gamma}',a_{\gamma_1,\gamma_2}')$ are local data of the two
gerbes, there exist local data $(R_i,u_{ij})$ and $(h_i^{\gamma})$
of the equivariant 1-isomorphism $(\mathcal{B},\eta_{\gamma})$
satisfying equations (\ref{44}). There are also functions $U_i: O_i
\to U(n)$ coming from the equivariant 2-morphism $\nu$. If
$\beta=(\Lambda_i,G_{ij})$ and $\rho_{\gamma}=(H_i^{\gamma})$ are
local data of $(\mathcal{E},\rho_{\gamma})$, and, similarly,
$\beta'$ and $\rho'_{\gamma}$ are those of
$(\mathcal{E}',\rho_{\gamma}')$, equations (\ref{52}) and (\ref{51})
take the form
\begin{multline*}
\Lambda'_i=U_i^{-1}\Lambda_i U_i+\mathrm{i}U_i^{-1}\mathrm{d}U_i-R_i
\quad\text{,}\quad
G'_{ij}=U_i^{-1} \cdot G_{ij} \cdot U_j \cdot u_{ij}^{-1}\,
\\\quad\text{ and }\quad
H'^\gamma_i=U_i^{-1} \cdot H^\gamma_i
\cdot \gamma U_{i}\cdot(h_i^\gamma)^{-1}\text{.}
\end{multline*}

A particular situation that we shall discuss explicitly is the
orientifold group $(\Z_2,\id)$, and a bundle gerbe $\mathcal{G}$
with (normalized) Jandl structure
$\mathcal{J}=(\mathcal{A}_k,\varphi_{k,k})$. In this situation, we
call a $(\mathcal{G},\mathcal{J})$-module a \emph{Jandl module}.
Given such a (normalized) Jandl module $(\mathcal{E},\rho_k)$,
$\,\mathcal{E}:\mathcal{G} \to \mathcal{I}_{\omega}$ is a
bundle-gerbe module whose curvature satisfies
$k^{*}\omega=-\omega$, and there is a single 2-isomorphism
\begin{equation*}
\rho_k: k^{*}\mathcal{E}^{\dagger} \circ \mathcal{A}_k
\Rightarrow \mathcal{E}
\end{equation*}
such that the diagram
\begin{equation}
\label{49}
\alxydim{@C=2cm@R=1.2cm}{\mathcal{E} \circ k^{*}\mathcal{A}_k^{\dagger}
\circ \mathcal{A}_k \ar@{=>}[r]^-{\id \circ \varphi_{k,k}} \ar@{=>}[d]_-{k^{*}\rho_k^{\dagger} \circ \id} & \mathcal{E}
\circ \id_{\mathcal{G}} \ar@{=>}[d]^-{\lambda_{\mathcal{E}}}
\\ k^{*}\mathcal{E}^{\dagger} \circ \mathcal{A}_k \ar@{=>}[r]_-{\rho_k}
& \mathcal{E}}
\end{equation}
of 2-isomorphisms is commutative. Still more
specifically, we assume that there is a trivialization
$\mathcal{T}:\mathcal{G} \to \mathcal{I}_{\rho}$. As discussed in
Section \ref{sec15}, the trivialization and the Jandl structure
induce a $k$-equivariant line bundle $(R,\phi)$ over $M$ of
curvature $-(k^{*}\rho+\rho)$. This
was obtained by applying the functor $\bun$
of \erf{55} to the 1-isomorphism
\begin{equation*} \mathcal{R} =k^{*}\mathcal{T}^{\dagger}\circ
\mathcal{A}_{k} \circ \mathcal{T}^{-1}: \mathcal{I}_{\rho} \to
\mathcal{I}_{-k^{*}\rho}\text{.}
\end{equation*}
In the same way, we form the 1-morphism $\mathcal{E} \circ
\mathcal{T}^{-1}: \mathcal{I}_{\rho} \to \mathcal{I}_{\omega}$,
and get a vector bundle $E:= \bun(\mathcal{E} \circ \mathcal{T}^{-1})$.
We have, further, a 2-isomorphism
\begin{multline*}
\alxydim{@C=1cm@R=1.2cm}{ k^{*}(\mathcal{E} \circ \mathcal{T}^{-1})^{\dagger} \circ \mathcal{R} \ar@{=}[r]&
k^{*}\mathcal{E}^{\dagger} \circ
k^{*}\mathcal{T}^{\dagger-1} \circ k^{*}\mathcal{T}^{\dagger} \circ
\mathcal{A}_k \circ \mathcal{T}^{-1} \ar@{=>}[d]^-{\id \circ i_l
\circ \id} \\ & k^{*}\mathcal{E}^{\dagger} \circ
\id_{k^*\mathcal{G}^*} \circ \mathcal{A}_k \circ \mathcal{T}^{-1} \ar@{=>}[d]^-{\id \circ \rho_{\mathcal{A}_{k}}
\circ \id}
\\& k^{*}\mathcal{E}^{\dagger} \circ  \mathcal{A}_k \circ
\mathcal{T}^{-1} \ar@{=>}[r]^-{\rho_k \circ \id}  & \mathcal{E}
\circ \mathcal{T}^{-1}}
\end{multline*}
that induces, via $\bun$, an isomorphism
\begin{equation}
\label{56}
\vartheta: R \otimes k^{*}E^{*} \to E
\end{equation}
of vector bundles over $M$. Finally, diagram \erf{49} implies that
this morphism is compatible with the equivariant structure $\phi$ on
$R$ in the sense that the diagram
\begin{equation*}
\alxydim{@R=1.2cm@C=0.4cm}{R \otimes k^{*}R^{*} \otimes E \ar[rd]_-{\phi \otimes \id} \ar[rr]^-{\id \otimes k^{*}\vartheta^{*}} && R \otimes k^{*}E^{*} \ar[dl]^{\vartheta} \\ & E &}
\end{equation*}
of morphisms of vector bundles over $M$ is commutative. Summarizing,
every Jandl module for a trivialized Jandl
gerbe gives rise to a vector bundle together with an isomorphism
\erf{56}.

In Section \ref{sec2}, we described the descent
theory of twisted-equivariant bundle gerbes as a way to obtain (all)
Jandl gerbes over a smooth manifold $M'$. In the same way, we have

\begin{proposition}
\label{prop2} Let $(\Gamma,\epsilon)$ be an orientifold group for
$M$ with $\Gamma_0$ acting without fixed points, and let
$(\Gamma',\varepsilon')$ be the quotient orientifold group for
the quotient $M':=M/\Gamma_0$. Then, there is a canonical bijection
\begin{equation*}
\alxy{\left \lbrace \txt{Equivalence classes of\\equivariant modules for\\ $(\Gamma,\varepsilon)$-equivariant\\bundle gerbes over $M$} \right \rbrace \ar[r]^-{\cong} & \left \lbrace \txt{Equivalence classes of equi-\\variant modules for $(\Gamma',\varepsilon')$-equi-\\variant bundle gerbes over $M'$} \right \rbrace}\text{.}
\end{equation*}
\end{proposition}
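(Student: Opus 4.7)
The plan is to parallel the descent construction of Theorem \ref{th2}, applied not to the underlying equivariant bundle gerbes alone but to the equivariant 1-morphisms $(\mathcal{E},\rho_\gamma):(\mathcal{G},\mathcal{J})\to(\mathcal{I}_\omega,\mathcal{J}_\omega)$ that encode gerbe modules according to Definition \ref{def3}. The starting observation is that, since $\gamma\act\omega=\omega$ for all $\gamma\in\Gamma$, the 2-form $\omega$ descends along $p:M\to M'$ to a 2-form $\omega'\in\Omega^2(M')$ with $p^{*}\omega'=\omega$; when $\epsilon$ is non-trivial this invariance also forces $(k')^{*}\omega'=-\omega'$. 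Moreover, the canonical $(\Gamma,\epsilon)$-equivariant structure $\mathcal{J}_\omega$ on $\mathcal{I}_\omega$ descends under the bijection of Theorem \ref{th2} to the canonical $(\Gamma',\epsilon')$-equivariant structure $\mathcal{J}_{\omega'}$ on $\mathcal{I}_{\omega'}$: unpacking Definition \ref{def5} with all $A_\gamma$ trivial and all $\varphi_{\gamma_1,\gamma_2}$ the identities of Example \ref{ex1}, the quotient gerbe and its Jandl structure are visibly $(\mathcal{I}_{\omega'},\mathcal{J}_{\omega'})$.

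First I would define the forward map. Given a module $(\mathcal{E},\rho_\gamma)$, which is an equivariant 1-morphism in the sense of Definition \ref{def4}, apply verbatim the quotient-1-morphism construction from the second step of the proof of Theorem \ref{th2}, now with target the equivariant trivial gerbe $(\mathcal{I}_\omega,\mathcal{J}_\omega)$, to obtain a 1-morphism $\mathcal{E}':\mathcal{G}'\to\mathcal{I}_{\omega'}$; in the case of non-trivial $\epsilon$ the same prescription yields the additional 2-isomorphism $\rho'_k$. The vector bundles entering the construction are $\pi_{12}^{*}E\otimes\pi_{23}^{*}A_\gamma$ indexed by $\gamma\in\Gamma_0$, glued by isomorphisms built from $\rho$ and $\rho_\gamma$, and the coherence squares are inherited verbatim from the module axiom \erf{100} of $(\mathcal{E},\rho_\gamma)$. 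By the previous paragraph, the resulting pair $(\mathcal{E}',\rho'_k)$ is an equivariant $(\mathcal{G}',\mathcal{J}')$-module.

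Next I would verify that this assignment is well defined on equivalence classes in the sense of Definition \ref{def6}. An equivalence consists of an equivariant 1-isomorphism $(\mathcal{B},\eta_\gamma)$ between the underlying equivariant gerbes together with an equivariant 2-isomorphism $\nu$ as in \erf{63}. The first datum descends by Theorem \ref{th2}; the second descends by the same quotient prescription applied componentwise over the $\Gamma_0$-decomposition of $Z=Y^a\times_M Y^b$, and its equivariance square translates into the corresponding square for the quotient. Since composition and identities of equivariant 1-morphisms are preserved strictly by the descent procedure, the assignment factors through equivalence classes.

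For surjectivity and injectivity I would mimic the final part of the proof of Theorem \ref{th2}. Given an equivariant module for an equivariant quotient gerbe, form its pullback along $p:M\to M'$, which is canonically a $(\Gamma,\epsilon)$-equivariant module for the pulled-back gerbe with its canonical twisted-equivariant structure, and observe that after descent one recovers the original module up to equivariant equivalence by the same fibre-preserving comparison used for the gerbes themselves. For injectivity, given an equivariant 1-isomorphism and an equivariant 2-isomorphism between two descended modules, one extracts upstairs data from the $\gamma=1$ component just as in the injectivity argument for Theorem \ref{th2}, and the coherence diagrams together with the remaining $\Gamma_0$- and $\Gamma_-$-component data determine the equivalence of the original upstairs modules. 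The main obstacle will be the combinatorial bookkeeping in the non-trivial $\epsilon$ case: checking that the descended $\rho'_k$ interacts correctly with the canonical $\mathcal{J}_{\omega'}$ on $\mathcal{I}_{\omega'}$ requires a careful unpacking of the module-equivariance diagram \erf{100} and its interplay with \erf{33}, following the pattern already handled in the passage from \erf{48} onward in the proof of Theorem \ref{th2}.
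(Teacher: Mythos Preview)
Your proposal is correct and follows essentially the same approach as the paper: recognize that an equivariant module is just an equivariant 1-morphism to $(\mathcal{I}_\omega,\mathcal{J}_\omega)$, observe that the quotient of the latter is $(\mathcal{I}_{\omega'},\mathcal{J}_{\omega'})$, and then invoke the descent of equivariant 1-morphisms from Theorem~\ref{th2}, amending it by an explicit descent construction for equivariant 2-isomorphisms to handle well-definedness and injectivity. The paper's only addition beyond what you sketch is the explicit componentwise formula $\nu'_\gamma=(\nu)\circ(\id\otimes\rho_\gamma^b)$ for the descended 2-isomorphism, which is exactly the bookkeeping you anticipate.
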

\noindent Notice that the latter proposition unites (as did Theorem
\ref{th2} before) the two cases of $\Gamma_0=\Gamma$ and
$\Gamma/\Gamma_0=\Z_2$.

Since a $(\mathcal{G},\mathcal{J})$-module is nothing but an
equivariant 1-morphism between $(\mathcal{G},\mathcal{J})$ and
$(\mathcal{I}_{\omega},\mathcal{J}_{\omega})$, we can apply the
descent theory developed in Section \ref{sec2}
and pass to an associated quotient 1-morphism.
The only thing to notice is that the quotients are
$(\mathcal{I}_{\omega})'\cong \mathcal{I}_{\omega'}$ for the
descended 2-form $\omega'\in\Omega^2(M')$, and
$(\mathcal{J}_{\omega})'\cong \mathcal{J}_{\omega'}$. But this is
clear since all involved line bundles are the trivial ones, and all
involved isomorphisms are identities.
This way, it becomes obvious how the map
in Proposition \ref{prop2} is defined and that it is surjective.
It remains to check that it is well-defined on
equivalence classes and injective. For this purpose, we have to
amend the discussion of Section \ref{sec2} by
providing a descent construction for equivariant 2-morphisms.
Suppose that we have a $(\mathcal{G}^a,\mathcal{J}^a)$-module
$(\mathcal{E}^a,\rho_{\gamma}^a)$ and an equivalent
$(\mathcal{G}^b,\mathcal{J}^b)$-module
$(\mathcal{E}^b,\rho_{\gamma}^b)$, i.e. there is an equivariant
1-isomorphism
$(\mathcal{B},\eta_{\gamma}):(\mathcal{G}^a,\mathcal{J}^a) \to
(\mathcal{G}^b,\mathcal{J}^b)$ and an equivariant 2-isomorphism
\begin{equation}
\label{65}
\nu:(\mathcal{E}^b,\rho_{\gamma}^b) \circ (\mathcal{B},\eta_{\gamma}) \Rightarrow (\mathcal{E}^a,\rho_{\gamma}^a)\text{.}
\end{equation}
We have to construct an equivariant 2-isomorphism
\begin{equation}
\label{66}
\nu': (\mathcal{E}^{b\prime},\rho_{k}^{b\prime}) \circ (\mathcal{B}',\eta_{k}') \Rightarrow (\mathcal{E}^{a \prime},\rho_{k}^{a\prime})
\end{equation}
which guarantees that the quotient Jandl modules are equivalent.
Notice that the 1-morphism on the right side has the vector bundle
$E^a$ over $Y^a$, and the one on the left side has a vector bundle
over the disjoint union of $Y^a \times_M Y^b \times_M Y^b_{\gamma}$
over all $\gamma\in \Gamma_0$, which is defined componentwise as
$\pi_{12}^{*}B \otimes \pi_{23}^{*}A_{\gamma}^b\otimes
\pi_{3}^{*}E^b$. This follows from the definition of quotient
1-morphisms and from Definition \ref{def2}. Thus, the 2-morphism we
have to construct has components $\nu'_{\gamma} : \pi_{12}^{*}B
\otimes \pi_{23}^{*}A_{\gamma}^b \otimes \pi_{3}^{*}E^b \to
\pi_{1}^{*}E^a$, and we define them as
\begin{equation*}
\alxydim{@C=1.5cm}{\pi_{12}^{*}B \otimes \pi_{23}^{*}A_{\gamma}^b
\otimes \pi_{3}^{*}E^b \ar[r]^-{\id \otimes \rho_{\gamma}^b} &
\pi_{12}^{*}B \otimes \pi_2^{*}E^b \ar[r]^-{\nu} &
\pi_{1}^{*}E^a\text{,}}
\end{equation*}
where $\nu$ comes from the given 2-morphism as in
\erf{63}. It is straightforward to check that
this, indeed, defines an equivariant 2-isomorphism.

Conversely, if an equivariant 1-isomorphism
\begin{equation*}
(\mathcal{B}',\eta'_k):
(\mathcal{G}^{a\prime},\mathcal{J}^{a\prime}) \to
(\mathcal{G}^{b\prime},\mathcal{J}^{b\prime})
\end{equation*}
is given, every
equivariant 2-isomorphism \erf{66} immediately induces an
equivariant 2-isomorphism \erf{65} for
$(\mathcal{B},\eta_{\gamma})$ the equivariant 1-isomorphism
constructed on \prf{67}. This shows that the map from Proposition
\ref{prop2} is injective.

\section{Holonomy for unoriented Surfaces}

We show that a Jandl gerbe over a smooth manifold
$M$ together with Jandl-gerbe modules over  submanifolds
of $M$ provides a well-defined notion of holonomy for
unoriented surfaces with boundary, for example
for the Möbius strip. This notion merges the holonomy for
unoriented closed surfaces from \cite{schreiber1} with that
of the holonomy for oriented surfaces with boundary
from \cite{gawedzki1,carey2}. In the first subsection, we discuss
its definition in terms of geometric structures, and then we develop
expressions in terms of local data.

\subsection{Geometrical Definition}

In short, holonomy arises by pulling back a bundle gerbe
$\mathcal{G}$ along a smooth map $\phi:\Sigma \to M$ to a
surface $\Sigma$, where it becomes trivializable for
dimensional reasons. For any choice of a trivialization
$\mathcal{T}:\phi^{*}\mathcal{G} \to \mathcal{I}_{\rho}$, there is a
number
\begin{equation}
\label{26}
\mathrm{Hol}_{\mathcal{G}}(\phi,\Sigma)\,:=\, \exp
\left ( \im \int_{\Sigma} \rho \right )\, \in\, U(1).
\end{equation}
The integral requires $\Sigma$ to be oriented, and its independence
of the choice of $\mathcal{T}$ requires $\Sigma$ to be closed.

If $\Sigma$ has a boundary, the expression
\erf{26} is no longer well-defined since a boundary term emerges
under a change of the trivialization.  We shall assume for
simplicity that the boundary has only one connected component.
Compensating the boundary term then requires  choices of a
\emph{$\mathcal{G}$-brane} \cite{gawedzki1,carey2,gawedzki4}, a
submanifold $Q \subset M$ together with a $\mathcal{G}|_{Q}$-module
$\mathcal{E}: \mathcal{G}|_{Q} \to \mathcal{I}_{\omega}$. The maps
$\phi:\Sigma \to M$ which we take into account are now
supposed to satisfy $\phi(\partial \Sigma) \subset Q$. If $E$ is the
vector bundle $\bun(\phi^{*}\mathcal{E} \circ \mathcal{T}^{-1})$
over $\partial\Sigma$ constructed in Section \ref{sec5}, the formula
\begin{equation}
\label{28} \mathrm{Hol}_{\mathcal{G},\mathcal{E}}(\phi,\Sigma)\, :=\,
\exp \left ( \im \int_{\Sigma} \rho \right ) \cdot \mathrm{tr}\left
( \mathrm{Hol}_E(\partial\Sigma) \right )\, \in\, \C\textrm{,}
\end{equation}
written in terms of the vector bundle $E$ of $\mathcal{E}$ is invariant
under changes of the trivialization $\mathcal{T}$. If the boundary
is empty, it reduces to \erf{26}. A generalization to
several $\mathcal{G}$-branes in the case of more than one boundary
component is straightforward.

If $\Sigma$ is unoriented, e.g., if it is unorientable, it is
important to notice that there is a unique two-fold covering
$\mathrm{pr}:\hat\Sigma \to \Sigma$, called the \emph{oriented
double}, where $\hat\Sigma$ is oriented and equipped
with an orientation-reversing
involution $\sigma:\hat\Sigma \to\hat\Sigma$ that permutes the
sheets of $\hat\Sigma$ so that $\Sigma=\hat\Sigma/\sigma$. To
obtain holonomy for unoriented surfaces, two changes in the above
setup have to be made \cite{schreiber1}. First, the bundle gerbe
$\mathcal{G}$ has to be equipped with a Jandl structure, i.e. a
twisted-equivariant structure with respect to an involution $k:M \to
M$. Second, the holonomy is taken for smooth maps
$\hat\phi:\hat\Sigma \to M$ which are equivariant in the sense that
the diagram
\begin{equation*}
\alxydim{@=1.2cm}{\hat\Sigma \ar[r]^{\hat \phi}
\ar[d]_{\sigma} & M \ar[d]^{k}
\\ \hat\Sigma \ar[r]_{\hat\phi}
& M}
\end{equation*}
is commutative. This is just the stack-theoretic way to talk about a
smooth map $\Sigma \to M/k$ without requiring that the quotient
$M/k$ be a smooth manifold.

The pullback of the Jandl gerbe
$(\mathcal{G},\mathcal{J})$ along $\hat\phi$ is a Jandl gerbe over
the surface $\hat\Sigma$, and hence trivializable. As discussed in
Section \ref{sec15}, any choice of a trivialization
$\mathcal{T}:\hat\phi^{*}\mathcal{G} \to \mathcal{I}_{\rho}$ defines
a $\sigma$-equivariant line bundle $(\hat R,\hat \varphi)$ over
$\hat\Sigma$ of curvature $-(\sigma^{*}\rho + \rho)$, which, in
turn, descends to a line bundle $R$ over $\Sigma$. To define the
holonomy, we further need to choose a
\emph{fundamental domain} $F$ of $\Sigma$ in $\hat\Sigma$. This is a
submanifold $F\subset \hat\Sigma$ with (possibly piecewise smooth)
boundary such that
\begin{equation}
\label{83}
F \cap \sigma(F) \subset \partial F
\quad\text{ and }\quad
F \cup \sigma(F) = \hat\Sigma\text{,}
\end{equation}
see Figure \ref{fig1} for an example.
\begin{figure}[h]
\begin{center}
\includegraphics[width=\textwidth]{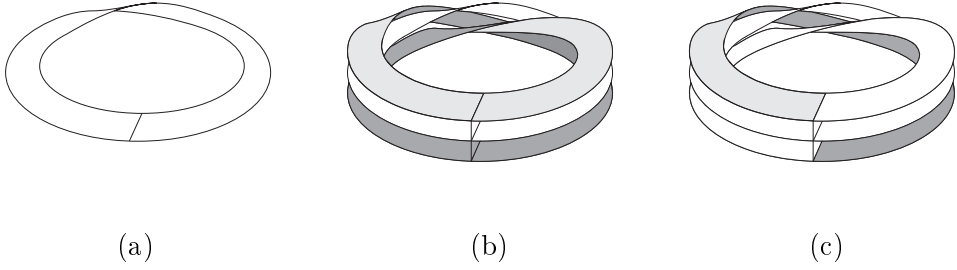}\setlength{\unitlength}{1pt}\begin{picture}(0,0)(466,669)\end{picture}
\vspace{0.3cm}
\\
(a)\hspace{4.2cm}(b)\hspace{4cm}(c)
\end{center}
\caption{(a) shows the Möbius strip. (b) is a Möbius strip (in the middle layer) together with its oriented double. The latter is an ordinary strip with a bright and a dark side. (c) shows a fundamental domain.}
\label{fig1}
\end{figure}
In the case of a closed surface $\Sigma$, it is a key observation
that the involution $\sigma$
restricts to an orientation-\emph{preserving} involution on the
boundary $\partial F$, so that the quotient $\bar F := \partial
F/\sigma$ is a closed oriented 1-dimensional submanifold of $\Sigma$
\cite{schreiber1}. Then, the holonomy is defined by
\begin{equation}
\label{27}
\mathrm{Hol}_{\mathcal{G},\mathcal{J}}(\hat\phi,\Sigma) := \mathrm{exp} \left (
\mathrm{i} \int_{F}\rho
\right ) \cdot \mathrm{Hol}_{R}(\bar F)\text{.}
\end{equation}
This expression is invariant under changes
of the trivialization $\mathcal{T}$ and
of the fundamental domain $F$ \cite{schreiber1}. In the case
when the surface $\Sigma$ is orientable, the oriented double
$\hat\Sigma$ has two global sections $s:\Sigma\to\hat\Sigma$
 intertwined by composition with the involution $\sigma$. The choice
 $F:=s(\Sigma)$ with $\partial F=\emptyset$ satisfies $\mathrm{Hol}_{\mathcal{G},\mathcal{J}}(\hat\phi,\Sigma) = \mathrm{Hol}_{\mathcal{G}}(\hat\phi \circ
 s,\Sigma)$, where on the right
side $\Sigma$ is taken with the orientation pulled back by $s$ from
$\hat\Sigma$.

Below, we introduce a simultaneous generalization of the formul\ae ~\erf{27}
and \erf{28} appropriate for unoriented surfaces \emph{with}
boundary. In addition to the choice of a
Jandl structure on the bundle gerbe $\mathcal{G}$, the following new
structure will be required.

\begin{definition}
Let $\mathcal{G}$ be a bundle gerbe over $M$ and let $\mathcal{J}$
be a Jandl structure on $\mathcal{G}$ with involution $k:M \to M$. A
$(\mathcal{G},\mathcal{J})$-brane is a submanifold $Q\subset M$ such
that $k(Q) = Q$, together with a
$(\mathcal{G},\mathcal{J})|_Q$-module $(\mathcal{E},\rho_k)$.
\end{definition}

We consider maps $\hat\phi:\hat\Sigma \to M$ that satisfy the
boundary condition $\hat\phi(\partial\hat\Sigma)\subset Q$.
As auxiliary data, we choose a trivialization
$\mathcal{T}:\hat\phi^{*}\mathcal{G} \to \mathcal{I}_{\rho}$ and
obtain the associated $\sigma$-equivariant line bundle $(\hat
R,\hat\varphi)$ over $\hat\Sigma$.   The pullback of the Jandl
module $(\mathcal{E},\rho_k)$ along $\hat\phi$ to
$\partial\hat\Sigma$ yields a Jandl module for the
trivialized Jandl gerbe: as discussed in Section \ref{sec5}, it
induces a vector bundle $E$ over $\partial\hat\Sigma$. A further
auxiliary datum is, again, a fundamental domain $F$ of $\Sigma$ in
$\hat\Sigma$. In order to account for the
boundary, we need to choose a lift (a closed one-dimensional
 submanifold) $\hat\ell\subset \partial\hat\Sigma$ of
 $\partial\Sigma$. It is easy to see that these lifts always exist.

\begin{remark}
If the boundary $\partial \Sigma$ consists of several components,
one can choose a separate $(\mathcal{G},\mathcal{J})$-brane for each
component $\ell$. It is easy to generalize the subsequent
discussion to this case.
\end{remark}

We now define a one-dimensional oriented closed submanifold $\bar F$
that generalizes the one used in the closed case.
As a set, it is defined to be
\begin{equation}
\label{50}
\bar F:=\mathrm{pr} (\partial F\,\backslash\,\hat\ell )\text{,}
\end{equation}
see Figure \ref{fig2}.
\begin{figure}[h]
\begin{center}
\includegraphics{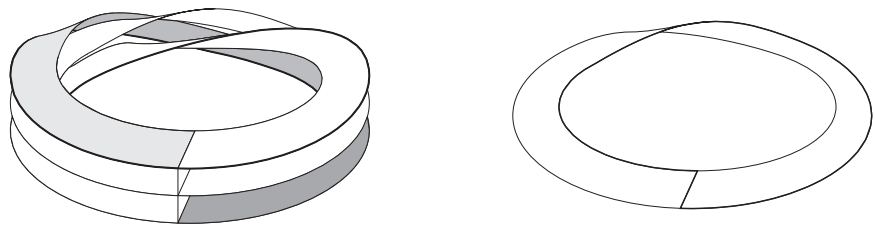}\setlength{\unitlength}{1pt}\begin{picture}(0,0)(426,545)\end{picture}
\end{center}
\caption{On the left, the Möbius strip with a fundamental domain as
in Figure \ref{fig1}, together with a lift $\hat\ell$ of the
boundary $\partial\Sigma$ (the thick line). On the right, the
associated one-dimensional oriented submanifold $\bar F$ of
$\Sigma$.} \label{fig2}
\end{figure}
This space is equipped with the structure of an oriented
one-dimensional piecewise smooth manifold as follows. Let $U \subset
\bar F$ be a small open neighborhood. If $U \cap
\partial\Sigma=\emptyset$, we have $U= \hat U/\sigma$ with $\hat
U:=\mathrm{pr}^{-1}(U)\subset \partial F$ so that $U$ is
one-dimensional and oriented like in the situation for a closed
surface. Otherwise, there exists a unique  continuous section
$s:U\to\partial F$ such that $s(U)\cap \hat\ell=\emptyset$. This
section induces the structure of a one-dimensional and oriented
manifold on $U$. It is easy to see that the orientations coincide on
intersections.

\begin{definition}
\label{def7} Let $\mathcal{J}$ be a Jandl structure on a bundle
gerbe $\mathcal{G}$ over $M$, let $(Q,\mathcal{E},\rho)$ be a
$(\mathcal{G},\mathcal{J})$-brane and let $\hat\phi: \hat\Sigma \to
M$ be an equivariant smooth map with $\hat\phi(\partial \hat\Sigma)
\subset Q$. Given a trivialization
\begin{equation*}
\mathcal{T}: \hat\phi^{*}\mathcal{G} \to
\mathcal{I}_{\rho}\textrm{,}
\end{equation*}
let $\bar R$ be the induced line bundle over $\Sigma$, and let
$E$ be the pullback vector bundle over $\partial\hat\Sigma$.
Choose, furthermore a fundamental
domain $F$ of $\Sigma$ in its oriented double $\hat\Sigma$
and a lift $\hat\ell$ of
the boundary of $\Sigma$. Then, the holonomy along
$\hat\phi$ is defined as
\begin{equation*}
\mathrm{Hol}_{\mathcal{G},\mathcal{J},\mathcal{E}}\big(\hat\phi,\Sigma\big) := \mathrm{exp} \left (
\mathrm{i} \int_{F}\rho
\right ) \cdot \mathrm{Hol}_{\bar R}(\bar F) \cdot  \mathrm{tr}\left ( \mathrm{Hol}_E(\hat\ell) \right ) \text{.}
\end{equation*}
\end{definition}

Obviously, the holonomy formulae \erf{27} and \erf{28} are reproduced
for an empty boundary or an oriented $\Sigma$, respectively. In particular,
formula \erf{26} is reproduced for an oriented closed surface.

\begin{theorem}
\label{th1} Definition \ref{def7} depends neither on the choice of
the trivialization $\mathcal{T}$ nor on the choice of the
fundamental domain $F$ nor on the choice of the lift $\hat\ell$.
\end{theorem}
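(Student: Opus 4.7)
The plan is to verify the three independence claims (with respect to the trivialization $\mathcal{T}$, the fundamental domain $F$, and the lift $\hat\ell$) one by one. In each case, I will compare two admissible choices and show that the resulting change in the bulk integral $\exp(i\int_F \rho)$ is exactly compensated by the corresponding changes in $\mathrm{Hol}_{\bar R}(\bar F)$ and $\mathrm{tr}(\mathrm{Hol}_E(\hat\ell))$. The main tools will be Stokes' theorem, the functor $\bun$ from \erf{55} (which converts differences of trivializations into honest line bundles over $\hat\Sigma$), Lemma \ref{lem3}, and the analogue isomorphism \erf{56} for gerbe modules.

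I would begin with independence of $\mathcal{T}$. Given two trivializations $\mathcal{T},\mathcal{T}'$ of $\hat\phi^*\mathcal{G}$, set $T:=\bun(\mathcal{T}'\circ \mathcal{T}^{-1})$. By the properties of $\bun$, $T$ is a line bundle over $\hat\Sigma$ with $\mathrm{curv}(T)=\rho'-\rho$, and Stokes' theorem gives
\begin{equation*}
\exp\!\left(i\int_F(\rho'-\rho)\right)\,=\,\mathrm{Hol}_T(\partial F)\textrm{.}
\end{equation*}
The boundary decomposes as $\partial F=\hat\ell\,\sqcup\,(\partial F\setminus\hat\ell)$, so the right-hand side factors accordingly. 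The first factor matches the change in $\mathrm{tr}(\mathrm{Hol}_E(\hat\ell))$: by the module analogue of Lemma \ref{lem3} applied to the trivialized Jandl module $(\mathcal{E},\rho_k)\circ\mathcal{T}^{-1}$, one has $E'\cong T|_{\partial\hat\Sigma}\otimes E$ compatibly with the equivariant structure \erf{56}, so the ratio of brane-traces equals $\mathrm{Hol}_T(\hat\ell)$. The second factor descends, via Lemma \ref{lem3} (which gives $\bar R'\cong \bar T^*\otimes \bar R$ for the line bundle $\bar T$ over $\Sigma$ induced by $T$ and the pair $(\partial F\setminus\hat\ell,\,\sigma(\partial F\setminus\hat\ell))$), to $\mathrm{Hol}_{\bar R'}(\bar F)/\mathrm{Hol}_{\bar R}(\bar F)$.

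Independence of $F$ and $\hat\ell$ follow by similar, more elementary arguments. If $F,F'$ are two fundamental domains with the same lift $\hat\ell$, their symmetric difference decomposes as $\sqcup_i(S_i\sqcup\sigma(S_i))$. Since the curvature of $\hat R$ equals $-(\sigma^*\rho+\rho)$, Stokes' theorem on each pair $S_i\sqcup\sigma(S_i)$ converts the change $\int_{F'}\rho-\int_F \rho$ into a holonomy of $\hat R$ along the new interior boundary, which descends to exactly the change in $\mathrm{Hol}_{\bar R}(\bar F)$; the trace factor on $\hat\ell$ is unaffected. For a change of lift $\hat\ell\to\hat\ell'$ with $F$ fixed, the new $\bar F$ differs from the old by gluing in an arc of $\partial\Sigma$ below $\mathrm{pr}(\hat\ell\cup\hat\ell')$, and the isomorphism \erf{56} between $E$ on one sheet and $\bar R\otimes \sigma^*E^*$ on the other converts the change in $\mathrm{tr}(\mathrm{Hol}_E(\hat\ell))$ into the matching change in $\mathrm{Hol}_{\bar R}(\bar F)$.

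The hard part will be the trivialization step along $\hat\ell$: one must establish the module analogue of Lemma \ref{lem3}, namely that changing $\mathcal{T}$ transforms the boundary vector bundle $E$ into $T|_{\partial\hat\Sigma}\otimes E$ in a way compatible with the Jandl-module isomorphism \erf{56}. This requires tracing through the definition of $E$ from $\bun(\hat\phi^{*}\mathcal{E}\circ\mathcal{T}^{-1})$, together with the 2-isomorphism of \erf{49} for the Jandl module, and showing it interacts with $\bun(\mathcal{T}'\circ\mathcal{T}^{-1})$ in the expected way. Once that compatibility is in place, all remaining steps reduce to bookkeeping with Stokes' theorem and orientations.
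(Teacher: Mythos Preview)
Your trivialization argument is essentially the paper's own geometric argument (equations \erf{72}--\erf{75}), but two points need repair. First, the decomposition $\partial F=\hat\ell\sqcup(\partial F\setminus\hat\ell)$ is wrong in general: $\hat\ell$ need not lie inside $\partial F$ (some boundary edges of $F$ sit over the sheet opposite to $\hat\ell$, which is precisely the set $\bar B$ in the paper). The correct identity is the paper's \erf{73}, $\mathrm{Hol}_T(\partial F)=\mathrm{Hol}_Q(\bar F)\cdot\mathrm{Hol}_T(\hat\ell)$, where $Q$ is the descent of $\hat Q=\sigma^{*}T\otimes T$; on the ``wrong-sheet'' boundary edges the two $T$-holonomies combine into a $\hat Q$-holonomy rather than a bare $T$-holonomy. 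Second, Lemma \ref{lem3} does not apply here: it compares the equivariant line bundles coming from \emph{two different Jandl gerbes} related by an equivariant 1-isomorphism, whereas you have \emph{one} Jandl gerbe with two trivializations. The relation you need, $\hat R\cong\hat Q\otimes\hat R'$ as $\sigma$-equivariant line bundles, is correct but follows from the general behaviour of $\bun$ under change of trivialization (the paper cites \cite{waldorf1} for this), not from Lemma \ref{lem3}.

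For independence of $F$ and of $\hat\ell$ you take a genuinely different route from the paper. The paper does \emph{not} argue geometrically here: it passes to the local holonomy formula \erf{74}, observes that the trivialization data drop out entirely, and then checks by direct computation (using the local relations \erf{84}--\erf{86}) that replacing a single face lift $\underline{\hat t}\mapsto-\sigma(\underline{\hat t})$ or the boundary lift $\hat\ell\mapsto-\sigma(\hat\ell)$ leaves \erf{74} unchanged. Your geometric sketch (symmetric-difference Stokes argument for $F$; use of the module isomorphism \erf{56} for $\hat\ell$) is plausible and more conceptual, but as written it is only an outline: you have not tracked how $\bar F$ changes as an \emph{oriented} 1-manifold, nor shown that the compatibility of \erf{56} with the equivariant structure $\phi$ on $R$ is exactly what converts $\mathrm{tr}(\mathrm{Hol}_E(\sigma(\hat\ell)))$ into $\mathrm{Hol}_{\bar R}$ times $\mathrm{tr}(\mathrm{Hol}_E(\hat\ell))$. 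These details are where the work lies, and the paper's local-data computation is in effect a proof that they can be carried out.
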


\noindent We give a complete proof of this theorem in the next section in
terms of local data. Before we switch to local data, let us
elaborate on those properties of the holonomy formula that
can conveniently be discussed in the geometric setting.

For example, one can  check that the holonomy of Definition
\ref{def7} is independent of the choice of the trivialization. If
two trivializations $\mathcal{T}:\hat\phi^{*}\mathcal{G} \to
\mathcal{I}_{\rho}$ and $\mathcal{T}':\hat\phi^{*}\mathcal{G} \to
\mathcal{I}_{\rho'}$ are present, we form the composition
\begin{equation*}
\mathcal{T}' \circ \mathcal{T}^{-1}: \mathcal{I}_{\rho} \to \mathcal{I}_{\rho'}\text{.}
\end{equation*}
The functor $\bun$ induces a line bundle
$T:=\bun(\mathcal{T}' \circ \mathcal{T}^{-1})$ over $\hat\Sigma$ of
curvature $\rho'-\rho$. The holonomy of $T$ captures the difference
that arises in the first factor:
\begin{multline}
\label{72}
 \mathrm{exp} \left (
\mathrm{i} \int_{F}\rho'
\right ) =  \mathrm{exp} \left (
\mathrm{i} \int_{F}\rho
\right ) \cdot  \mathrm{exp} \left (
\mathrm{i} \int_{F}\rho' - \rho
\right )\\ = \mathrm{exp} \left (
\mathrm{i} \int_{F}\rho
\right )\cdot \mathrm{Hol}_T(\partial F)\text{.}
\end{multline}
Notice that $\hat Q:= \sigma^{*}T \otimes T$ is a line bundle over
$\hat\Sigma$ with a canonical $\sigma$-equivariant structure given
as the permutation of the two tensor factors.  From the definition
of $\bar F$, we find, for the holonomies of $T$ and the descent line
bundle $Q$,
\begin{equation}
\label{73} \mathrm{Hol}_T(\partial F) = \mathrm{Hol}_{Q}(\bar
F) \cdot \mathrm{Hol}_T(\hat\ell)\text{.}
\end{equation}
Let $\hat R$ and $\hat R'$ be
the $\sigma$-equivariant line bundles associated to the
trivializations $\mathcal{T}$ and $\mathcal{T}'$, respectively.
We then obtain an isomorphism
\begin{equation*}
\hat R \cong\hat Q \otimes \hat R'
\end{equation*}
of $\sigma$-equivariant line bundles over $\hat\Sigma$, see the
discussion after Definition 10 in \cite{waldorf1}. For the descent
line bundles, this implies an isomorphism $R \cong Q \otimes R'$,
so that
\begin{equation}
\label{76}
\mathrm{Hol}_{Q}(\bar F) \cdot \mathrm{Hol}_{R'}(\bar F)
= \mathrm{Hol}_{R}(\bar F)\text{.}
\end{equation}

Concerning the vector bundles $E$ and $E'$, note that we have a 2-isomorphism
\begin{equation*}
\mathcal{E} \circ \mathcal{T}'^{-1} \circ \mathcal{T}' \circ \mathcal{T}^{-1} \cong \mathcal{E} \circ \mathcal{T}^{-1}
\end{equation*}
which induces, via the functor $\bun$,
an isomorphism $E' \otimes T \cong E$ of vector bundles over
$\partial\hat\Sigma$. This shows that
\begin{equation}
\label{75} \mathrm{Hol}_T(\hat\ell) \cdot  \mathrm{tr}\left (
\mathrm{Hol}_{E'}(\hat\ell) \right )=  \mathrm{tr}\left (
\mathrm{Hol}_E(\hat\ell) \right )\textrm{.}
\end{equation}
Formul\ae ~\erf{72}-\erf{75} prove that the holonomy in Definition
\ref{def7} does not depend on the choice of the trivialization.

Another result on the holonomy is

\begin{proposition}
The holonomy for the unoriented surface $\Sigma$ determines a square
root of the holonomy for the oriented double,
\begin{equation*}
\left ( \mathrm{Hol}_{\mathcal{G},\mathcal{J},\mathcal{E}}\big(\hat\phi,\Sigma\big) \right ) ^2 = \mathrm{Hol}_{\mathcal{G},\mathcal{E}}(\hat\phi,\hat\Sigma)\text{.}
\end{equation*}
\end{proposition}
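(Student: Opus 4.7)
By Theorem~\ref{th1}, the unoriented holonomy is independent of the trivialization $\mathcal{T}:\hat\phi^{*}\mathcal{G}\to\mathcal{I}_{\rho}$, the fundamental domain $F$ and the lift $\hat\ell$, while the oriented holonomy on $\hat\Sigma$ is likewise independent of the trivialization. The plan is to evaluate both sides for a single common choice and to identify matching bulk, boundary and descent contributions. The square of the left-hand side reads $\exp(2\mathrm{i}\int_{F}\rho)\cdot\mathrm{Hol}_{\bar R}(\bar F)^{2}\cdot\mathrm{tr}(\mathrm{Hol}_{E}(\hat\ell))^{2}$, while the right-hand side equals $\exp(\mathrm{i}\int_{\hat\Sigma}\rho)\cdot\mathrm{tr}(\mathrm{Hol}_{E}(\partial\hat\Sigma))$. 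For the bulk: since $\sigma$ reverses the orientation of $\hat\Sigma$ and $\hat\Sigma=F\cup\sigma(F)$ with intersection of measure zero, $\int_{\hat\Sigma}\rho=\int_{F}(\rho-\sigma^{*}\rho)$; the curvature identity $\mathrm{curv}(\hat R)=-(\rho+\sigma^{*}\rho)$ from Section~\ref{sec15} rewrites this as $2\int_{F}\rho+\int_{F}\mathrm{curv}(\hat R)$, and Stokes' theorem (using that $\hat R|_{F}$ is trivializable on the compact surface-with-boundary $F$) gives $\exp(\mathrm{i}\int_{\hat\Sigma}\rho)=\exp(2\mathrm{i}\int_{F}\rho)\cdot\mathrm{Hol}_{\hat R}(\partial F)$.

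For the boundary: using $\partial\hat\Sigma=\hat\ell\sqcup\sigma(\hat\ell)$ the right-hand trace factors as $\mathrm{tr}(\mathrm{Hol}_{E}(\hat\ell))\cdot\mathrm{tr}(\mathrm{Hol}_{E}(\sigma(\hat\ell)))$. The Jandl-module isomorphism $\vartheta:\hat R\otimes\sigma^{*}E^{*}\to E$ obtained in Section~\ref{sec5} is parallel with respect to the connections, and so factors $\mathrm{Hol}_{E}(\sigma(\hat\ell))$ through $\mathrm{Hol}_{\hat R}(\sigma(\hat\ell))$ and $\mathrm{Hol}_{\sigma^{*}E^{*}}(\sigma(\hat\ell))$. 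Because $\sigma$ reverses the induced boundary orientation, $\sigma_{*}\sigma(\hat\ell)=-\hat\ell$, and invariance of the trace under transposition then yields $\mathrm{tr}(\mathrm{Hol}_{E}(\sigma(\hat\ell)))=\mathrm{Hol}_{\hat R}(\sigma(\hat\ell))\cdot\mathrm{tr}(\mathrm{Hol}_{E}(\hat\ell))$.

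Combining the two reductions, the proposition becomes the line-bundle identity $\mathrm{Hol}_{\bar R}(\bar F)^{2}=\mathrm{Hol}_{\hat R}(\partial F)\cdot\mathrm{Hol}_{\hat R}(\sigma(\hat\ell))$. Since $\hat R=\mathrm{pr}^{*}\bar R$ and $\mathrm{pr}^{-1}(\bar F)$ is a $2$-to-$1$ cover of $\bar F$, the left-hand side equals $\mathrm{Hol}_{\hat R}(\mathrm{pr}^{-1}(\bar F))$. Decomposing $\partial F=C+B_{\hat\ell}+B_{\sigma(\hat\ell)}$ with $C=F\cap\sigma(F)$ and $B_{\hat\ell}=\partial F\cap\hat\ell$, using that $\sigma$ preserves $C$ with its orientation but reverses the induced orientation on $\partial\hat\Sigma$, a chain-level computation yields $\mathrm{pr}^{-1}(\bar F)-\partial F-\sigma(\hat\ell)=\sigma_{*}\partial F-\partial F$. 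The equivariant isomorphism $\sigma^{*}\hat R\cong\hat R$ coming from the Jandl structure implies $\mathrm{Hol}_{\hat R}\circ\sigma_{*}=\mathrm{Hol}_{\hat R}$ on closed $1$-chains, so the difference has trivial holonomy and the identity follows. The main obstacle is this final chain-level step: one must orient $C$, $B_{\hat\ell}$, $B_{\sigma(\hat\ell)}$ and their $\sigma$-images consistently with those of $\partial F$, $\sigma(\hat\ell)$ and the quotient curve $\bar F$, the same orientation bookkeeping that underlies the proof of Theorem~\ref{th1}.
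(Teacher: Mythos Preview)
Your argument is plausible but takes a much harder road than the paper's, and its final step is not actually completed.

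The paper exploits the very independence result you cite at the outset but then never use: it evaluates the two copies of the unoriented holonomy with \emph{different} auxiliary choices, namely $(F,\hat\ell)$ for the first factor and $(F',\hat\ell'):=(\sigma(F),\sigma(\hat\ell))$ for the second. With these choices the three factors of Definition~\ref{def7} combine termwise: the exponentials give $\exp(\im\int_F\rho)\cdot\exp(\im\int_{F'}\rho)=\exp(\im\int_{\hat\Sigma}\rho)$ by \erf{83}; the line-bundle holonomies satisfy $\mathrm{Hol}_{R}(\bar F)\cdot\mathrm{Hol}_{R}(\bar F')=1$ because $\bar F'$ is $\bar F$ with the opposite orientation; and the traces give $\mathrm{tr}(\mathrm{Hol}_E(\hat\ell))\cdot\mathrm{tr}(\mathrm{Hol}_E(\sigma(\hat\ell)))=\mathrm{tr}(\mathrm{Hol}_E(\partial\hat\Sigma))$. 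That is the whole proof.

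By contrast you keep the \emph{same} $(F,\hat\ell)$ for both factors, which forces you to compare $\exp(2\im\int_F\rho)$ with $\exp(\im\int_{\hat\Sigma}\rho)$ via Stokes' theorem and the curvature of $\hat R$, and to relate $\mathrm{tr}(\mathrm{Hol}_E(\sigma(\hat\ell)))$ to $\mathrm{tr}(\mathrm{Hol}_E(\hat\ell))$ via the Jandl-module isomorphism $\vartheta$. These steps are fine, but they reduce the proposition to the line-bundle identity $\mathrm{Hol}_{\bar R}(\bar F)^2=\mathrm{Hol}_{\hat R}(\partial F)\cdot\mathrm{Hol}_{\hat R}(\sigma(\hat\ell))$, and the chain-level verification you sketch for it---the claimed equality $\mathrm{pr}^{-1}(\bar F)-\partial F-\sigma(\hat\ell)=\sigma_{*}\partial F-\partial F$---is not carried out. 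That identity is delicate: the orientation on $\mathrm{pr}^{-1}(\bar F)$ is the pullback of the quotient orientation on $\bar F$, which does \emph{not} a priori agree on both sheets with the orientation that $C$ and $B_{\sigma(\hat\ell)}$ inherit from $\partial F$, and the piece $B_{\hat\ell}\subset\partial F$ must be shown to cancel against part of $\sigma(\hat\ell)$. You correctly identify this as the main obstacle, but you do not resolve it. The paper's trick of using $(\sigma(F),\sigma(\hat\ell))$ for the second factor sidesteps all of this bookkeeping.
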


\begin{proof}
To see this, one chooses a fundamental domain $F$ and a lift
$\hat\ell$ for the first of the two factors on the left-hand
side, and makes the choices $F':=\sigma(F)$ and
$\hat\ell':=\sigma(\hat\ell)$ for the second factor.
The square on the left-hand side consists,
after reordering of the factors, of
\begin{equation*}
\exp \left ( \im \int_{F}\rho \right ) \cdot \exp\left (\im
\int_{F'} \rho \right ) \stackrel{\text{\erf{83}}}{=} \exp \left(
\im \int_{\hat\Sigma} \rho \right )
\end{equation*}
and $\mathrm{Hol}_{R}(\bar F)\cdot \mathrm{Hol}_{R}(\bar
F') = 1$ (the latter identity follows from the fact that the
submanifolds $\bar F$ and $\bar F'$ are the same sets, but with
opposite orientations), as well as of
$\mathrm{tr}(\mathrm{Hol}_E(\hat\ell)) \cdot
\mathrm{tr}(\mathrm{Hol}_E(\hat\ell')) =
\mathrm{tr}(\mathrm{Hol}_E(\partial\hat\Sigma))$. Altogether, this
reproduces the holonomy formula \erf{28} for
$\hat\Sigma$.
\end{proof}

Finally, let us discuss what happens to the holonomy when we
pass to equivalent background data.

\begin{proposition}
\label{prop3}
Suppose that $(\mathcal{B},\eta_k): (\mathcal{G}^a,\mathcal{J}^a)
\to (\mathcal{G}^b,\mathcal{J}^b)$ is an equivariant 1-isomorphism
between Jandl gerbes, that $(\mathcal{E}^a,\rho^a)$ and
$(\mathcal{E}^b,\rho^b)$ are Jandl modules for
$(\mathcal{G}^a,\mathcal{J}^a)$ and $(\mathcal{G}^b,\mathcal{J}^b)$,
respectively, and that
$\nu: \mathcal{E}^b \circ \mathcal{B} \Rightarrow \mathcal{E}^a$
is a 2-isomorphism.
Then,
\begin{equation}
\label{68}
\mathrm{Hol}_{\mathcal{G}^a,\mathcal{J}^a,\mathcal{E}^a}(\hat\phi,\Sigma) = \mathrm{Hol}_{\mathcal{G}^b,\mathcal{J}^b,\mathcal{E}^b}(\hat\phi,\Sigma)
\end{equation}
for any smooth equivariant map $\hat\phi:\hat\Sigma \to M$.
\end{proposition}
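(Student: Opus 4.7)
The plan is to use the equivariant $1$-isomorphism $(\mathcal{B},\eta_k)$ and the $2$-isomorphism $\nu$ to compare the three factors that enter Definition \ref{def7} separately, choosing compatible auxiliary data on both sides. First I would fix a trivialization $\mathcal{T}^b:\hat\phi^{*}\mathcal{G}^b \to \mathcal{I}_{\rho}$, and then define the induced trivialization of $\hat\phi^{*}\mathcal{G}^a$ by
\begin{equation*}
\mathcal{T}^a := \mathcal{T}^b \circ \hat\phi^{*}\mathcal{B}\text{.}
\end{equation*}
Since $\hat\phi^{*}\mathcal{B}$ is a $1$-isomorphism, its rank is one and the composition has the \emph{same} target $\mathcal{I}_{\rho}$. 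I would also use the same fundamental domain $F$ and the same lift $\hat\ell \subset \partial \hat\Sigma$ of $\partial \Sigma$ on both sides. With these matched choices, the first factor $\exp(\mathrm{i}\int_F \rho)$ in the holonomy formula is literally identical for the two sides.

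For the second factor, I would invoke Lemma \ref{lem3}: the $2$-isomorphism $\hat\phi^{*}\eta_k$ together with the trivializations $\mathcal{T}^a$ and $\mathcal{T}^b$ produces an isomorphism $\kappa:\hat R^a \to \hat R^b$ of $\sigma$-equivariant line bundles over $\hat\Sigma$ which is compatible with the equivariant structures $\hat\varphi^a$ and $\hat\varphi^b$. Hence $\kappa$ descends to an isomorphism $\bar R^a \cong \bar R^b$ of line bundles over $\Sigma$, which gives
\begin{equation*}
\mathrm{Hol}_{\bar R^a}(\bar F) = \mathrm{Hol}_{\bar R^b}(\bar F)\text{.}
\end{equation*}

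For the third factor, I would apply the functor $\bun$ of \erf{55} to the chain of $2$-isomorphisms
\begin{equation*}
\hat\phi^{*}\mathcal{E}^a \circ \mathcal{T}^{a,-1} \;\stackrel{\hat\phi^{*}\nu^{-1}}{\Longrightarrow}\; \hat\phi^{*}\mathcal{E}^b \circ \hat\phi^{*}\mathcal{B} \circ \hat\phi^{*}\mathcal{B}^{-1} \circ \mathcal{T}^{b,-1} \;\Longrightarrow\; \hat\phi^{*}\mathcal{E}^b \circ \mathcal{T}^{b,-1}
\end{equation*}
restricted to $\partial \hat\Sigma$, where the second arrow uses $i_r^{-1}$ and $\lambda$. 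This yields an isomorphism $E^a \cong E^b$ of vector bundles over $\partial\hat\Sigma$, so that $\mathrm{tr}(\mathrm{Hol}_{E^a}(\hat\ell)) = \mathrm{tr}(\mathrm{Hol}_{E^b}(\hat\ell))$. Combining the three equalities gives \erf{68}. The main obstacle I anticipate is the bookkeeping for the second factor: one has to check that the isomorphism produced by Lemma \ref{lem3} is compatible with the equivariant structures coming from the \emph{same} choice of $\mathcal{T}^a = \mathcal{T}^b \circ \hat\phi^{*}\mathcal{B}$, and in particular that it is genuinely the $\bun$-image of the $2$-isomorphism $\eta_k'$ built from $\eta_k$; once this matching is set up, everything else is the functoriality of $\bun$ and the fact that isomorphic (equivariant) bundles have the same holonomy.
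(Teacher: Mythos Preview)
Your proposal is correct and follows essentially the same route as the paper's own proof: fix a trivialization $\mathcal{T}^b$ and induce $\mathcal{T}^a=\mathcal{T}^b\circ\hat\phi^{*}\mathcal{B}$, so the first factor agrees trivially; invoke Lemma~\ref{lem3} to match the second factor via an isomorphism of $\sigma$-equivariant line bundles descending to $\Sigma$; and use $\nu$ together with the functor $\bun$ to identify $E^a\cong E^b$ over $\partial\hat\Sigma$ for the third factor. The only cosmetic differences are your explicit pullbacks along $\hat\phi$ and the direction in which you run the $2$-isomorphism in the third step.
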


\begin{proof}
We fix the choices of the fundamental domain $F$ and of the
lift $\hat\ell$ for both sides of \erf{68}. To compute the holonomy
on the right-hand side, we choose a trivialization
$\mathcal{T}^b:\hat\phi^{*}\mathcal{G}^b \to \mathcal{I}_{\rho}$. It
induces a trivialization $\mathcal{T}^a := \mathcal{T}^b \circ
\mathcal{B}$ which we use to compute the
left-hand side. Since $\mathcal{T}^a$ and $\mathcal{T}^b$ have
the same 2-form $\rho$, the first factor of the holonomy formula
from Definition \ref{def7} is the same on both sides of \erf{68}.

Associated to the trivialized Jandl gerbes
$\hat\phi^{*}(\mathcal{G}^a,\mathcal{J}^a)$ and
$\hat\phi^{*}(\mathcal{G}^b,\mathcal{J}^b)$, there are
$\sigma$-equivariant line bundles over $\hat\Sigma$, as
discussed in Section \ref{71}.
By Lemma \ref{lem3}, these line bundles are isomorphic as
equivariant line bundles and hence induce isomorphic line bundles
$R^a$ and $R^b$ over $\Sigma$. Isomorphic line bundles
have equal holonomies, therefore also
the second factor of the holonomy formula from Definition \ref{def7}
coincides on both sides of \erf{68}.

Finally, we induce a 2-isomorphism
\begin{equation*}
\alxydim{@C=0.9cm}{\mathcal{E}^b \circ (\mathcal{T}^b)^{-1} \ar@{=>}[r] & \mathcal{E}^b \circ \mathcal{B} \circ (\mathcal{T}^b \circ \mathcal{B})^{-1} \ar@{=>}[r]^-{\nu \circ \id} & \mathcal{E}^a \circ (\mathcal{T}^a)^{-1}}
\end{equation*}
whose image under the functor $\bun$ yields an isomorphism $E^b \to
E^a$ of vector bundles over $\partial\hat\Sigma$. Again,
these vector bundles have equal holonomies, so that
also the third factor coincides on both sides.
\end{proof}

Of course, it follows that equivalent Jandl modules have equal
holonomies. We remark, however, that the 2-isomorphism $\nu$ in
Proposition \ref{prop3} does not have to be equivariant. In other
words, the holonomy from Definition \ref{def7} cannot
distinguish all  equivalence classes of Jandl modules.

\subsection{Local-Data Counterpart}

\def\hu#1{\underline{\hat{#1}}}
\def\el#1{\hat e_{\hat\ell}}

Here, we rewrite the holonomy for unoriented surfaces (with
boundary) from Definition \ref{def7} in terms of local data.
Thus, let $\mathfrak{O}=\lbrace O_i \rbrace_{i\in
I}$ be an open cover of $M$, with $k(O_i)=O_{ki}$, that permits to
extract local data, namely the data $c=(B_i,A_{ij},g_{ijk})$
of the bundle gerbe $\mathcal{G}$, the data
$b=(\Pi_i,\chi_{ij})$ and $a = (f_i)$ of the Jandl structure
$\mathcal{J}$ (see Section \ref{sec3_1}), and the data
$\beta=(\Lambda_i,G_{ij})$ and $\phi=(H_i)$ of the
$(\mathcal{G},\mathcal{J})|_Q$-module $(Q,\mathcal{E},\rho)$, see
Section \ref{sec5}. The local data of the bundle gerbe satisfy
relations \erf{151} - \erf{153}. For reader's convenience, let
us recall the relations between
the local data of the Jandl structure and those of
the gerbe module, specialized to the present
case of the orientifold group $(\Z_2,\id)$. Concerning the Jandl
structure, these are equations \erf{154} - \erf{12}, namely
\begin{multline}
\label{84} -k^{*} B_{ki} -B_i = \mathrm{d}\Pi_{i} \quad\text{,
}\quad -k^{*} A_{ki\,kj} -A_{ij} = \Pi_j - \Pi_i
-\mathrm{i}\chi_{ij}^{-1}\mathrm{d}\chi_{ij}
\\\quad\text{ and }\quad
k^{*} g_{ki\,kj\,kl}^{-1} \cdot g_{ijl}^{-1} =\chi_{ij}^{-1} \cdot
\chi_{il} \cdot \chi_{jl}^{-1}\text{,}
\end{multline}
as well as equations \erf{14} - \erf{13}, namely
\begin{multline}
\label{85} -k^{*}\Pi_{ki} +\Pi_i=\mathrm{i}f_i^{-1}\mathrm{d} f_i
\quad\text{, }\quad k^{*}\chi^{-1}_{ki\,kj}\cdot
\chi_{ij}=f_i^{-1}\cdot f_j \\\quad\text{ and }\quad k^{*} f_{ki}^{-1}
\cdot f_i^{-1}=1\text{.}
\end{multline}
Concerning the gerbe module, these are equations \erf{79},
\begin{multline}
\label{86} -k^{*}\overline{\Lambda_{ki}} =H_i^{-1} \cdot \Lambda_i
\cdot H_i+\mathrm{i}H_i^{-1}\mathrm{d}H_i-\Pi_i \quad\text{, }\quad\\
k^{*} \overline{G_{ki\,kj}}=H_i^{-1}
 \cdot G_{ij} \cdot H_j\cdot\chi_{ij}^{-1}
\quad\text{ and }\quad
1=H_i \cdot k^{*}\overline{H_{ki}}\cdot f_i^{-1}\text{.}
\end{multline}

From the open cover $\mathfrak{O}$ of $M$, an equivariant smooth map $\hat\phi:\hat\Sigma \to M$ induces
an open cover of $\hat\Sigma$ with open sets $\hat V_i := \hat\phi^{-1}(O_i)$.
 Let $T$ be a triangulation of $\Sigma$ which is subordinate to this cover in the following sense. The preimage of each triangular face $t\in T$ in
$\hat\Sigma$ is supposed to have two connected components, and we
require that if $\hat t$ is one of these components, there exists an
index $i(\hat t)\in I$ such that $\hat t \subset \hat{V}_{i(\hat
t)}$. The indices may be chosen such that
\begin{equation*} i(\sigma(\hat t))=ki(\hat t)\text{.}
\end{equation*}
For the edges $e$ and the vertices $v$, we make similar choices of
indices.

According to the prescription from Definition
\ref{def7}, we have to choose a fundamental domain. As described in
\cite{schreiber1}, this can be done by selecting one of the two
components of the preimage of each face $t\in T$, to be
denoted by $\hu t$. For a sufficiently well-behaved
triangulation (e.g., one with trivalent vertices),
\begin{equation}
\label{46}
F := \bigcup_{t \in T} \hu t
\end{equation}
is a smooth submanifold with piecewise smooth boundary, as required.
The subsequent discussion does
not use this assumption. Next, we have to  choose a trivialization
$\mathcal{T}:\hat\phi^{*}\mathcal{G} \to \mathcal{I}_{\rho}$. With
respect to the cover $\hat V_i$, it has local data $\theta =
(\Theta_i,\tau_{ij})$ with
\begin{equation}
\label{29} (\rho,0,1) = \phi^{*}c + D_1 \theta\text{.}
\end{equation}
Finally, we choose a lift $\hat\ell$ of $\partial\Sigma$.

Equipped with these choices of $F$, $\mathcal{T}$ and $\hat\ell$, we
start to translate the formula of Definition \ref{def7} into the
language of local data. The first factor is
\begin{equation*}
\mathcal{F}_1:=\exp\left ( \im \int_F \rho \right ) =  \exp \left (
\im \sum_{t\in T} \int_{\hu t}\rho
\right ) =  \exp \left (
\im \sum_{t\in T} \int_{\hu t} \hat\phi^{*}B_{i(\hu t)} + \mathrm{d}\Theta_{i(\hu t)}
\right ) \text{.}
\end{equation*}
Here, the orientation on $\underline{\hat t}$ is the one induced
from $\hat\Sigma$. Using Stokes' Theorem and \erf{29}, we obtain
\begin{multline*}
\mathcal{F}_1=\prod_{t\in T} \exp \left ( \im \int_{\hu{t}}
\hat\phi^{*}B_{i(\underline{\hat t})} \right ) \prod_{\hat e \in
\partial \hu{t}} \exp \left ( \im \int_{\hat e}
\hat\phi^{*}A_{i(\underline{\hat t})i(\hat e)} + \Theta_{i(\hat e)}
\right ) \\ \cdot\ \prod_{\hat v\in\partial \hat e}
\hat\phi^{*}g^{\epsilon(\hat v,\hat e)}_{i(\underline{\hat t})i(\hat
e)i(\hat v)}(\hat v) \cdot \tau_{i(\hat e)i(\hat v)}^{-\epsilon(\hat
v,\hat e)}(\hat v) \cdot \tau_{i(\underline{\hat t})i(\hat
v)}^{\epsilon(\hat v,\hat e)}(\hat v)\text{.}
\end{multline*}
Here, the edge $\hat e$ has the
orientation induced from the boundary of $\hu{t}$, and
$\epsilon(\hat v,\hat e)= \pm 1$ is negative if, in this
orientation, $\hat v$ is the starting point of $\hat e$, and
positive otherwise. We make two manipulations in this formula.
First, the very last factor can be dropped since every vertex in a
fixed face $\hu t$ appears twice, each time with a different sign
of $\epsilon$. Second, many edges
$\hat e$ appear twice and with different orientations, so that the
corresponding integrals of $\Theta_{i(\hat e)}$
cancel. More precisely, the edges which appear only once can be of
two types. If $e$ is a common edge of two faces $t_1$ and $t_2$, we
call $e$ \emph{orientation-reversing} whenever
$\hu{t_1}$ and $\hu{t_2}$ have no common edge (see
Figure \ref{fig3}),
\begin{figure}[h]
\begin{center}
\includegraphics{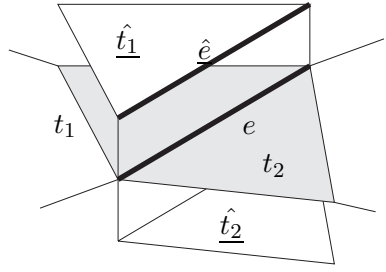}\setlength{\unitlength}{1pt}\begin{picture}(0,0)(308,306)\put(249.87430,341.49345){$t_2$}\put(170.88888,356.40822){$t_1$}\put(194.41672,388.30734){$\hu{t_1}$}\put(233.69924,318.26041){$\hu {t_2}$}\put(242.25420,355.96970){$e$}\put(225.33761,384.52082){$\hu e$}\end{picture}
\end{center}
\caption{An orientation-reversing edge $e$ between two faces $t_1$
and $t_2$ in $\Sigma$, which are lifted to different sheets. The
edge $e$ itself is, here, lifted to the same sheet as $t_1$.}
\label{fig3}
\end{figure}
and we denote by $E$ the set of orientation-reversing edges. For
each orientation-reversing edge $e\in E$, we
choose a lift $\hu e$. The second type of edges which appear only
once are those on the boundary of $\Sigma$. We denote the set of
these edges by $B$. Any $e\in B$ belongs to a unique face $t$, and
the lift $\hu t$ determines a lift $\hu e$ such that $\hu{e} \in
\partial\hu{t}$. We can now write the above formula as
\begin{eqnarray*}
\mathcal{F}_1 =\prod_{t\in T} \exp \left ( \im \int_{\underline{\hat t}}
\hat\phi^{*}B_{i(\underline{\hat t})} \right ) \prod_{\hat e \in
\partial \underline{\hat t}} \exp \left ( \im \int_{\hat e}
\hat\phi^{*}A_{i(\underline{\hat t})i(\hat e)}  \right ) \prod_{\hat
v\in\partial \hat e} \hat\phi^{*}g^{\epsilon(\hat v,\hat
e)}_{i(\underline{\hat t})i(\hat e)i(\hat v)}(\hat v)
\\
\cdot \prod_{e\in E  }\exp \left ( \im \int_{\hu{e}}
\Theta_{i(\hu{e})} + \sigma^{*}\Theta_{ki(\hu{e})} \right )
\prod_{\hat v\in\partial \hu{e}}  \tau_{i(\hu e)i(\hat
v)}^{-\epsilon(\hat v,\hu e)}(\hat v) \cdot \sigma^{*}\tau_{ki(\hu
e)\,ki(\hat v)}^{-\epsilon(\hat v,\hu e)}(\hat v)
\\
\cdot \prod_{e \in B} \exp \left ( \im \int_{\hu{e}}
\Theta_{i(\hu{e})}  \right )  \prod_{\hat v\in\partial \hu{e}}
\tau_{i(\hu e)i(\hat v)}^{-\epsilon(\hat v,\hu e)}(\hat v)\text{.}
\end{eqnarray*}

The second factor $\mathcal{F}_2 := \mathrm{Hol}_{R}(\bar F)$ is more
complicated because the line bundle $R$ over $\Sigma$ whose holonomy
we need is only given abstractly as a descended $\sigma$-equivariant
line bundle $(\hat R,\hat\varphi)$ over $\hat\Sigma$. As described
in \cite{schreiber1}, we can compute its holonomy by integrating the
local data of $\hat R$ along piecewise lifts of $\bar F$, and then
use the local data of its equivariant structure $\hat\varphi$ at
points where the lift has a jump. According to the definition of
$\hat R$, its local data can be determined from the one of the
trivialization $\mathcal{T}$ and the Jandl structure  by the formula
$-\theta + \hat\phi^{*}b + \sigma^{*}\theta$.
Thus, the line bundle $\hat R$ has local connection 1-forms
\begin{equation*}
\Psi_i := -\Theta_i + \hat\phi^{*}\Pi_i-\sigma^{*}\Theta_{ki}
\end{equation*}
on $\hat V_i$, and transition functions
\begin{equation*}
\psi_{ij} := \tau_{ij}^{-1} \cdot \hat\phi^{*}\chi_{ij} \cdot
\sigma^{*}\tau_{ki\,kj}^{-1}
\end{equation*}
on $\hat V_i \cap \hat V_j$. Its equivariant structure $\hat\varphi$
has local data $f_i$. Mimicking the definition \erf{50} of $\bar F$,
we fix the subset $\bar B \subset B$ consisting of those edges $e$
for which $\hu e$ is \emph{not} contained in the lift $\hat\ell$ of
$\partial\Sigma$. Note that $\bar F$ is
simply-covered by the lifts $\hu{e}$ of edges in $E \cup \bar B$ we
chose before. As remarked above, these lifts
typically do not patch together, i.e. there exist vertices $v\in e_1
\cap e_2$ between pairs of edges $e_1,e_2 \in E
\cup \bar B$ such that $\hu{e_1} \cap \hu{e_2}=\emptyset$. In order
to take care of these, let us choose, for \emph{all} vertices $v$,
a lift $\hu v$. Then, the second factor is
\begin{equation}
\label{87} \mathcal{F}_2 = \prod_{e\in E \cup \bar B} \exp \left ( \im
\int_{\hu{e}} \Psi_{i(\hu e)}  \right ) \prod_{\hat v \in \partial
\hu{e}} \psi^{-\epsilon(\hu e,\hat v)}_{i(\hu{e})i(\hat v)}(\hat v)
\cdot \prod_{\hu v\in \partial \hu{e}}
\hat\phi^{*}f^{\epsilon(\hu{e},\hu{v})}_{i(\hu{v})}(\hu v)\text{.}
\end{equation}
Here, the sign in the exponent of the transition functions
$\psi_{ij}$ is due to our conventions for the relation between
connection 1-forms and transition functions of a line bundle. Notice
that a vertex $v$ contributes to the last factor
of \erf{87} only if it belongs to the adjacent edges in
$E\cup\bar B$ whose lifts do \emph{not} patch together.

As for the third factor, local data of the vector bundle $E$
are provided by the expression
$\hat\phi^*\beta - \theta$. For each edge $e\in B$,
we denote by $\el e$ the
corresponding lift such that $\el e \subset
\hat\ell$. Then,
\begin{multline*}
\mathcal{F}_3 := \mathrm{tr}\left ( \mathrm{Hol}_E(\hat \ell) \right) =
\mathrm{tr} \; \mathcal{P}  \prod_{e\in B }\hspace{-0.05cm}\exp
\left ( \im \int_{\el e} \hat\phi^{*}\Lambda_{i(\el e)} -
\Theta_{i(\el e)} \right )\hspace{-0.15cm}\\
\cdot\ \prod_{\hat v \in \partial
\el{e}} \hat\phi^{*}G_{i(\el{e})i(\hat v)}^{-\epsilon(\hat
v,\el{e})}(\hat v) \cdot \tau_{i(\el{e})i(\hat v)}^{\epsilon(\hat
v,\el{e})}(\hat v) \text{,}
\end{multline*}
where the symbol $\mathcal{P}$ indicates
that the edges $e$ have to be ordered according to the orientation
on $\hat\ell$ (which is the one induced from $\partial\hat\Sigma$).
Since we take the trace, it does not matter at which vertex one
starts.

We may now compute the product $\mathcal{F}_1 \cdot\mathcal{F}_2
\cdot\mathcal{F}_3$, which is, by construction, the holonomy of Definition
\ref{def7}. We claim that all occurrences of $\Theta_i$ and
$\tau_{ij}$ drop out: First of all, for each edge $e\in E$, the
contributions from $\mathcal{F}_1$ and $\mathcal{F}_2$ are
\begin{multline*}
\exp \left ( \im \int_{\hu{e}}  \Theta_{i(\hu{e})} +
\sigma^{*}\Theta_{ki(\hu{e})} \right ) \cdot \exp \left ( \im
\int_{\hu{e}} -\Theta_{i(\hu e)}  - \sigma^{*}\Theta_{ki(\hu e)}
\right ) \\ \cdot\ \prod_{\hat v\in\partial \hu{e}}  \tau_{i(\hu e)i(\hat
v)}^    {-\epsilon(\hat v,\hu e)}(\hat v) \cdot
\sigma^{*}\tau_{ki(\hu e)\,ki(\hat v)}^{-\epsilon(\hat v,\hu
e)}(\hat v) \cdot \tau^{\varepsilon(\hat v,\hu e)}_{i(\hu{e})i(\hat
v)}(\hat v) \cdot \sigma^{*}\tau^{\varepsilon(\hat v,\hu
e)}_{ki(\hu{e})\,ki(\hat v)}(\hat v)\text{,}
\end{multline*}
which is obviously equal to 1.
For each edge $e \in \bar B$, we have contributions from
all three factors, namely
\begin{multline*}
\exp \left ( \im \int_{\hu{e}}  \Theta_{i(\hu{e})}  \right )\cdot
\exp \left ( \im \int_{\hu{e}} -\Theta_{i(\hu e)}  -
\sigma^{*}\Theta_{ki(\hu e)}  \right ) \cdot \exp \left ( \im
\int_{\el e} - \Theta_{i(\el e)} \right )\\ \cdot\ \prod_{\hat
v\in\partial \hu{e}}  \tau_{i(\hu e)i(\hat v)}^{-\varepsilon(\hat
v,\hu e)}(\hat v) \cdot \tau^{\varepsilon(\hat v,\hu
e)}_{i(\hu{e})i(\hat v)} (\hat v)\cdot
\sigma^{*}\tau^{\varepsilon(\hat v,\hu e)}_{ki(\hu{e})\,ki(\hat
v)}(\hat v)  \cdot
 \prod_{\hat v \in \partial \el{e}}  \tau_{i(\el{e})i(\hat v)}^{\varepsilon(\hat v,\el{e})}(\hat v)\text{.}
\end{multline*}
By the definition of $\bar B$, we have $\hu e = \sigma (\el
e)$, with opposite orientations. Hence, these contributions also
cancel out. For the remaining edges $e \in B \setminus \bar
B$, we only have contributions from $\mathcal{F}_1$ and $\mathcal{F}_3$, namely
\begin{equation*}
\exp \left ( \im \int_{\hu{e}}  \Theta_{i(\hu{e})}  \right )  \prod_{\hat v\in\partial \hu{e}}  \tau_{i(\hu e)i(\hat v)}^{-\varepsilon(\hat v,\hu e)}(\hat v)\cdot \exp \left ( \im \int_{\el e} - \Theta_{i(\el e)} \right ) \prod_{\hat v \in \partial \el{e}}  \tau_{i(\el{e})i(\hat v)}^{\varepsilon(\hat v,\el{e})}(\hat v)\text{.}
\end{equation*}
Here, we know that $\hu e = \el e$, so that these terms cancel
out, too. Finally, we end up with the following \emph{local
holonomy formula}:
\begin{eqnarray}
\nonumber
&&\hspace{-0.7cm}\mathrm{Hol}_{\mathcal{G},\mathcal{J},\mathcal{E}}(\hat\phi,\Sigma)
\\\nonumber&& = \prod_{t\in T} \exp \left ( \im
\int_{\underline{\hat t}} \hat\phi^{*}B_{i(\underline{\hat t})}
\right ) \prod_{\hat e \in \partial \underline{\hat t}} \exp \left (
\im \int_{\hat e} \hat\phi^{*}A_{i(\underline{\hat t})i(\hat e)}
\right )  \prod_{\hat v\in\partial \hat e}
\hat\phi^{*}g^{\varepsilon(\hat v,\hat e)}_{i(\underline{\hat
t})i(\hat e)i(\hat v)}(\hat v)
\\ &&\hspace{1cm} \cdot\prod_{e\in E \cup \bar B}
\exp \left ( \im \int_{\hu{e}} \hat\phi^{*}\Pi_{i(\hu e)}  \right )
\prod_{\hat v \in \partial \hu{e}}
\hat\phi^{*}\chi^{-\varepsilon(\hat v,\hu e)}_{i(\hu{e})i(\hat
v)}(\hat v) \cdot  \prod_{\hu v\in \partial \hu{e}}
\hat\phi^{*}f^{\varepsilon(\hu{v},\hu{e})}_{i(\hu{v})}(\hu v)
\label{74}\\\nonumber && \hspace{1cm}\cdot\ \ \prod_{e\in B
}\mathrm{tr}\;\mathcal{P} \exp \left ( \im \int_{\el e}
\hat\phi^{*}\Lambda_{i(\el e)} \right ) \prod_{\hat v \in \partial
\el{e}} \hat\phi^{*}G_{i(\el{e})i(\hat v)}^{-\varepsilon(\hat
v,\el{e})}(\hat v)
\end{eqnarray}

Let us briefly review where the particular terms come from.
The first line pairs
up the local data of the bundle gerbe with  the triangulation
of the fundamental domain. The second line takes care of
the orientation-reversing edges. It pairs up the
local data of the Jandl structure with lifts of these edges, and
compensates inconsistent lifts. The third line pairs up the
local data of  the gerbe module with $\hat\ell$.

If the boundary is empty, $B=\emptyset$, then formula \erf{74}
reduces exactly to the one given in
\cite{schreiber1} for the holonomy of closed unoriented surfaces
written in terms of local data. If the surface is oriented, we can
make choices such that $E=\bar B=\emptyset$, so that the formula
reduces to the one given in \cite{gawedzki4} for the holonomy of
oriented surfaces with boundary. Finally, if the surface is oriented
and closed, only the first line survives and reduces to the formula
found in \cite{alvarez1, gawedzki3}.

We  remark that the local data $\theta$ of the trivialization have
vanished completely from the formula \erf{74}. 
This reflects the independence
of Definition \ref{def7} of the choice of trivializations 
demonstrated in the
previous section. Similarly, Proposition \ref{prop3} implies that
\erf{74} is independent of the choice of local data of the bundle gerbe,
the Jandl structure and the equivariant gerbe module. It is a nice exercise
to check this directly by showing that the local holonomy formula \erf{74}
is independent of the choices of the subordinated indices $i$, the lifts
$\hu e$ and $\hu v$, and that it remains unaltered when passing 
to a finer triangulation or to
cohomologically equivalent local data. In order to complete the proof
of Theorem \ref{th1}, it is then enough to show that the local expression
\erf{74} is independent of the choice of the lifts
$\hu t$ of the faces of the triangulation $T$ and of that of the 
lift $\hat\ell$ of the boundary $\ell$.

Let us first suppose that we replace
a triangle $\hu t$ by $\hu t'$ differing from $\sigma(\hu t)$
by the orientation or, in short, $\hu t'=-\sigma(\hu t)$. We write the
first line of the local holonomy formula
\erf{74} as
\begin{equation*}
\prod_{t\in T} H(\hu t)
\end{equation*}
where $H(\hu t)$ is the contribution of the face $t$  with the 
choice $\hu t$ of the lift.
One obtains after simple algebra employing
relations \erf{84}:
\begin{equation*}
H({\hu t}{}')=H(\hu t) \cdot \prod_{\hat e\in\partial \hu t} I(\hat e)
\quad
\text{ with }
\quad
 I(\hat e) := \exp \left ( \im \int_{\hat e} \hat\phi^{*}\Pi_{i(\hat e)} \right) \cdot \prod_{\hat v \in \partial \hat e }\hat\phi^{*}\chi_{i(\hat e)i(\hat v)}^{-\varepsilon(\hat v,\hat e)}(\hat v) \text{.}
\end{equation*}
To compute the changes in the second line, let $E_t := \partial t \cap (E \cup \bar B)$ be the set of those edges of $t$ that are either orientation-reversing or located on the boundary.
We may assume that the edges $\hu e$ chosen for them satisfy
$\hu e\in \partial \hu t$. Under the replacement
of $\hu t$ by $\hu t { }'$,
the set $E_t$ changes
to the complementary set of edges of $t$ and we may assume that the edges
$\hu e'$ chosen for them satisfy $\hu e' \in \partial
\hu t'$. We have using relations \erf{85}:
\begin{equation}
\label{77}
\prod_{e\in \partial t \setminus E_t} I(\hu e{}')=\prod_{e\in \partial t \setminus E_t} I(\hat e)^{-1} \cdot \prod_{\hat v\in\partial \hat e} \hat\phi^{*}f_{i(\hat v)}^{-\varepsilon(\hat v, \hat e)}(\hat v)\text{,}
\end{equation}
where $\hat e$ denotes the edge in
$\hu t$ projecting to $e\subset t$ (with the orientation
induced from $\hu t$) and where we have used the fact that,
for $e\not\subset E_t$, $\,\hu e'$ and $\sigma(\hat e)$ differ
only by the orientation, i.e. $\hu e'=-\sigma(\hat e)$. Note that
\begin{equation*}
 \prod_{\hat e\in\partial \hu t} I(\hat e)\ \cdot \prod_{e\in \partial t
\setminus E_t} I(\hat e)^{-1}\ =\  \prod_{e\in E_t} I(\hat e)
\end{equation*}
which is a needed expression, a part of the original second line. 
The remaining factors
\begin{equation*}
\prod_{e\in \partial t \setminus E_t}\ \prod_{\hat v\in\partial \hat e} \hat\phi^{*}f_{i(\hat v)}^{-\varepsilon(\hat v, \hat e)}(\hat v)
\end{equation*}
from \erf{77} compensate the remaining changes in the second line. Indeed,
again with $\hat e\in \partial \hu t$ and
$\hu v'=\sigma(\hu v)$,
\begin{eqnarray}
&&\hspace{-1cm}\prod_{e\in \partial t \setminus E_t} \cdot 
\prod_{\hu v\in\partial \hu e'} \hat\phi^{*}
f_{i(\hu v)}^{\varepsilon(\hu v, \hu e')}(\hat v) 
\ \cdot\hspace{-0.1cm}\prod_{e\in \partial t \setminus E_t} \cdot 
\prod_{\hat v\in\partial \hat e} \hat\phi^{*}f_{i(\hat v)}^{-\varepsilon
(\hat v, \hat e)}(\hat v)\cr
&&=\ \prod_{e\in \partial t \setminus E_t} \cdot 
\prod_{\hu v'\in\partial \hat e} \hat\phi^{*}
f_{i(\hu v')}^{\varepsilon(\hu v', \hat e)}(\hat v')
\ \cdot\hspace{-0.1cm}\prod_{e\in \partial t \setminus E_t} 
\cdot \prod_{\hat v\in\partial
\hat e} \hat\phi^{*}f_{i(\hat v)}^{-\varepsilon(\hat v, \hat e)}(\hat v)\cr
&&=\ \prod_{e\in \partial t \setminus E_t} \cdot \prod_{\hu v\in\partial
\hat e} \hat\phi^{*}f_{i(\hu v)}^{-\varepsilon(\hu v, \hat e)}(\hu v)
\nonumber\\ && =\ \prod_{e\in E_t} \cdot \prod_{\hu v\in\partial \hu e} \hat\phi^{*}f_{i(\hu v)}^{\varepsilon(\hu v, \hu e)}(\hu v)\text{,}
\nonumber
\end{eqnarray}
where the last equality follows from the identity
\begin{equation*}
\prod\limits_{\hat e\in
\partial\hu t}\cdot\prod\limits_{\hu v\in\partial\hat e}
\hat\phi^{*}f_{i(\hu v)}^{\varepsilon(\hu v, \hu e)}(\hu v)\ =\ 1\,\text{.}
\end{equation*}
We infer that the local holonomy formula does not change
when $\hu t$ is replaced by $\hu t'$.

Next we want to analyze the effect of the change of the lift of $\ell$
from $\hat \ell$ to $\hat\ell'
=-\sigma(\hat\ell)$. Only the lines two and three of the local holonomy
formula \erf{74} change in this case. In the third line $L_3$, 
we find a change of the form
\begin{equation*}
L_3' = L_3 \cdot \prod_{e\in B } I(\hat e_{\ell})^{-1}\text{.}
\end{equation*}
Here we have used the relations \erf{86} and the identities
\begin{equation*}
\mathrm{tr}(\overline{G})=\mathrm{tr}(G^{-1})
\quad\text{ and }\quad
\mathrm{tr}\left ( \mathrm{e}^{\im\, \overline{\Lambda}} 
\right ) = \mathrm{tr}\left( \mathrm{e}^{\im \Lambda} \right )
\end{equation*}
valid for  $G\in U(n)$ and  $\Lambda\in \mathfrak{u}(n)$. 
In the second line $L_{2}$ we find
\begin{equation*}
L_2' = \prod_{e\in B \setminus \bar B} I(\hat e_{\hat\ell})  \cdot
\prod_{\hu v\in \partial \hat e_{\hat\ell}} \hat\phi^{*}
f^{\varepsilon(\hu{v},\hat e_{\hat\ell} )}_{i(\hu{v})}(\hu v)
\end{equation*}
where $\hat e_{\hat\ell}\subset \hat\ell$
is taken with the orientation inherited from $\hat\ell$.
Multiplying both lines together, we obtain
\begin{equation*}
L_3'\cdot L_2'\ =\ L_3\,\cdot\prod_{e\in\bar B}
I(\hat e_{\hat\ell})^{-1}  \cdot
\prod\limits_{e\in B\setminus\bar B}\prod_{\hu v\in \partial \hat e_{\hat\ell}} \hat\phi^{*}
f^{\varepsilon(\hu{v},\hat e_{\hat\ell} )}_{i(\hu{v})}(\hu v)\,.
\end{equation*}
On the right-hand side, we may pass back from
$\,\hat e_{\hat\ell}\,$ to $\hat e_{\hat\ell'}=-\sigma(\hat e_{\hat\ell})$
using \erf{85}: 
to obtain
\begin{equation*}
\prod_{e\in\bar B} I(\hat e_{\hat\ell})^{-1}\ =\
\prod_{e\in\bar B} I(\hat e_{\hat\ell'})\
\prod\limits_{\hat v\in\partial\hat e_{\hat\ell'}}
\hat\phi^{*}f^{\varepsilon(\hat v,\hat e_{\hat\ell'})}_{i(\hat v)}(\hat v)
\end{equation*}
and, for $\hu v'=\sigma(\hu v)$,
\begin{multline*}
\prod\limits_{e\in B\setminus\bar B}\,\prod_{\hu v\in
\partial \hat e_{\hat\ell}} \hat\phi^{*}
f^{\varepsilon(\hu{v},\hat e_{\hat\ell} )}_{i(\hu{v})}(\hu v)
\ =\ \prod\limits_{e\in B\setminus\bar B}\,\prod_{\hu v'\in
\partial \hat e_{\hat\ell'}} \hat\phi^{*}
f^{\varepsilon(\hu{v}',\hat e_{\hat\ell'} )}_{i(\hu{v}')}(\hu v')\ \\=\
\prod\limits_{e\in\bar B}\,\prod_{\hu v'\in
\partial \hat e_{\hat\ell'}} \hat\phi^{*}
f^{-\varepsilon(\hu{v}',\hat e_{\hat\ell'} )}_{i(\hu{v}')}(\hu v')\,,
\end{multline*}
where the last equality follows from the obvious identity
\begin{equation*}
\prod\limits_{e\in B}\,\prod\limits_{\hu v'\in\partial\hat e_{\hat\ell'}}
f^{\varepsilon(\hu{v}',\hat e_{\hat\ell'} )}_{i(\hu{v}')}(\hu v')\,=\,1\,.
\end{equation*}
Upon performing all these transformations, we arrive at the formula:
\begin{equation*}
L_3'\cdot L_2'\ =\ L_3\,\cdot\prod_{e\in\bar B} I(\hat e_{\hat\ell'})  \cdot
\prod_{\hu v\in \partial \hat e_{\hat\ell'}} \hat\phi^{*}
f^{\varepsilon(\hu{v},\hat e_{\hat\ell'} )}_{i(\hu{v})}(\hu v)\,.
\end{equation*}
Noting that $\,\hat e_{\hat\ell'}=\hu e\,$ for $e\in\bar B$, we identify
the right-hand side with $L_3\cdot L_2$. This ends the proof of
the independence of the local expression \erf{74} of the lift $\hat\ell$.

Thus, altogether, the local holonomy formula \erf{74} is independent of all
the arbitrary choices made. Accordingly, also the geometric 
holonomy formula
from Definition \ref{def7} is manifestly associated only to the bundle gerbe,
its Jandl structure and its modules, and, of course to the equivariant map 
$\hat\phi:\hat\Sigma\rightarrow M$. This proves Theorem \ref{th1}.

\section{Conclusions}

We considered in this paper manifolds $M$ equipped with a closed
3-form $H$ and an orientifold-group action. The latter is an
action of a finite group $\Gamma$ on $M$ such that, for
$\gamma\in\Gamma$, one has $\gamma^*H=\epsilon(\gamma)H$ for a
homomorphism $\epsilon:\Gamma\rightarrow\{\pm1\}$. We
introduced the notion of a
$(\Gamma,\epsilon)$-equivariant (or twisted-equivariant) structure
on a gerbe $\mathcal{G}$ over $M$ with curvature
$H$. This notion
 extends that of a so-called Jandl structure introduced in \cite{schreiber1},
to which it reduces for $\Gamma=\{\pm1\}$ and $\epsilon(\pm1)=\pm 1$.

In the case of $\Gamma_0={\rm ker}(\epsilon)$
acting on $M$ without fixed points, equivalence
classes of $(\Gamma,\epsilon)$-equivariant gerbes over $M$ were
shown to descend to equivalence classes of gerbes over
$M'=M/\Gamma_0$, with $(\Gamma',\epsilon')$-equivariant structures
for $\Gamma'=\Gamma/\Gamma_0$ and $\epsilon'$ induced from
$\epsilon$. For $\Gamma_0=\Gamma$, this gives a way to construct
gerbes over $M'$ from gerbes over $M$, and for
$\Gamma/\Gamma_0=\mathbb{Z}_2$, it enables to
construct gerbes with Jandl structure over $M'$. Working with local
data, we showed that equivalence classes of
$(\Gamma,\epsilon)$-equivariant gerbes can be identified with
classes of the $2^{\text{nd}}$ hypercohomology group of a double
complex of chains on $\Gamma$ with values in the (real) Deligne
complex in degree 2. This identification permitted to study the
obstructions to the existence of $(\Gamma,\epsilon)$-equivariant
structures on a given gerbe $\mathcal{G}$ with curvature $H$. In the
case of 2-connected manifolds, the unique obstruction takes values
in the cohomology group $H^3(\Gamma,U(1)_\epsilon)$, where the
coefficient group $U(1)$ is taken with the action $(\gamma,u)\mapsto
u^{\epsilon(\gamma)}$ of $\Gamma$. If this
obstruction vanishes, equivalence classes of
$(\Gamma,\epsilon)$-equivariant structures on $\mathcal{G}$ are
parameterized by  cohomology classes in $H^2(\Gamma,U(1)_\epsilon)$.
This agrees with the purely local analysis of \cite{gawedzki6}.

In \cite{gawedzki6}, these results were applied to the case of
gerbes $\mathcal{G}_k$ with curvature
$H=\frac{k}{12\pi}\hskip0.05cm{\rm tr}\hskip0.05cm g^{-1}dg^{\wedge
3}$ over simple simply-connected compact Lie groups $G$,
for integer $k$. We considered there the orientifold
groups $\Gamma=\mathbb{Z}_2\ltimes Z$ with $Z$ a subgroup of the
center $Z(G)$ of $G$ acting on $G$ by multiplication, and the
non-trivial element of $\mathbb{Z}_2$ sending
$g\in G$ to $(\zeta g)^{-1}$ for $\zeta\in Z(G)$. In that
paper, we also computed the classes
$[u]\in H^3(\Gamma,U(1)_\epsilon)$ obstructing the existence
of $(\Gamma,\epsilon)$-equivariant structures on the gerbes $\mathcal{
G}_k$ and found the trivializing chains $v$ such that $u=\delta v$
whenever $[u]$ vanishes. These data enter an explicit construction
of $(\Gamma,\epsilon)$-equivariant structures on the gerbes $\mathcal{
G}_k$ that will be described in \cite{gawedzki7} in analogy to the
construction of \cite{gawedzki2} for the orbifold group $\Gamma=Z$
with trivial $\epsilon$. Such structures on $\mathcal{G}_k$ permit to
construct orientifold WZW models for closed surfaces.

With applications to the boundary field theories in view, we
discussed above twisted-equivariant gerbe modules,
and their equivalence, as well as the
descent theory for them. These results will be used to construct
boundary orientifold WZW models. The construction, extending the one
of \cite{gawedzki4} for the orbifold case, is postponed to
\cite{gawedzki7}. We also plan to compare in \cite{gawedzki7} our
geometric approach to WZW orientifolds to the algebraic ones of
\cite{fuchs,brunner2}.

The $(\Gamma,\epsilon)$-equivariant structures on gerbes and gerbe
modules are used to define the contribution of the $H$-flux to the
Feynman amplitudes of the orientifold sigma models. Such
contributions describe the gerbe holonomy along surfaces in $M$
defined by classical fields, with contributions from gerbe modules
in the case of surfaces with boundary. We discussed above the
holonomy for surfaces in the particular case of Jandl structures in
both geometric and local terms, extending the
discussion of \cite{schreiber1} to the boundary case. In
\cite{gawedzki7}, we shall relate the surface holonomy to the more
standard loop-holonomy of connections on line and vector bundles
with $(\Gamma,\epsilon)$-equivariant structures over spaces of
closed and open curves (``strings'') in $M$. Such structures  will
be obtained from twisted-equivariant gerbes and gerbe modules by
transgression, see \cite{gawedzki1} for the discussion of the
orbifold case. They play an important role in the geometric
quantization of orientifold sigma models where the equivariant
sections of the bundles over the spaces of curves represent quantum
states of the theory. This is the approach that we will
adopt in \cite{gawedzki7} for the orientifolds of boundary 
WZW models.

\bibliographystyle{my-h-elsevier}

\end{document}